\newtheorem{theorem}{Theorem}[section]
\newtheorem{lemma}[theorem]{Lemma}
\newtheorem*{lemma*}{Lemma}
\newtheorem{corollary}[theorem]{Corollary}
\newtheorem{conjecture}{Conjecture}[section]
\theoremstyle{remark}
\newtheorem{remark}{Remark}[section]
\newcommand{\N}{\mathbb{N}}
\newcommand{\R}{\mathbb{R}}
\newcommand{\cR}{\mathcal{R}}
\newcommand{\T}{\mathcal{T}}
\newcommand{\PR}{\mathrm{PR}}
\newcommand{\TR}{\mathrm{TR}}
\DeclareMathOperator{\pw}{\mathbf{pw}}
\DeclareMathOperator{\tw}{\mathbf{tw}}
\DeclareMathOperator{\adh}{adh}
\newcommand{\floor}[1]{\left\lfloor#1\right\rfloor}
\newcommand{\ceil}[1]{\left\lceil#1\right\rceil}
\newcommand{\intv}[2]{\left \{ #1, \dots, #2 \right \}}
\newcommand{\lpp}{(\hyperlink{it:lpp}{$\star$})} % long path property
\newcommand{\hpp}{(\hyperlink{it:hpp}{$\star\star$})} % long path property with Ham path
\title{Long induced paths\\in minor-closed graph classes and beyond}
\author{Claire Hilaire\thanks{Univ. Bordeaux, CNRS, Bordeaux INP, LaBRI, UMR 5800, F-33400, Talence, France.} \and Jean-Florent Raymond\thanks{Université Clermont-Auvergne, CNRS, LIMOS, 63000 Clermont-Ferrand, France. This author was supported by the ANR project GRALMECO (ANR-21-CE48-0004).}}
\date{}
\begin{document}

\maketitle

\begin{abstract}
  In this paper we show that every graph of pathwidth less than~$k$ that has a path of order $n$ also has an induced path of order at least $\frac{1}{3} n^{1/k}$.
  This is an exponential improvement and a generalization of the polylogarithmic bounds obtained by Esperet, Lemoine and Maffray (2016) for interval graphs of bounded clique number. We complement this result with an upper-bound.
  
  This result is then used to prove the two following generalizations:
  \begin{itemize}
  \item every graph of treewidth less than $k$ that has a path of order $n$ contains an induced path of order at least $\frac{1}{4} (\log n)^{1/k}$;
  \item for every non-trivial graph class that is closed under topological minors there is a constant $d \in (0,1)$ such that every graph from this class that has a path of order $n$ contains an induced path of order at least $(\log n)^d$.
  \end{itemize}
  We also describe consequences of these results beyond graph classes that are closed under topological minors.
\end{abstract}

\section{Introduction}\label{sec:intro}

In this paper we are concerned with the dependency between the maximum lengths of paths and induced paths in graphs.
Every induced path is a path. Conversely, does the existence of a long path in a graph imply that of a long induced path? This question may be formalized as asking for the existence of an increasing function $f$ such that the following property holds:
\begin{enumerate}
\item[\hypertarget{it:lpp}{$(\star)$}] \textit{For every $n\in \N$, if a graph has a path of order $n$, then it has an induced path of order at least~$f(n)$.}
\end{enumerate}
In general, such a function does not exist, as shown by cliques and bicliques.
In 1982, Galvin, Rival, and Sands showed that these are the only obstructions by providing an increasing function satisfying \lpp{} for graphs excluding a biclique as a subgraph~\cite[Theorem~4]{GALVIN19827}.
Their proof relies on the infinite Ramsey's theorem for 4-tuples, hence their result should first and foremost be seen as an existential result, rather than one providing accurate and tight bounds.

A better lower-bound was given in \cite[Lemma~6.4]{nevsetvril2012sparsity} by Ne{\v{s}}et{\v{r}}il and Ossona de Mendez in the case of $k$-degenerate\footnote{For a positive integer $k$, a graph $G$ is said to be $k$-degenerate if every subgraph of $G$\- (including $G$ itself) has a vertex of degree at most~$k$.} graphs: there, a path of order $n$ implies the existence of an induced path of order at least $\frac{\log \log n}{\log (k+1)}$.
They also proposed as an open problem to find the maximum function $f\colon \N^2 \to \N$ such that, for every $k\in \N$, property \lpp{} holds for $k$-degenerate graphs with bound $n \mapsto f(k,n)$ \cite[Problem~6.1]{nevsetvril2012sparsity}.

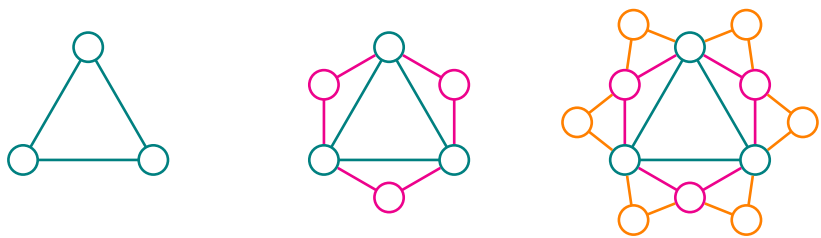
\begin{figure}[h]
\centering
\begin{tikzpicture}[every node/.style = {draw,fill=white,circle}, every path/.style={line width=1pt}]
\begin{scope}[rotate = 90]
    \draw[color=teal] (0:1) node {} --
    (120:1)  node {} --
    (-120:1)  node {} -- cycle;
\end{scope}
\begin{scope}[xshift = 4cm, rotate = 90]
    \draw[color=teal] (0:1) node (a0) {} --
    (120:1)  node (a1) {} --
    (-120:1)  node (a2) {} -- cycle;
    \draw[color=magenta] (a0) -- (60:1) node (b0) {} --
    (a1) -- (180:1) node (b2) {} --
    (a2) -- (-60:1) node (b2) {} -- (a0);
\end{scope}
\begin{scope}[xshift = 8cm, rotate = 90]
    \draw[color=teal] (0:1) node (a0) {} --
    (120:1)  node (a1) {} --
    (-120:1)  node (a2) {} -- cycle;
    \draw[color=magenta] (a0) -- (60:1) node (b0) {} --
    (a1) -- (180:1) node (b1) {} --
    (a2) -- (-60:1) node (b2) {} -- (a0);
    \draw[color=orange] (a0) -- (30:1.5) node {} --
    (b0) -- (90:1.5) node {} --
    (a1) -- (150:1.5) node {} --
    (b1) -- (210:1.5) node {} --
    (a2) -- (270:1.5) node {} --
    (b2) -- (330:1.5) node {} --
    (a0);
\end{scope}

\end{tikzpicture}
\caption{First terms of a sequence of outerplanar graphs where induced paths have length at most logarithmic in the order of the graph.}
\label{fig:op}
\end{figure}

A graph is said to be \emph{outerplanar} if it can be drawn in the plane with no edge crossing and with all the vertices on the outer face.
Outerplanar graphs form a subclass of planar graphs and are known to be 2-degenerate.
Already within this simple class there are infinite families of graphs where the order of the longest induced path is at most logarithmic in the order of the graph \cite{arocha2000long}, see Figure~\ref{fig:op} for an example.
This motivates the research for classes of degenerate graphs where a polylogarithmic\footnote{That is, of the form $n \mapsto c(\log n)^d$ for some reals $c,d>0$.} lower-bound for the function of property~\lpp{} could be obtained. As a possible answer to the aforementioned question of Ne{\v{s}}et{\v{r}}il and Ossona de Mendez, it was conjectured in \cite{esperet2017long} that this would be the \emph{correct} bound for $k$-degenerate graphs.

\begin{conjecture}[{\cite[Conjecture~1.1]{esperet2017long}}]\label{conj:esp}
For every integer $k$ there is a constant $d$ such that every $k$-degenerate graph that has a path of order $n$ also has an induced path of order at least $(\log n)^d$.
\end{conjecture}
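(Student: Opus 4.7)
The plan is to attempt Conjecture~\ref{conj:esp} by induction on the degeneracy $k$. The base case $k=1$ is immediate, since every $1$-degenerate graph is a forest and hence every path is automatically induced. For the inductive step, fix a $k$-degenerate graph $G$ together with a degeneracy ordering $\sigma = (u_1, \dots, u_N)$ in which each $u_i$ has at most $k$ neighbours among $u_1, \dots, u_{i-1}$, and suppose $P = v_1 \cdots v_n$ is a path in $G$. First I would apply the Erd\H{o}s--Szekeres theorem to the sequence $\sigma(v_1), \dots, \sigma(v_n)$ to extract a subpath $P'$ of length at least $\lceil \sqrt{n} \rceil$ on which the $\sigma$-ranks are monotone. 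On such a $P'$ every chord is canonically oriented toward its $\sigma$-later endpoint, and each vertex of $P'$ has at most $k$ back-chords.

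The next step would be to reduce the chord structure on $P'$ to one of effective back-degree $k-1$ so that the induction hypothesis applies. A natural approach is to partition the chords of $P'$ into $k$ classes, one per back-neighbour slot in $\sigma$, and pigeonhole a long segment of $P'$ on which one class is either empty or very sparse. Removing (or routing around) that class should yield a structure resembling a path in a $(k-1)$-degenerate graph, on which the inductive bound gives an induced path of length $(\log \sqrt{n})^{d_{k-1}} = \Omega((\log n)^{d_{k-1}})$, and hence a valid $d_k$ (possibly $d_k = d_{k-1}$, since an Erd\H{o}s--Szekeres step only shrinks the logarithm by a constant factor).

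The hard part will be making the reduction rigorous. Deleting back-edges globally produces a $(k-1)$-degenerate subgraph but can break path edges of $P'$, so the reduction must be performed \emph{relative to $P'$}: one wants a subpath on which only the chord structure simplifies, not the spine. A further difficulty is that $k$-degeneracy is an extremely weak structural hypothesis --- bounded-degeneracy classes contain expanders of all orders --- so the decomposition-based tools driving the paper's pathwidth and treewidth results are unavailable. I would expect that pushing such an argument through requires either a new quantitative Ramsey-type principle for degenerate graphs, improving on the $\log \log n$ bound of Ne{\v{s}}et{\v{r}}il and Ossona de Mendez, or a path-relative sparsification theorem identifying, inside any long path of a $k$-degenerate graph, a subpath along which only ``few'' chords survive. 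Either route seems substantially beyond the techniques developed in this paper, which is consistent with the conjecture being open.
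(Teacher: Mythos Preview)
The statement you are attempting is a \emph{conjecture}: the paper does not prove Conjecture~\ref{conj:esp}, it merely records it as open and proves weaker results (bounded pathwidth, bounded treewidth, topological-minor-closed classes, and the special case of bounded chordality). So there is no ``paper's own proof'' to compare against, and your final paragraph is correct that the conjecture remains beyond the techniques developed here.

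That said, your sketch contains a concrete error worth flagging. The Erd\H{o}s--Szekeres theorem applied to the sequence $\sigma(v_1),\dots,\sigma(v_n)$ yields a monotone \emph{subsequence} of indices $i_1<\dots<i_m$ with $m\ge\sqrt{n}$; the corresponding vertices $v_{i_1},\dots,v_{i_m}$ are not in general consecutive on $P$, so they do not form a path (let alone a ``subpath $P'$''). If you instead contract the intervening segments of $P$ to turn this subsequence into a genuine path, the resulting minor need not be $k$-degenerate, and the back-degree bound you rely on (``each vertex of $P'$ has at most $k$ back-chords'') is lost. Conversely, if you insist on a \emph{contiguous} subpath of $P$, there is no reason any subpath of length $\sqrt{n}$ should be $\sigma$-monotone. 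This gap is not cosmetic: it is precisely the step that would let you impose any usable orientation or layering on the chords, and without it the rest of the plan (pigeonholing chord classes, reducing to effective back-degree $k-1$) never gets off the ground.

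Your closing assessment is therefore accurate. The paper's own partial progress on Conjecture~\ref{conj:esp} goes through structural decompositions (path/tree representations, Grohe--Marx) rather than degeneracy orderings, and those tools genuinely do not extend to arbitrary $k$-degenerate graphs, which can be expanders. A proof of the full conjecture would need an idea not present here.
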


In 2000, Arocha and Valencia \cite{arocha2000long} and later Di Giacomo, Liotta and Mchedlidze \cite{DIGIACOMO201647} considered 3-connected planar graphs and {2-connected} outerplanar graphs.

Their results have then been improved by Esperet, Lemoine, and Maffray who showed that 2-connected outerplanar graphs and 3-connected planar graphs of order $n$ have an induced path of order $\Omega(\log n)$~\cite[Theorems 3.2 and 3.5]{esperet2017long}.\footnote{These results do not require a long path to exist in the considered graph because this follows from the connectivity requirement in the considered graph classes, see e.g.~\cite{CHEN200280}.}
They used these results to prove the following statement.

\begin{theorem}[{\cite[Theorem~3.8]{esperet2017long}}]\label{th:espgenus}
For every $g\in \N$ there is a constant $c = \frac{1}{2\sqrt{6}} - o(1)$ such that for every graph $G$ embeddable in a surface with Euler genus at most $g$, if $G$ has a path of order $n$, then $G$ has an induced path of order at least $c \sqrt{\log n}$.
\end{theorem}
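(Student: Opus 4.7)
The plan is to reduce to the 2-connected outerplanar case of Theorem 3.2 (or the 3-connected planar case of Theorem 3.5) in \cite{esperet2017long}, by decomposing the chord structure of the path $P$ into few non-crossing families using the embedding of $G$ in a surface of Euler genus at most $g$.

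\textbf{First step: reduce to the 2-connected Hamiltonian case.} Given a graph $G$ of Euler genus at most $g$ containing a path $P$ of order $n$, I would restrict attention to the induced subgraph $G[V(P)]$, which has the same Euler genus and in which every induced path is induced in the original $G$. A block-cut argument (the path traverses the blocks sequentially, each adjacent pair sharing at most one cut vertex) lets me further assume that $G$ is 2-connected and that a large subpath of $P$, still of order $\Theta(n)$, appears as a Hamiltonian path in a single block.

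\textbf{Second step: non-crossing decomposition of the chords.} Using the embedding of $G$ in a surface of Euler genus at most $g$, I would partition the chord set of $P$ into $f(g)$ families, each of whose chords are pairwise non-crossing with respect to the linear order of $V(P)$ given by $P$. The relevant bound $f(g)$ comes from a book-embedding style argument on the surface: the pages that a Hamiltonian spine $P$ induces grow only as a function of $g$. For each family $C_i$, the graph $H_i := P \cup C_i$ is outerplanar, and becomes 2-connected after the single-edge closure $v_1 v_n$.

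\textbf{Third step: outerplanar bound and lifting.} Applying Theorem 3.2 of \cite{esperet2017long} to one layer $H_i$ gives an induced path $Q \subseteq H_i$ of order $\Omega(\log n)$. However, $Q$ may fail to be induced in $G$ because of the chords in the other $f(g)-1$ families. To obtain a genuine induced path of $G$, I would iteratively restrict $Q$ to a subset of $V(Q)$ on which each remaining chord family contributes no edges, re-invoking the outerplanar or 3-connected planar bound on each restriction. The cost of this lifting step is what replaces the naive $\Omega(\log n)$ by $\Omega(\sqrt{\log n})$: each restriction shrinks the candidate vertex set by a factor depending on the outerplanar/planar constant $\tfrac{1}{\sqrt{6}}$, and a one-shot such shrinking amounts to taking a square root of the logarithm.

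The main technical obstacle is precisely this lifting step: one must show that the total loss over all $f(g)$ layers amounts to \emph{only} a single square root (rather than compounding multiplicatively), and that the resulting constant matches $\tfrac{1}{2\sqrt{6}}$ in the limit $n \to \infty$. Controlling the interplay between different chord families, and ensuring that after $f(g)$ rounds of refinement the induced path is simultaneously compatible with all layers, is where the bulk of the argument lies.
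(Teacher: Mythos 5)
A preliminary remark: Theorem~\ref{th:espgenus} is imported verbatim from Esperet, Lemoine and Maffray \cite{esperet2017long} and is not proved anywhere in the present paper, so there is no in-paper proof to compare yours against; I can only assess your sketch on its own terms. On those terms it has a fatal gap in Step 2. It is simply not true that the chords of a Hamiltonian path in a bounded-genus graph can be partitioned into $f(g)$ families that are pairwise non-crossing with respect to the linear order given by $P$ --- this already fails for planar graphs. Take $H_k$ on vertices $a_1,\dots,a_k,b_1,\dots,b_k$ with Hamiltonian path $a_1\cdots a_k b_1\cdots b_k$ and chords $a_ib_i$ for $1\le i\le k$: this is a subgraph of the $k$-prism with one extra edge drawable inside a face, hence planar, yet its $k$ chords are pairwise crossing in the order of $P$, so any non-crossing partition needs at least $k$ parts. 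The quantity you are implicitly invoking is the page number with a \emph{prescribed} spine order, i.e.\ the chromatic number of the circle graph of the chords, and that is unbounded on planar graphs. The book-thickness bounds for bounded-genus graphs that you allude to crucially allow the spine order to be chosen freely, which you cannot do here because the entire argument is anchored to the given path~$P$. Without a bounded $f(g)$, Steps 2 and 3 collapse.

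Two further issues. In Step 1 you cannot "assume" that a single block carries a subpath of $P$ of order $\Theta(n)$ (if $G[V(P)]$ is itself a path, every block has two vertices); the reduction must be run as a win/win --- either $P$ traverses many blocks, in which case concatenating induced paths between consecutive cut vertices already yields a long induced path, or some block contains a subpath of order $n$ divided by the number of blocks --- and this dichotomy then has to be threaded through the rest of the quantitative argument. In Step 3 you explicitly concede that the key estimate (that refining through all $f(g)$ layers costs only a single square root rather than an exponent like $2^{-f(g)}$) is unproved; a naive iteration of a logarithmic bound over several layers gives an iterated-logarithm or $(\log n)^{1/f(g)}$ type loss, not $\sqrt{\log n}$ with a constant independent of the number of layers. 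So even granting Step 2, the sketch does not yet constitute a proof of the stated bound $\left(\frac{1}{2\sqrt 6}-o(1)\right)\sqrt{\log n}$.
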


Besides planar graphs, a prominent type of degenerate graphs consists of graphs of bounded treewidth (to be defined in Section~\ref{sec:prelim}). Indeed, every graph of treewidth at most $k$ is $k$-degenerate.
If $G$ has treewidth $k$ and any addition of an edge between two vertices of $G$ yields a graph of treewidth larger than $k$, then $G$ is called a \emph{$k$-tree}.
In the same paper, Esperet et al. obtained the following bound for $k$-trees. 

\begin{theorem}[{\cite[Theorem~2.2]{esperet2017long}}]\label{th:ktree}
For every $k$, if a $k$-tree has a path of order $n$ then it has an induced path of order at least $\frac{\log n}{k \log k}$.
\end{theorem}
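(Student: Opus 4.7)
I would prove Theorem~\ref{th:ktree} by strong induction on the path order $n$, using the clique-tree structure of $k$-trees. Recall every $k$-tree $G$ admits a tree decomposition $(T,\mathcal{B})$ whose bags are the maximal $(k{+}1)$-cliques of $G$, with any two adjacent bags sharing exactly $k$ vertices; moreover, for any subtree $T'$ of $T$, the induced subgraph $G[V(T')]$ is again a $k$-tree.

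Given a $k$-tree $G$ with a path $P = u_1 u_2 \ldots u_n$, I would apply a weighted balanced-separator argument on $T$: weighting path vertices by $1$ and the rest by $0$, there exists a bag $B$ such that every subtree $T_i$ of $T - B$ satisfies $|V(P) \cap V(T_i)| \leq n/2$. Since $|B| = k+1$ and $P$ is a path, removing $V(B)$ from $P$ yields at most $k+2$ maximal subpaths, each contained in some $G_i := G[V(T_i)]$. By pigeonhole, the longest such subpath $P^*$ has order at least $(n-k-1)/(k+2)$, while $|P^*| \leq n/2$. The subgraph $G_{i^*}$ containing $P^*$ is itself a $k$-tree, so the induction hypothesis applied to $(G_{i^*}, P^*)$ yields an induced path $Q$ in $G_{i^*}$ (hence in $G$) of order at least $\frac{\log|P^*|}{k \log k}$.

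Using $\log(k+2) \leq k \log k$ for every $k \geq 2$, we obtain $\frac{\log|P^*|}{k \log k} \geq \frac{\log n}{k \log k} - 1$, so $Q$ falls short of the target by at most one vertex. Closing this one-vertex gap by extending $Q$ with a vertex of $B$ is the main obstacle: the extension is induced only if we can find $b \in B$ adjacent to an endpoint of $Q$ and to no other vertex of $Q$. To handle this, I would strengthen the induction hypothesis so that $Q$ may be chosen with one endpoint coinciding with an endpoint of $P^*$ which in $G$ is adjacent to a specified vertex $b \in B$, and with $Q$ avoiding $B \cap V(G_{i^*})$ apart from this endpoint. The structural fact making this viable is that, for the unique vertex $b \in B \setminus V(G_{i^*})$, all neighbors of $b$ in $V(G_{i^*})$ lie in the $k$-element set $B \cap V(G_{i^*})$, so only finitely many choices of $b$ need to be ruled out by the strengthened hypothesis. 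Making this strengthening precise and verifying that it is preserved through the recursion is the principal technical challenge of the proof.
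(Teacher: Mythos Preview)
This theorem is not proved in the present paper: it is quoted verbatim from \cite{esperet2017long} as background, so there is no in-paper argument to compare your proposal against.

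On the merits of your sketch: the balanced-separator recursion on the clique tree is a sensible skeleton, and your structural observation that the unique vertex $b\in B\setminus V(G_{i^*})$ has all its $G_{i^*}$-neighbours inside the $k$-element adhesion $B\cap V(G_{i^*})$ is correct. However, the proposal is not yet a proof: you yourself flag the one-vertex extension as ``the principal technical challenge'' and leave the strengthened induction hypothesis unspecified. That is precisely where the difficulty lies. To make the extension induced you need an endpoint of $Q$ adjacent to $b$ while \emph{no other} vertex of $Q$ is adjacent to $b$; since the neighbours of $b$ in $G_{i^*}$ are exactly the $k$ adhesion vertices, you would need $Q$ to meet the adhesion only at one endpoint. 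But your inductive call produces $Q$ inside $G_{i^*}$ with respect to a \emph{new} separator bag, not with respect to the old adhesion $B\cap V(G_{i^*})$, so it is unclear how the avoidance condition propagates through the recursion. Stating a hypothesis of the form ``for any prescribed $k$-clique $S$ in the $k$-tree, there is an induced path of the claimed length meeting $S$ only at one endpoint'' would require you to show that the path $P^*$ still has large enough order after you additionally forbid it from re-entering $S$, and that the recursion respects the new constraint; none of this is addressed. As written, the plan identifies the right obstacle but does not overcome it.
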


Using similar ideas they were able to show the following.

\begin{theorem}[{\cite[Corollary~2.5]{esperet2017long}}]
If a graph $G$ of treewidth at most 2 has a path of order $n$ then it has an induced path of order at least $\left (\frac{1}{2} - o(1)\right ) \log n$.
\end{theorem}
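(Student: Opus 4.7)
The plan is to reduce to the 2-connected case via the block decomposition of $G$, and then to induct on the series-parallel decomposition of a 2-connected partial 2-tree (equivalently, a 2-connected series-parallel graph).

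For the reduction, every block of a graph of treewidth at most~$2$ is again of treewidth at most~$2$ (being a subgraph), and a path $P$ of order $n$ visits a sequence of blocks through cut vertices along the path. Either some single block contains a subpath of $P$ of order comparable to $n$ (and we apply the 2-connected case directly to that block), or $P$ crosses many cut vertices, in which case we find short induced subpaths inside each block joining consecutive cut vertices on $P$ and concatenate them, exploiting the fact that cut vertices forbid chords jumping between different blocks of $G$.

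For the 2-connected case, every such graph with terminals $s,t$ is either a single edge, a series composition $G_1 \oplus G_2$ sharing a single vertex~$u$, or a parallel composition $G_1 \parallel G_2$ sharing both terminals~$s,t$. Let $f(n)$ denote the guaranteed length of the induced path; I would prove $f(n) \geq \left(\tfrac{1}{2} - o(1)\right)\log n$ by induction on the SP-tree. In the series case, the path~$P$ either stays in one side or crosses~$u$ exactly once, splitting as $P = P_1 \cdot u \cdot P_2$ with $|P_1| + |P_2| = n+1$; inducting on both sides produces induced $su$- and $ut$-paths whose concatenation is induced in $G$ because $V(G_1) \cap V(G_2) = \{u\}$. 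In the parallel case, $P$ has at most three maximal subpaths alternating between the sides (boundaries in $\{s,t\}$), so the longer side contains a subpath of order at least $n/3$, on which we recurse.

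The main obstacle is the parallel case, because of the potential chord~$st$: an induced path produced on one side may fail to be induced in $G$ if it visits both $s$ and $t$ non-adjacently while the other side contributes the edge~$st$. I would strengthen the inductive statement so that the induced path returned either avoids at least one of $s,t$, or uses them as consecutive vertices. Maintaining this invariant through both series and parallel steps requires some case analysis but introduces no new ideas. Combining the recurrence $f(n) \geq f(a) + f(b) - 1$ from the series step (with $a+b = n+1$) and $f(n) \geq f(\lceil n/3 \rceil)$ from the parallel step then yields $f(n) \geq \left(\tfrac{1}{2} - o(1)\right)\log n$: the series recurrence is very generous (a balanced split alone gives close to $\log n$), so the bottleneck is the parallel step, whose constant losses are precisely what the $o(1)$ slack in the target bound absorbs.
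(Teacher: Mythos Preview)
First, note that the paper does not supply a proof of this statement: it is quoted from \cite[Corollary~2.5]{esperet2017long}, with only the remark that it uses ideas similar to the $k$-tree case. So there is no in-paper argument to compare against, and I evaluate your proposal on its own merits.

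The series--parallel strategy is natural, but as written it has two genuine gaps.

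\emph{The invariant does not support the series step.} Your concatenation at the cut vertex $u$ in $G=G_1\oplus_u G_2$ requires the inductive call on $G_1$ to return an induced path \emph{with $u$ as an endpoint} (and likewise for $G_2$), so that the two paths can be glued at~$u$. The hypothesis you actually propose to carry --- ``some induced path of length $f(\cdot)$ which moreover avoids one of the two terminals or uses them consecutively'' --- does not give you this: a path in $G_1$ that avoids $u$ cannot be concatenated with anything in $G_2$, and a path that contains $u$ only internally cannot either. What you really need for the series step is an invariant of the form ``there is an induced \emph{$st$-path} of order $\ge f(n)$''. But then your own objection in the parallel case bites in full: if the other parallel branch contributes the edge $st$, any induced $st$-path of order $\ge 3$ in the chosen branch acquires the chord $st$ in $G$ and is no longer induced. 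You have not written down a single invariant that simultaneously enables the series concatenation and survives the parallel step, and reconciling these two requirements is the heart of the argument, not bookkeeping.

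\emph{The recurrences do not solve to $(\tfrac12-o(1))\log n$.} Even granting clean recurrences $f(n)\ge f(a)+f(b)-1$ (series, $a+b\approx n$) and $f(n)\ge f(\lceil n/3\rceil)$ (parallel), the parallel step is not a ``constant loss absorbed by the $o(1)$''. Under the ansatz $f(n)=c\log n$, a single parallel step loses $c\log 3$ in the value of $f$, and nothing in your argument bounds how many parallel nodes the recursion traverses. An SP-tree could, a priori, route the recursion through $\Theta(\log n)$ parallel nodes, for a cumulative loss of order $c\log n$ --- the entire target bound. To rescue the analysis you would need a structural statement such as ``a path of order $n$ forces the recursion subtree to contain $\Omega(\log n)$ series nodes regardless of the number of parallel nodes encountered'', and establishing that is precisely the content you have deferred with the phrase ``the series recurrence is very generous''.
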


However it is not clear for larger values of $k$ how the ideas of the proof of Theorem~\ref{th:ktree} could be adapted to the more general setting of graphs of treewidth less than $k$. 
Actually the upper-bound of $(k+1)(\log n)^{\frac{2}{k-1}}$ on the function of property \lpp{} for graphs of treewidth less than $k$ provided in the same paper even suggests that new techniques will be needed to approach such graphs.

The second type of class of graphs of bounded treewidth considered in the same paper is interval graphs of bounded clique number.\footnote{The \emph{clique number} of a graph is the maximum order of a clique it contains. An interval graph of clique number $k$ has treewidth less than $k$.}

\begin{theorem}[{\cite[Theorem~4.1]{esperet2017long}}]\label{th:esp}
For every integer $k$ there is a constant $c$ such that, if $G$ is an interval graph with clique number at most $k$ and that has a path of order $n$, then $G$ has an induced path of order at least $c 
(\log n)^{\frac{1}{(k-1)^2}}$.
\end{theorem}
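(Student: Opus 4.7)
The natural plan is to proceed by induction on the clique number $k$. For the base case $k \leq 2$, an interval graph with clique number at most $2$ is triangle-free, and being chordal it must be a forest; as an interval graph it is in fact a disjoint union of caterpillars. In such graphs the longest path is itself an induced path, so a path of order $n$ immediately yields an induced path of order $n \geq c \log n$ for $c \leq 1$.

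For the inductive step, let $P = v_1 \cdots v_n$ be a path in an interval graph $G$ of clique number at most $k$, and fix an interval representation $I(v_i) = [l_i, r_i]$. The first ingredient is the \emph{greedy shortcut}: set $j_1 = 1$ and, having chosen $j_s$, let $j_{s+1}$ be the largest $j > j_s$ with $v_{j_s}v_j \in E(G)$. Maximality rules out any chord $v_{j_s}v_{j_{s+r}}$ with $r \geq 2$, so $v_{j_1} v_{j_2} \cdots v_{j_t}$ is an induced path of $G$. Its length is at least $n/R$, where $R = \max_s (j_{s+1}-j_s)$ is the largest jump. If $R$ is moderate (say $R \leq n/(\log n)^{1/(k-1)^2}$) we are done directly.

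Otherwise the greedy shortcut skips a sub-path $Q = v_{j_s}, v_{j_s + 1}, \ldots, v_{j_s + R}$ of $P$ whose endpoints are adjacent and thus share some point $p$ in the representation. The clique-number bound forces all but at most $k-2$ of the intermediate vertices of $Q$ to have intervals strictly to one side of $p$; consequently $Q$ decomposes into at most $k-1$ ``runs'' lying on a single side of $p$, each separated by a straddler of $p$. By pigeonhole, one such run $Q'$ has order $\Omega(R/(k-1))$, and I would argue that the vertices of $Q'$ lie in an interval subgraph of $G$ whose clique number has genuinely dropped to at most $k-1$. Applying the induction hypothesis to $G[V(Q')]$ then yields an induced path of order $\Omega\bigl((\log R)^{1/(k-2)^2}\bigr)$, and balancing this against $n/R$ should deliver the claimed exponent.

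The main obstacle I expect is justifying the structural claim that the clique number drops from $k$ to $k-1$ when restricting to one side of $p$: naively one only obtains that the \emph{depth at the specific point $p$} drops, so additional work is needed, perhaps by iterating the side-splitting argument along the clique path of the interval representation, or by an Erdős--Szekeres-style extraction on the sub-sequence of left endpoints of the intervals inside $Q'$. The exponent $\frac{1}{(k-1)^2}$ rather than the naive $\frac{1}{(k-2)^2}$ presumably reflects the need for this iteration, together with the pigeonhole loss of a factor $k-1$ incurred at each such splitting step.
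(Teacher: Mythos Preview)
This theorem is quoted from \cite{esperet2017long} and not re-proved in the present paper; instead the paper establishes a much stronger bound (Corollary~\ref{cor:intv}, via Theorem~\ref{th:pwintro}): an interval graph of clique number at most $k$ with a Hamiltonian path of order $n$ has an induced path of order at least $\tfrac{1}{3}n^{1/k}$, polynomial rather than polylogarithmic in~$n$.

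Your proposal is closer in spirit to the original Esperet--Lemoine--Maffray argument and is genuinely different from what this paper does. The paper works directly with a path representation $(R,\{R_v\})$ of width less than $k$: pick vertices $u,v$ whose models contain the two endpoints of $R$, and take any induced $(u,v)$-path $Q$ in $G$. Because $Q$ is connected, the union of the models of its vertices covers all of $R$, so deleting $V(Q)$ strictly decreases the width of the representation at every node. Either $Q$ is already long enough, or one of the remaining components still contains a long subpath of the Hamiltonian path and now has pathwidth at most $k-2$; a single induction on $k$ then yields the bound $n^{1/k}$. There is no greedy shortcut, no side-of-a-point analysis, and no iterated pigeonhole; the one observation that an induced path spanning the representation meets every bag does all the work and avoids any logarithmic loss.

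As for your sketch itself, the gap you flag is real and is the crux. From the fact that the two endpoints of the long jump share a point $p$ you only deduce that at most $k-2$ interior vertices of $Q$ straddle $p$; the intervals lying entirely on one side of $p$ may still form a $k$-clique among themselves at some other point, so the clique number of $G[V(Q')]$ need not drop. Forcing a genuine drop requires restricting to a sub-window of the line (for instance one covered by an endpoint interval) and then re-pigeonholing the path inside that window, and it is precisely this iterated restriction that produces both the logarithm and the exponent $\tfrac{1}{(k-1)^2}$ in the original bound. The path-representation argument of the present paper bypasses all of this.
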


\subsection{Our contribution}

Our first result is a generalization of Theorem~\ref{th:esp} to graphs of bounded pathwidth\footnote{An interval graph has clique number $k$ iff it has pathwidth $k-1$.}, with an exponential improvement of the bound.

\begin{restatable}{theorem}{pwintro}
\label{th:pwintro}
For every $k \in \N$, if $G$ is a graph of pathwidth less than $k$ that has a path of order $n$, then $G$ has an induced path of order at least $\frac{1}{3}n^{1/k}$.
\end{restatable}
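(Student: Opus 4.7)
The plan is to prove the theorem by induction on $k$. The base case $k=1$ is trivial: pathwidth zero forces $G$ to be edgeless, so the path has order $n=1$, and a single vertex is an induced path satisfying the bound $\tfrac{1}{3} \cdot 1 \leq 1$.

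For the inductive step, fix a path decomposition $(B_1, \ldots, B_m)$ of $G$ of width at most $k-1$ and a path $P = v_1 v_2 \cdots v_n$. A useful preliminary is to assign to each vertex of $G$ a \emph{slot} in $\{1, \ldots, k\}$, by scanning bags from left to right and giving each newly introduced vertex the smallest slot not already occupied in the current bag. This produces a proper $k$-coloring of $V(G)$ whose key property is that two vertices sharing a slot never appear in a common bag; in particular, their bag-intervals are disjoint, and removing all vertices of any single slot yields a subgraph of pathwidth strictly less than $k-1$.

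The argument then splits on the distribution of $V(P)$ among the $k$ slots. Let $s_c$ denote the number of vertices of $V(P)$ in slot $c$, and set the threshold $T = \tfrac{1}{2} n^{1/k}$. If some slot $c$ satisfies $s_c \leq T$, I remove the slot-$c$ vertices from $G$: the resulting graph has pathwidth less than $k-1$, and $P$ breaks into at most $s_c+1$ subpaths, one of which has order at least $(n-s_c)/(s_c+1)$, which is at least $n^{(k-1)/k}$ up to constant factors for $n$ large enough. Applying the inductive hypothesis to this subpath yields an induced path of order at least $\tfrac{1}{3}\bigl(n^{(k-1)/k}\bigr)^{1/(k-1)} = \tfrac{1}{3} n^{1/k}$, after carefully tracking the constants released by the slack in the choice of $T$.

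The main obstacle will be the complementary regime, where every slot contains more than $T$ vertices of $V(P)$. In this regime, removing any single slot fragments $P$ too severely for induction to give anything useful, and I would instead build an induced subpath of $P$ directly by a greedy procedure: iteratively select $x_1, x_2, \ldots$ from $V(P)$ so that each $x_j$ shares a bag with $x_{j-1}$ (forcing $x_{j-1}x_j \in E(G)$) while the bag-interval of $x_j$ starts strictly to the right of the intervals of $x_1, \ldots, x_{j-2}$ (forcing non-adjacency with every earlier selection, via disjoint intervals). The hypothesis that every slot is well-populated should guarantee that a suitable next vertex always exists within a short window of bags ahead of the current position, so that the procedure produces at least $\tfrac{1}{3} n^{1/k}$ selections. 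Turning this greedy outline into a rigorous argument with the exact constant, and reconciling the two cases so that their bounds interlock cleanly, is the technical crux of the proof.
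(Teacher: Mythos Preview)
Your second case rests on a false implication: in a path decomposition of a general graph, two vertices sharing a bag need \emph{not} be adjacent. The condition ``$T_u$ and $T_v$ intersect'' is necessary for $uv\in E(G)$ but only becomes sufficient when the representation is an interval representation (item~3' in the preliminaries), i.e.\ when $G$ is an interval graph. So your greedy construction cannot guarantee that $x_{j-1}x_j$ is an edge of $G$, and the whole ``every slot is large'' branch collapses. This is not a matter of constants or care; the mechanism simply does not produce a path in $G$.

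Your first case, by contrast, is sound and is essentially what the paper does --- remove a set that hits every maximum bag, drop the pathwidth, and recurse on a long surviving subpath of $P$. The paper avoids your dichotomy altogether by choosing the removed set more cleverly: it takes a shortest (hence induced) path $Q$ in $G$ between a vertex whose model contains one end of the decomposition path and a vertex whose model contains the other end. By connectivity of $Q$, the union of the models of its vertices covers the entire decomposition path, so every bag loses a vertex and the pathwidth drops. Now either $|Q|\geq \tfrac{1}{3}n^{1/k}$ and $Q$ itself is the answer, or $|Q|$ is small and removing it fragments $P$ mildly, so induction finishes. The point is that $Q$ is simultaneously an induced path of $G$ (so ``$Q$ long'' is already a win) and a pathwidth-reducing hitting set (so ``$Q$ short'' is also a win); no separate greedy argument is needed. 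If you want to salvage your outline, replace the slot you remove by such a spanning induced path, and the two cases merge into one.
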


Its proof consists in a simple win/win strategy. We identify an induced path whose removal decreases the pathwidth of~$G$. Then either this path is at least as long as the bound promised by the statement and we are done, or its removal decreases the pathwidth without decreasing the number of vertices much, and we conclude by applying the induction hypothesis. Theorem~\ref{th:pwintro} is complemented by an upper-bound (that even holds for interval graphs) of $n^{2/k} + 1$ (Theorem~\ref{th:pwlb}), showing that the exponential dependency in $1/k$ in our lower-bound above is unavoidable.

We then show that in a graph of small treewidth that has a large path there is always (as a contraction) a graph of small pathwidth that has a long path. This statement is used to obtain the following polylogarithmic bound for graphs of bounded treewidth.

\begin{restatable}{theorem}{twintro}
\label{th:twintro}
For every $k \in \N$, if $G$ is a graph of treewidth less than $k$ that has a path of order $n$, then $G$ has an induced path of order at least $\frac{1}{4}(\log n)^{1/k}$. % improved bound :D
\end{restatable}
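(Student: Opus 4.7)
The plan is to reduce Theorem~\ref{th:twintro} to the bounded-pathwidth case handled by Theorem~\ref{th:pwintro}, via the following contraction lemma: \emph{if $G$ has treewidth less than $k$ and contains a path of order $n$, then $G$ admits a contraction $H$ of pathwidth less than $k$ containing a path of order at least $\log_{k+1}(n)$.} Granting this lemma, I would apply Theorem~\ref{th:pwintro} to $H$ to obtain an induced path of order at least $\tfrac{1}{3}(\log_{k+1} n)^{1/k}$ in $H$, and then lift it to an induced path of at least the same order in $G$. The elementary inequality $(\log(k+1))^{1/k} \leq 4/3$, valid for every $k \geq 1$, then converts this bound into the announced $\tfrac{1}{4}(\log n)^{1/k}$.

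For the contraction lemma, I would induct on $|V(T)|$ using centroid decomposition of a tree decomposition $(T, \beta)$ of $G$ of width at most $k-1$. A centroid $t_0$ satisfies $|\beta(t_0)| \leq k$; removing these vertices from the path $P$ yields at most $k+1$ subpaths, each contained in the bags of a single subtree of $T - t_0$. I would apply the induction hypothesis to each subtree $T_i$ hosting a subpath (on the induced subgraph $G[V(T_i) \cup \beta(t_0)]$) to obtain contractions $H_i$ of pathwidth less than $k$ with long paths $Q_i$, then glue the $H_i$'s at $\beta(t_0)$, further contracting the remaining subtrees of $T - t_0$ into vertices of $\beta(t_0)$ to which they are adjacent. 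Concatenating the path decompositions of the $H_i$'s through the common bag $\beta(t_0)$ preserves bag size $\leq k$, so $\pw(H) < k$; chaining the $Q_i$'s through distinct vertices of $\beta(t_0)$ (at most $k+1$ pieces need at most $k$ linking vertices) yields a path in $H$ of the required length.

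To lift an induced path $v_1 \cdots v_m$ from $H$ back to $G$, I would use that induced paths are preserved under contractions: the branch sets $V_1, \ldots, V_m$ corresponding to the $v_i$'s are connected in $G$, consecutive ones share an edge, and non-consecutive ones have no edges between them (the last because the path is induced in $H$). A shortest path in $G[V_1 \cup \cdots \cup V_m]$ visiting the branch sets in order is then automatically chord-free, giving an induced path in $G$ of order at least $m$. The principal obstacle is the gluing step of the contraction lemma: the induction hypothesis must be strengthened so that the recursive paths $Q_i$ have endpoints in $\beta(t_0)$ (or are extendable to prescribed vertices of it), so that the $s \leq k+1$ pieces can be cleanly chained through distinct vertices of the centroid bag into a single long path in $H$ without losing length or inflating pathwidth; this bookkeeping is what determines the base of the logarithm.
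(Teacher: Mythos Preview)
Your overall reduction is exactly the paper's: produce a contraction of pathwidth less than $k$ carrying a path of order about $\log_{k+1} n$, apply Theorem~\ref{th:pwintro}, and pull the induced path back via Remark~\ref{rem:contrind}; the inequality $(\log(k+1))^{1/k}\le 4/3$ is also the one the paper uses.

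The contraction lemma is where you and the paper part ways. The paper does not recurse and glue. It defines the \emph{weight} of a path $R$ in $T$ as the number of $v\in V(G)$ with $T_v\cap R\neq\emptyset$, and proves (Lemma~\ref{lem:weight}, by the same ``delete a bag and split the Hamiltonian path into at most $k{+}1$ pieces'' step you describe, but with weight rather than $|T|$ as the inductive parameter) that some root-to-leaf path in $T$ has weight at least $\log_{k+1}(n+1)$. Then a second, completely decoupled lemma (Lemma~\ref{lem:minorpww}) says: contracting every edge of the Hamiltonian path that has an endpoint whose model misses $R$ yields a graph on exactly that many vertices, still with a Hamiltonian path, and with a path representation along $R$ of width less than $k$. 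No chaining is needed, because the long path in the contraction is simply the image of the original Hamiltonian path.

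Your centroid scheme has several obstacles beyond the endpoint bookkeeping you already flag. Multiple subpaths of $P-\beta(t_0)$ can land in the \emph{same} subtree $T_i$, so the hypothesis must handle a collection of paths, not one. When you glue the $H_i$ at $\beta(t_0)$, the $H_i$ are contractions of overlapping pieces of $G$ (they all contain $\beta(t_0)$); unless each vertex of $\beta(t_0)$ is forced to be a singleton branch set in every $H_i$, the glued object need not be a contraction of $G$, and concatenating the path decompositions ``through the common bag $\beta(t_0)$'' presupposes that each $H_i$ has such a bag, which your hypothesis does not give. Finally, the induction is on $|T|$ while the target is $\log_{k+1} n$: the centroid halves $|T|$, but if almost all of $P$ lies in one subtree you gain essentially nothing toward the bound at that step, so a separate accounting is needed. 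The paper's two-step approach sidesteps all of this.
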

As mentioned above, Esperet et al. \cite{esperet2017long} constructed chordal graphs of clique number~$k$ that have a path of order $n$ and where no induced path has order more than $(k+1) (\log n)^{\frac{2}{k-1}}$. Therefore neither the logarithmic dependency in $n$ nor the exponential dependency in $1/k$ could be improved in our lower-bound above.

The operation of \emph{subdividing} an edge $uv$ of a graph consists in deleting the edge and adding a new vertex of degree 2 adjacent to $u$ and $v$.
A \emph{subdivision} of a graph $H$ is a graph obtained by applying a (possibly empty) sequence of edge subdivisions to~$H$. If a graph $G$ has a subgraph that is (isomorphic to) a subdivision of $H$, we say that $H$ is a \emph{topological minor} of~$G$.
A graph class is said to be \emph{closed under taking topological minors} if every topological minor of a graph of the class also belongs to the class.
It is \emph{non-trivial} if it is not the class of all graphs.

Non-trivial graph classes that are closed under topological minors form a very general setting that encompasses several fundamental graph classes such as any minor-closed graph class (e.g. planar graphs or more generally graphs of bounded Euler genus, and graphs of bounded pathwidth or treewidth), any immersion-closed graph class (like graphs of bounded cutwidth, carving width, or tree-cut width), and graphs of bounded degree.

The ideas developed in the proofs of Theorem~\ref{th:pwintro} and Theorem~\ref{th:twintro} allowed us to generalize their statement to this much more general setting.

\begin{theorem}\label{th:minclo}
For every non-trivial graph class $\mathcal{G}$ that is closed under taking topological minors there is a constant $d \in (0,1)$ such that if a graph $G \in \mathcal{G}$ has a path of order $n$, then $G$ has an induced path of order at least $(\log n)^d$.
\end{theorem}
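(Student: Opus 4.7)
The plan is to reduce the statement to Theorem~\ref{th:twintro}. Since $\mathcal{G}$ is non-trivial and closed under taking topological minors, there is a graph $H \notin \mathcal{G}$, and therefore $\mathcal{G}$ excludes $K_t$ as a topological minor, where $t = |V(H)|$. By Mader's theorem, every graph in $\mathcal{G}$ is $c$-degenerate for some constant $c = c(t)$, and in fact $\mathcal{G}$ has bounded expansion.

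The central tool I would aim for is a \emph{reduction lemma}: for each $t$ there exist constants $k = k(t) \in \N$ and $d_1 = d_1(t) \in (0, 1)$ such that every $G \in \mathcal{G}$ containing a path of order $n$ admits an induced subgraph $G'$ with $\tw(G') < k$ that still contains a path of order at least $n^{d_1}$. Applied to such a $G'$, Theorem~\ref{th:twintro} yields an induced path of order $\tfrac{1}{4}\bigl(\log n^{d_1}\bigr)^{1/k} = \Omega\bigl((\log n)^{1/k}\bigr)$; since induced paths of an induced subgraph remain induced in the ambient graph, this is an induced path of $G$ of order at least $(\log n)^d$ for a suitable $d = d(t) \in (0, 1)$.

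I would prove this reduction lemma by iterating the win/win strategy underlying Theorems~\ref{th:pwintro} and~\ref{th:twintro}: at each step, either the current graph already contains an induced path of the target length (and we are done), or one isolates an induced subpath whose deletion strictly decreases a carefully chosen structural invariant. A natural candidate for such an invariant is the largest $s$ for which $K_s$ is a topological minor of the current induced subgraph; this quantity is bounded by $t$ at the outset and cannot increase under passing to an induced subgraph, so the loop terminates after $O(t)$ iterations. Provided each step shortens the path by at most a bounded polynomial factor, what remains at the end is an induced subgraph of bounded treewidth that still carries a polynomially long path, which is exactly the input to Theorem~\ref{th:twintro}.

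The main obstacle is the single reduction step: in a graph of large treewidth that excludes $K_s$ as a topological minor and carries a long path, one must exhibit an induced subpath whose removal strictly decreases the topological-minor depth while only polynomially shortening the path. Unlike pathwidth in the setting of Theorem~\ref{th:pwintro}, where a linear decomposition directly guides the peel-off, topological-minor freeness does not come with a canonical linear decomposition. I would therefore expect the argument to combine a structural theorem for $K_s$-topological-minor-free graphs (of Grohe--Marx type, decomposing such a graph into almost-embeddable parts glued along bounded adhesion) with a local extraction inside an almost-embeddable piece that the long path must visit, in the spirit of the bounded-genus argument underlying Theorem~\ref{th:espgenus}.
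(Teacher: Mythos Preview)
Your proposal has a genuine gap: the reduction lemma, as stated, is neither proved nor clearly provable by the mechanism you suggest, and the paper does \emph{not} proceed by reducing to bounded treewidth.

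First, the invariant you propose does not behave as you need. Removing a single short induced path from $G$ need not decrease the largest $s$ for which $K_s$ is a topological minor of $G$: the graph may contain many vertex-disjoint subdivisions of $K_s$, or a subdivision of $K_s$ that simply avoids the deleted path. So the ``win/win'' step---short induced path whose removal strictly lowers the topological-minor depth---does not go through, and you give no alternative mechanism. Moreover, the bounded-degree case already shows that treewidth is the wrong target: a $3$-regular expander excludes $K_5$ as a topological minor yet has treewidth linear in $n$. There is no obvious induced subgraph of bounded treewidth carrying a path of order $n^{d_1}$ in such a graph; the paper handles this case by a direct BFS argument (the longest induced path has order at least $\log_\Delta n$), not via Theorem~\ref{th:twintro}.

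What the paper actually does is closer to your last paragraph than to your first: it applies the Grohe--Marx structure theorem directly, obtaining a tree representation of bounded adhesion whose torsos are either $(k,k,k,k)$-almost embeddable or have $(k,k)$-almost bounded degree. It then proves \lpp{} with a polylogarithmic bound separately for each torso type (using Theorem~\ref{th:espgenus} plus Theorem~\ref{th:pwintro} for the almost-embeddable pieces, and the BFS bound for the almost-bounded-degree pieces), and finally proves a lifting lemma (Lemma~\ref{lem:trga}): if a subgraph-closed class $\mathcal{G}$ satisfies \lpp{} polylogarithmically, then so does $\TR(\mathcal{G},a)$. This last lemma is where the ideas of Theorems~\ref{th:pwintro} and~\ref{th:twintro} reappear, but applied to the \emph{tree of the decomposition} (via a bounded-adhesion analogue of the pathwidth argument), not to the graph itself. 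There is no global reduction to bounded treewidth.
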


The proof deals separately with the different parts that compose graphs excluding a topological minor (as given by the structure theorem of Grohe and Marx \cite{grohe2015structure}) and then shows how they can be combined together.
We actually prove a stronger statement than Theorem~\ref{th:minclo}, namely that the same outcome holds for all graphs that admit a tree decomposition where every torso is either almost embeddable in a surface of bounded genus or has almost bounded degree (see Theorem~\ref{th:master} for the formal statement).

Before proceeding to the proofs, let us first mention some consequences of our results beyond graph classes that are closed under topological minors.

\subsection{Consequences of our results}

\paragraph{Graphs of bounded chordality.}

Gartland, Lokshtanov, Pilipczuk, Pilipczuk, and Rzą{\.z}ewski proved in \cite{gartland2021finding} that in graphs of bounded chordality (i.e. graphs with no long induced cycles), degeneracy and treewidth are tied in the following sense.

\begin{theorem}[{\cite[Theorem~1.3]{gartland2021finding}}]
For every $k,\ell \in \N$ there is a constant $t$ such that if a graph $G$ has degeneracy at most $k$ and no induced cycle on at least $\ell$ vertices then $G$ has treewidth at most~$t$.
\end{theorem}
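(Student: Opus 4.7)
The plan is to prove the contrapositive: assuming $G$ has degeneracy at most $k$ and treewidth exceeding a sufficiently large function $t(k,\ell)$, I exhibit an induced cycle of length at least $\ell$. By the standard characterization of treewidth via balanced separators (equivalently, by tangles of large order), it suffices to show that every $k$-degenerate graph with no induced cycle of length at least $\ell$ admits, for every vertex weighting, a balanced separator of size bounded by some $f(k,\ell)$; iterating the decomposition on the two sides then produces a tree decomposition of width bounded in $k$ and $\ell$.

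To build such a separator I would use a BFS layering: root a BFS at a vertex chosen relative to the weighting, producing layers $L_0, L_1, \dots$ By a standard averaging argument, two consecutive layers always form a balanced separator, so the task reduces to bounding the size of individual layers by $g(k,\ell)$ for some function $g$. If every layer is small, we are done. Otherwise, some layer $L_i$ together with the BFS-tree structure yields many root-to-$L_i$ paths; combining two whose endpoints in $L_i$ are close gives a cycle of length at least $2i$, and we aim to tune $i$ so that $2i \geq \ell$.

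The crux, and the main obstacle, is the last step: pairing up such paths to extract an \emph{induced} cycle of length at least $\ell$. Degeneracy is a global rather than local condition, so candidate long cycles can have many chords, and unlike in the bounded-maximum-degree setting there is no immediate bound on how often a given vertex can serve as a chord endpoint. The intended trick is descent on cycle length: any chord in a candidate cycle yields a strictly shorter candidate cycle on a subset of its vertices, and by iteratively peeling vertices of minimum degree (using the $k$-degenerate ordering) inside a localized subgraph around the candidate, one controls how much length is lost at each step. With enough paths to start from, guaranteed by choosing $t$ large via a Ramsey-type estimate in $k$ and $\ell$, this descent stabilizes at an induced cycle still of length at least $\ell$, completing the contradiction and hence the proof.
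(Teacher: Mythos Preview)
This theorem is not proved in the present paper; it is quoted from \cite{gartland2021finding} and used as a black box to derive a corollary. So there is no proof here to compare your attempt against, and the only question is whether your sketch stands on its own.

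It does not. The BFS-layer reduction fails already on easy examples: $K_{k,n}$ is $k$-degenerate and has no induced cycle of length at least $5$, yet a BFS from any vertex on the size-$k$ side has a middle layer of size $n$. So the implication ``some layer is large $\Rightarrow$ long induced cycle'' is false, and with it your case split. The averaging argument gives you a pair of consecutive layers that is balanced, but says nothing about its size, and there is no reason the balancing index $i$ should satisfy $2i \geq \ell$ (in $K_{k,n}$ it is at most~$2$).

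The ``descent on cycle length'' step is also not an argument as written. A chord shortens the cycle, while you need the final induced cycle to still have length at least $\ell$; you assert that degeneracy controls the loss per step and that a Ramsey-type supply of starting paths compensates, but neither claim is substantiated. Degeneracy is a global sparsity condition and does not directly constrain the chords on a fixed long cycle. The actual proof in \cite{gartland2021finding} is substantially more structural and does not proceed via BFS layers or chord-peeling; a short argument of the type you outline is not known to work.
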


Together with the above result, Theorem~\ref{th:twintro} implies Conjecture~\ref{conj:esp} for graphs of bounded chordality.

\begin{corollary}
For every $k,\ell \in \N$ there is a constant $d \in (0,1)$ such that if a graph $G$ is $k$-degenerate with no induced cycle of order at least $\ell$ and $G$ has a path of order $n$, then $G$ has an induced path of order at least $(\log n)^d$.
\end{corollary}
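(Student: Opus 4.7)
The plan is to chain two existing results. Starting from the hypotheses that $G$ is $k$-degenerate and contains no induced cycle of order at least $\ell$, I first apply the theorem of Gartland et al.\ cited just above to obtain a constant $t = t(k,\ell)$ with $\tw(G) \le t$, equivalently $\tw(G) < t+1$. Next I invoke Theorem~\ref{th:twintro} with parameter $k' := t+1$: since $G$ has a path of order $n$ and treewidth less than $t+1$, it contains an induced path of order at least $\frac{1}{4}(\log n)^{1/(t+1)}$.

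It remains to convert this into the cleaner form $(\log n)^d$ for a single exponent $d \in (0,1)$ depending only on $k$ and $\ell$. A convenient choice is $d := 1/(3(t+1))$, which splits the analysis into two regimes. When $\log n \ge 8^{t+1}$, the desired inequality $\frac{1}{4}(\log n)^{1/(t+1)} \ge (\log n)^d$ rearranges to $(\log n)^{2/(3(t+1))} \ge 4$, which holds by the assumption on $n$. When $\log n < 8^{t+1}$, we have $(\log n)^d \le (8^{t+1})^{1/(3(t+1))} = 2$, and the claim follows from the trivial observation that any path of order $n \ge 2$ contains an induced path of order at least $2$ (any single edge of that path).

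There is no real obstacle here: the corollary is essentially a formal composition of the Gartland et al.\ theorem and Theorem~\ref{th:twintro}. The only technical care lies in the last step, where $d$ must be chosen small enough to simultaneously absorb the constant $1/4$ on the right-hand side of Theorem~\ref{th:twintro} and to keep $(\log n)^d$ below the trivial lower bound of $2$ in the bounded range of small $n$; the exponent $d = 1/(3(t+1))$ meets both constraints cleanly.
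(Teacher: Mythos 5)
Your proposal is correct and matches the paper's (implicit, one-line) argument exactly: apply the Gartland et al.\ theorem to bound the treewidth by a constant $t=t(k,\ell)$ and then invoke Theorem~\ref{th:twintro}. The only addition is your explicit choice $d=1/(3(t+1))$ to absorb the constant $\frac14$, and that case analysis checks out.
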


\paragraph{Graphs of bounded cliquewidth.}
Cliquewidth is a graph parameter that is more general than treewidth.\footnote{Cliquewidth is more general in the sense that every class of graphs of bounded treewidth has bounded cliquewidth, while the converse does not hold in general.
Cliquewidth only appears in the statement of Corollary~\ref{cor:cliwi} so we refrain from giving its quite lengthy definition here and instead refer the interested reader to one of the papers on the topic such as~\cite{COURCELLE200077}.}
However, if we forbid arbitrarily large cliques and bicliques then each of the two parameters can be upper-bounded by a function of the other~\cite{COURCELLE200077}. Specifically, any $K_{s,s}$-subgraph free graph of cliquewidth less than $k$ has treewidth less than $3(k-1)(s-1)$, as proved in~\cite[Corollary~1]{10.1007/3-540-40064-8_19}. Thanks to this property we have the following consequence of Theorem~\ref{th:twintro}.

\begin{corollary}\label{cor:cliwi}
Let $k,s \in \N^+$, if $G$ is a $K_{s,s}$-subgraph free graph of cliquewidth less than $k$ that has a path of order $n$, then $G$ has an induced path of order at least $\frac{1}{4} (\log n)^{\frac{1}{3(k-1)(s-1)}}$.
\end{corollary}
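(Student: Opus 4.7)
The plan is essentially to chain together two ingredients that the paper has already set up in the paragraph preceding the corollary, so the proof will be very short.

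First, I would invoke the cited result of Gurski and Wanke \cite[Corollary~1]{10.1007/3-540-40064-8_19}: since $G$ is $K_{s,s}$-subgraph free and has cliquewidth less than $k$, its treewidth is less than $3(k-1)(s-1)$. Setting $k' := 3(k-1)(s-1)$, we have reduced to a graph of treewidth less than $k'$ that contains a path of order $n$.

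Second, I would apply Theorem~\ref{th:twintro} (the bounded treewidth result) with parameter $k'$ in place of $k$. That theorem yields an induced path of order at least
\[
\tfrac{1}{4}(\log n)^{1/k'} \;=\; \tfrac{1}{4}(\log n)^{\frac{1}{3(k-1)(s-1)}},
\]
which is precisely the claimed bound.

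There is no real obstacle here: the statement is a direct corollary and the argument is purely a substitution. The only minor points to watch are that the strict inequality "treewidth less than $3(k-1)(s-1)$" coming from the Gurski--Wanke bound matches the "treewidth less than $k$" hypothesis of Theorem~\ref{th:twintro} with $k$ renamed to $k'$, and that the constants $k$ and $s$ are positive integers so that $k'$ is a well-defined nonnegative integer (the corner cases $k=1$ or $s=1$ trivialize the class and can be handled separately if desired). No new technique is needed beyond the two cited statements.
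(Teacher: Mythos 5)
Your proposal is correct and is exactly the argument the paper intends: it derives the corollary by combining the Gurski--Wanke bound $\tw(G) < 3(k-1)(s-1)$ with Theorem~\ref{th:twintro} applied to $k' = 3(k-1)(s-1)$. The paper gives no separate proof beyond this substitution, so your write-up matches its approach.
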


Clearly, forbidding $K_{s,s}$-subgraphs (which amounts to forbidding large induced bicliques and large cliques) is a necessary condition in the statement of Corollary~\ref{cor:cliwi}: as we mentioned above, in cliques and bicliques the property \lpp{} only holds with a constant bound.

\paragraph{Well-quasi-ordered induced subgraphs ideals.}
Let $\mathcal{G}$ be a graph class. We say that $\mathcal{G}$ is \emph{hereditary} if any induced subgraph of a graph of $\mathcal{G}$ also belongs to $\mathcal{G}$.
The class $\mathcal{G}$ is \emph{well-quasi-ordered by induced subgraphs} if in every infinite sequence $(G_i)_{i\in\N}$ of graphs from $\mathcal{G}$, there are integers $i<j$ such that $G_i$ is an 
induced subgraph of $G_j$.
Theorem~\ref{th:twintro} also has the following consequence.
\begin{corollary}\label{cor:atminas}
Let $\mathcal{G}$ be a hereditary class of graphs that does neither contain all cliques nor all bicliques. If $\mathcal{G}$ is well-quasi-ordered by induced subgraphs then there is a constant $d \in (0,1)$ such that the following holds: every $G\in \mathcal{G}$ that has a path of order $n$ also has an induced path of order at least $n^d$.
\end{corollary}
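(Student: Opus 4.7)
The plan is to reduce the statement to an application of Theorem~\ref{th:pwintro} by first showing that the class $\mathcal{G}$ has bounded pathwidth. The key structural ingredient I would invoke is a result of Atminas and collaborators (to which the label \texttt{cor:atminas} seems to point) asserting that a hereditary graph class that is well-quasi-ordered by induced subgraphs and excludes some clique and some biclique has bounded pathwidth. The hypotheses on $\mathcal{G}$---hereditary, WQO by induced subgraphs, and missing some $K_r$ and some $K_{s,s}$---then yield an integer $k$ such that every graph of $\mathcal{G}$ has pathwidth less than $k$.

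Once this is in hand, the conclusion follows almost immediately. For any $G \in \mathcal{G}$ with a path of order $n$, Theorem~\ref{th:pwintro} gives an induced path of order at least $\tfrac{1}{3} n^{1/k}$. Choosing any exponent $d \in (0, 1/k)$ (for instance $d = 1/(k+1)$), the inequality $\tfrac{1}{3} n^{1/k} \geq n^d$ holds for every $n$ above some absolute threshold $n_0$; the finitely many small cases $n \leq n_0$ are handled by shrinking $d$ a little further and using that any path of order at least $2$ trivially contains an induced subpath of order at least $2$.

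The main obstacle is securing the structural claim that WQO, combined with the absence of arbitrarily large cliques and bicliques, forces bounded \emph{pathwidth}. A weaker conclusion such as bounded treewidth would only give a polylogarithmic lower bound on the length of the induced path via Theorem~\ref{th:twintro}, which would not match the polynomial bound $n^d$ stated in the corollary. The sharp, polynomial conclusion therefore rests squarely on promoting treewidth to pathwidth in this setting, which is precisely what the Atminas-style structural theorem delivers; no further ingredient seems to be needed.
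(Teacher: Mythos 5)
Your proposal is correct and matches the paper's argument: the paper likewise invokes the Atminas--Lozin--Razgon theorem to get a constant bound on the pathwidth of graphs in $\mathcal{G}$ and then applies the pathwidth result to obtain the polynomial bound (the paper's text cites Theorem~\ref{th:twintro} here, but as you observe only Theorem~\ref{th:pwintro} yields $n^d$, so that reference is evidently a slip). Your handling of the small-$n$ cases by shrinking $d$ is a fine way to absorb the constant $\frac13$.
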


This holds because, as proved in \cite[Theorem~1]{atminas2019graphs}, graph classes such as $\mathcal{G}$ in the above statement have pathwidth bounded from above by a constant (that depends on the order of the smallest excluded clique and biclique), so we can apply Theorem~\ref{th:twintro}.\footnote{The bound on the pathwidth appeared in the published version of \cite{atminas2019graphs}. As of the time of writing, the preprint of the paper on arxiv only proves a bound on the treewidth.} Similarly as above the requirement in Corollary~\ref{cor:atminas} that $\mathcal{G}$ does not contain all cliques or bicliques is necessary for hereditary classes. Note nevertheless that the assumption that $\mathcal{G}$ is well-quasi-ordered is only sufficient. For instance, the class of graphs of maximum degree at most 2 (which are unions of cycles and paths) is hereditary, excludes large cliques and bicliques, and satisfies \lpp{} with a linear bound, however it is not well-quasi-ordered by induced subgraphs due to the presence of arbitrarily long cycles.

\paragraph{Graphs with polynomial-sized modulators.}
As we will prove in Section~\ref{sec:bdd}, graphs that are not \emph{too far} from a class where \lpp{} holds with a polylogarithmic bound also satisfy this property with a similar bound. More formally, we will prove the following statement.

\begin{lemma}[Restatement of Lemma~\ref{lem:del}]
Let $\mathcal{G}$ be a hereditary class of graphs where~\lpp{} holds with function $n \mapsto c (\log n)^d$.
Let $\varepsilon \in (0,1)$ and let $\mathcal{G}_\varepsilon$ denote the class of graphs $G$ that have a set of vertices $X$ with $|X| \leq |G|^\varepsilon$ such that $G-X \in \mathcal{G}$.

Then $\mathcal{G}_\varepsilon$ satisfies \lpp{} with function $n \mapsto c'(\log n)^d$, for some constant $c'$ that depends on $c$, $d$, and $\varepsilon$ only.
\end{lemma}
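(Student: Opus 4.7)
The plan is a straightforward deletion argument: remove the witness set $X$ from $G$, locate a long subpath of $P$ that survives in $G-X\in\mathcal{G}$, apply the hypothesis \lpp{} for $\mathcal{G}$, and observe that the resulting induced path remains induced in~$G$.

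Let $G\in\mathcal{G}_\varepsilon$ have a path $P$ of order $n$, and let $X\subseteq V(G)$ be such that $|X|\le |G|^\varepsilon$ and $G-X\in\mathcal{G}$. I would first restrict attention to the induced subgraph $G':=G[V(P)\cup X]$: it still contains $P$, satisfies $G'-X=G[V(P)\setminus X]\in\mathcal{G}$ by heredity of $\mathcal{G}$, and every induced path of $G'$ is induced in $G$. Because the vertices of $V(G)\setminus V(G')$ all lie in $G-X$ and are disjoint from $V(P)$, they play no role in the argument; this yields the key inequality $|V(G')|\le n+|X|$, and combined with $|X|\le |V(G)|^\varepsilon$ one can solve the fixed-point-type inequality $|X|\le (n+|X|)^\varepsilon$ to obtain $|X|\le 2n^\varepsilon$, provided $n$ is above a constant threshold depending only on~$\varepsilon$ (the alternative $|X|>n$ forces $|X|$ to be bounded by the constant $2^{\varepsilon/(1-\varepsilon)}$, which in turn bounds~$n$).

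With $|X|\le 2n^\varepsilon$ in hand, I would then apply the subpath argument: removing $X\cap V(P)$ from $P$ splits it into at most $|X|+1$ subpaths whose orders sum to at least $n-|X|$, so some subpath $Q$ has order at least $(n-|X|)/(|X|+1)\ge n^{1-\varepsilon}/4$ for $n$ large. Since $Q$ is a path in $G'-X\in\mathcal{G}$, applying \lpp{} to $G'-X$ yields an induced path $R$ of $G'-X$ of order at least $c(\log|Q|)^d\ge c((1-\varepsilon)\log n-O(1))^d\ge \tfrac{c(1-\varepsilon)^d}{2}(\log n)^d$ for $n$ large enough, and $R$ is induced in $G$ since $G'-X$ is an induced subgraph of $G$. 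Setting $c':=c(1-\varepsilon)^d/2$ works for $n$ above a threshold; for the finitely many smaller values of $n$, shrinking $c'$ further ensures that a single vertex (an induced path of order~$1$) already meets the bound.

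The hard part will be the first step: converting the hypothesis $|X|\le |G|^\varepsilon$, which involves the potentially huge ambient order $|G|$, into a usable bound in terms of the path order~$n$. The delicate point is that $G'=G[V(P)\cup X]$ generally does \emph{not} belong to $\mathcal{G}_\varepsilon$ itself (the inequality $|X|\le |V(G')|^\varepsilon$ may fail), so the passage to $G'$ has to be justified without invoking $\mathcal{G}_\varepsilon$-membership of $G'$; it relies only on $G'-X\in\mathcal{G}$ (preserved by heredity) together with the observation that the extra vertices of $V(G)\setminus V(G')$ can be discarded harmlessly, letting one reason as though $|V(G)|\le n+|X|$ and solve the fixed-point inequality for~$|X|$.
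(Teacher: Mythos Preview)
Your argument has a genuine gap at the ``fixed-point'' step. You write that from $|V(G')|\le n+|X|$ and $|X|\le |V(G)|^\varepsilon$ one can derive $|X|\le (n+|X|)^\varepsilon$, but this does not follow: the hypothesis bounds $|X|$ in terms of $|V(G)|$, which may be arbitrarily larger than $|V(G')|$, and passing to the induced subgraph $G'$ does nothing to shrink that bound. Concretely, take $\mathcal{G}$ to be the class of graphs of maximum degree at most~$2$ (hereditary, and satisfying \lpp{} with $n\mapsto \log n$), and set $\varepsilon=\tfrac12$. The graph $G_m=K_m\cup\overline{K_{m^2}}$ lies in $\mathcal{G}_\varepsilon$ (delete $m-3$ clique vertices; then $|X|=m-3\le\sqrt{m+m^2}$ and $G_m-X$ has maximum degree~$2$), contains a path of order $m$, yet has no induced path of order more than~$2$. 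So the literal \lpp{} claim for $\mathcal{G}_\varepsilon$ is false, and no argument along your lines can rescue it.

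The paper avoids this difficulty entirely: its Lemma~\ref{lem:del} is stated and proved in the Hamiltonian form \hpp{}, i.e.\ it assumes that $G\in\mathcal{G}_\varepsilon$ has a Hamiltonian path, so that $n=|G|$ and the hypothesis $|X|\le |G|^\varepsilon$ reads $|X|\le n^\varepsilon$ directly, with no fixed-point manoeuvre needed. From there the proof is exactly the deletion-and-subpath computation you outline (Remark~\ref{rem:hairsplit} followed by the hypothesis on~$\mathcal{G}$). The informal ``restatement'' you were given glosses over the fact that $\mathcal{G}_\varepsilon$ is not hereditary, so the usual passage from \hpp{} to \lpp{} (Remark~\ref{rem:hpp}) does not apply; the correct statement to prove is the Hamiltonian one.
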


Therefore all our results can be extended to classes that are, informally, at \emph{polynomial distance} from the considered classes.
As an example, the lemma above and Theorem~\ref{th:twintro} together imply the following result.

\begin{corollary}
There is a constant $c \in (0,1)$ such that the following holds.
Let $k \in \N$ and let $\mathcal{G}_k$ denote the class of graphs $G$ that have a set $X \subseteq V(G)$ with $|X| \leq \sqrt{|G|}$ such that $\tw(G-X) \leq k$.
Then for every $G \in \mathcal{G}_k$, if $G$ has a path of order $n$ then it has an induced path of order at least $c (\log n)^{1/k}$.
\end{corollary}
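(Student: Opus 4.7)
The plan is to use a one-step pigeonhole argument that transfers the long-path hypothesis from $G$ to the bounded-treewidth subgraph $G-X$, and then apply Theorem~\ref{th:twintro} to $G-X$. A direct application of the restatement of Lemma~\ref{lem:del} with $\varepsilon = 1/2$ is also available as a black box, but it hides the pigeonhole step that is responsible for controlling the exponent; working directly keeps the exponent faithful to Theorem~\ref{th:twintro}.

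\textbf{Execution.} Let $G \in \mathcal{G}_k$ have a path $P$ of order $n$, and let $X \subseteq V(G)$ be the modulator guaranteed by the definition, so that $|X| \leq \sqrt n$ and $\tw(G - X) < k$ (reading the treewidth bound ``$\leq k$'' of the statement consistently with the ``less than $k$'' indexing of Theorem~\ref{th:twintro}). Removing $X$ from $P$ yields at most $|X| + 1 \leq \sqrt n + 1$ subpaths whose orders sum to at least $n - \sqrt n$. By pigeonhole, one of them, call it $P'$, has order at least
\[
\frac{n - \sqrt n}{\sqrt n + 1} \;\geq\; \frac{\sqrt n}{2}
\]
for $n$ sufficiently large. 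Now $P'$ is a (not necessarily induced) path of order at least $\sqrt n / 2$ in $G - X$, a graph of treewidth less than $k$. Theorem~\ref{th:twintro} then supplies an induced path $Q$ of $G - X$ of order at least $\tfrac{1}{4}\bigl(\log(\sqrt n / 2)\bigr)^{1/k}$.

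\textbf{Finishing and uniformity.} Since $\log(\sqrt n / 2) = \tfrac{1}{2}\log n - 1 \geq \tfrac{1}{3}\log n$ for $n$ large enough, we deduce $|Q| \geq \tfrac{1}{4 \cdot 3^{1/k}}(\log n)^{1/k} \geq \tfrac{1}{12}(\log n)^{1/k}$, and this last estimate holds uniformly in $k$ because $3^{1/k} \leq 3$. The path $Q$ avoids $X$ and is induced in $G - X$, hence induced in $G$. Taking $c = 1/12$ (further shrunk to absorb the finitely many small $n$ for which the asymptotics above are not tight) gives the claimed single constant $c \in (0,1)$ valid for every $k$.

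\textbf{Main obstacle.} There is no serious conceptual obstacle, but one must be careful with the exponent: invoking Lemma~\ref{lem:del} mechanically on the class of graphs of treewidth at most $k$ and then quoting Theorem~\ref{th:twintro} loses a factor, producing exponent $1/(k+1)$ instead of $1/k$. The in-line pigeonhole argument above avoids this loss, and the only remaining bookkeeping---matching the treewidth-index convention of Theorem~\ref{th:twintro} and showing that the multiplicative constant is uniform in $k$---is handled by the elementary estimates $3^{1/k} \leq 3$ and $\log(\sqrt n/2) \geq \tfrac{1}{3}\log n$.
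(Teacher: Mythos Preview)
Your approach is exactly the paper's: the corollary is obtained from Lemma~\ref{lem:del} (with $\varepsilon = 1/2$) together with Theorem~\ref{th:twintro}, and your inline pigeonhole is nothing other than the proof of Lemma~\ref{lem:del} written out. The computations are fine.

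Your ``Main obstacle'' paragraph, however, is mistaken. Lemma~\ref{lem:del} preserves the exponent $d$ exactly: it takes a class satisfying \hpp{} with bound $c(\log n)^d$ to one satisfying it with bound $c'(\log n)^d$, with the \emph{same} $d$. Applying it as a black box to the class of graphs of treewidth less than $k$ (for which Theorem~\ref{th:twintro} gives $d = 1/k$) therefore yields exponent $1/k$ for the modulator class as well, with no loss whatsoever. The discrepancy you flag between $1/k$ and $1/(k+1)$ is entirely due to the indexing mismatch between the corollary's ``$\tw(G-X) \leq k$'' and Theorem~\ref{th:twintro}'s ``treewidth less than $k$'', which you already resolved by reinterpreting the former; it has nothing to do with whether one inlines the pigeonhole or invokes Lemma~\ref{lem:del}. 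Your inline argument does not avoid any loss that the black-box route incurs.

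A small slip: you assert $|X| \leq \sqrt n$, but the definition of $\mathcal{G}_k$ only guarantees $|X| \leq \sqrt{|G|}$, and a priori $|G|$ may exceed $n$. The paper sidesteps this by stating and proving Lemma~\ref{lem:del} in the Hamiltonian form \hpp{} (so that $n = |G|$); your argument becomes correct once you similarly restrict to $G[V(P)]$ at the outset.
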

Observe that $\mathcal{G}_k$ above is not closed under topological minors, so it does not fall under the umbrella of Theorem~\ref{th:minclo}. Also, a direct application of Theorem~\ref{th:twintro} to graphs of $\mathcal{G}_k$ (with bound $k + \sqrt{|G|}$ on the treewidth) would not guarantee more than constant-sized induced paths.

\paragraph{Graphs of moderate degree.}
In order to prove Theorem~\ref{th:minclo} we will show that graphs of degree bounded by a constant satisfy property \lpp{} with a polylogarithmic bound (Corollary~\ref{cor:bddeg}). More generally, we can prove that the same holds for graphs with subpolynomial maximum degree (Lemma~\ref{lem:bddelta}). For instance, an application of this lemma shows that the class of the graphs $G$ where the degree is at most $2^{\sqrt{\log |G|}}$ satisfies \lpp{} with bound $n \mapsto \sqrt{\log n}$. Note that this graph class is not closed under taking topological minors.

\subsection{Organization of the paper.}
In Section~\ref{sec:prelim} we give the necessary definitions.
We prove Theorem~\ref{th:pwintro} in Section~\ref{sec:pw} and Theorem~\ref{th:twintro} in Section~\ref{sec:tw}. Section~\ref{sec:minclo} is split into several subsections in order to finally prove Theorem~\ref{th:minclo}. We discuss further research directions in Section~\ref{sec:op}.

\section{Preliminaries}\label{sec:prelim}

Unless otherwise specified, logarithms are binary.
Graphs in this paper are finite, undirected, and loopless.
For every graph $G$ we respectively denote by $V(G)$ and $E(G)$ its sets of vertices and edges and use $|G|$ as a shorthand for $|V(G)|$, the \emph{order} of the graph. We use standard graph theory terminology.

\begin{remark}\label{rem:hairsplit}
If $G$ is a graph that has a path $P$ of order $n$ and $X \subseteq(G)$ is not empty, then $G-X$ has
\begin{enumerate}
\item at most $|X| + 1$ connected components that contain a vertex of $P$; and
\item a connected component that has a path of order at least
$\frac{n-|X|}{|X| + 1} \geq \frac{n}{2|X|} - 1$.
\end{enumerate}
\end{remark}

\paragraph{Long paths versus Hamiltonian paths.}
A \emph{Hamiltonian path} in a graph $G$ is a path that visits all the vertices of~$G$.
The statements of the results in Section~\ref{sec:intro} follow the general form of \lpp{}.
In contrast, in the rest of the paper we work with statements of the following form (for $\mathcal{G}$ a class of graphs):
\begin{enumerate}
\item[\hypertarget{it:hpp}{$(\star\star)$}] \textit{For every $n\in \N$, if a graph of $\mathcal{G}$ has a Hamiltonian path and order $n$ then it has an induced path of order at least $f(n)$.}
\end{enumerate}
\begin{remark}\label{rem:hpp}
For a hereditary graph class $\mathcal{G}$, the statements \lpp{} and \hpp{} are equivalent.
Given a graph $G \in \mathcal{G}$ with a path $P$ of order $n$ (as required by \lpp{}), it suffices to consider the induced subgraph $G[V(P)] \in \mathcal{G}$ to be able to apply statement \hpp{}. This shows that \hpp{} implies \lpp{} and the other direction is trivial.
\end{remark}
However the form \hpp{} is more convenient for the proofs because the induced paths that we construct will never use vertices other than those of the path $P$ whose existence is assumed, so by using form \hpp{} we do not need to explicitly say that we restrict our attention to $G[V(P)]$.

\paragraph{Representations.}

To easily deal with graphs of bounded pathwidth or treewidth (defined hereafter) we find it convenient to define tree representations, which are objects that are closely related to tree decompositions, as we explain below.
Formally, a \emph{tree representation} of a graph $G$ is a pair $\T = (T, \{T_v\}_{v \in V(G)})$ such that
\begin{enumerate}
\item $T$ is a tree;
\item for every $v \in V(G)$, $T_v$ is a subtree of $T$, called the \emph{model} of $v$;\label{it:sbt}
\item for every edge $uv$ of $G$ the subtrees $T_u$ and $T_v$ intersect. \label{it:inter}
\end{enumerate}

When $T$ is a path, we call $\T$ a \emph{path representation}.
If item \eqref{it:inter} is strengthened as follows
\begin{enumerate}
\item[3'.] $uv$ is an edge of $G$ if and only if $T_u$ and $T_v$ intersect 
\end{enumerate}

\noindent then $G$ is the intersection graph of the vertex sets of the subtrees $\{T_v\}_{v \in V(G)}$ and it is a \emph{chordal graph}\footnote{Chordal graphs are usually defined as graphs with no induced cycle of order 4 or more. They can be seen as intersection graphs of subtrees of a tree, as proved by Gavril~\cite[Theorem~3]{GAVRIL197447}.}; if furthermore $T$ is a path then $G$ is an \emph{interval graph}. In this latter case we call $\T$ an \emph{interval representation} of~$G$ to stress that it is an interval graph.
To avoid confusion between the vertices of $G$ and $T$, we use the synonym \emph{nodes} to refer to vertices of~$T$.

Tree representations and tree decompositions are closely linked, as we explain now.
For a tree representation $\T = (T, \{T_v\}_{v \in V(G)})$ of a graph $G$, we define for every $t \in V(T)$ the \emph{bag at $t$} as the following subset of $V(G)$
\[
\beta_{\T} (t) = \{v \in V(G) \mid t \in T_v\}.
\]

We drop the subscript when there is no ambiguity.
The \emph{width} of a tree representation~$\T$ is $\max_{t \in V(T)} |\beta_\T(t)| - 1$.
The \emph{treewidth} of $G$ is the minimum width of any of its tree representations and the \emph{pathwidth} of $G$ is the minimum width of any of its path representations.  It can easily be seen that these definitions coincide with the usual definitions for treewidth and pathwidth as, with the notation above, $(T, \{\beta(t)\}_{t \in V(T)})$ is a tree decomposition of~$G$. We respectively denote by $\tw(G)$ and $\pw(G)$ the treewidth and pathwidth of $G$.

The following is a consequence of items \eqref{it:sbt} and \eqref{it:inter} of the definition of a tree representation.
\begin{remark}\label{rem:conntree}
Let $G$ be a graph with a tree representation $(T, \{T_v\}_{v\in V(G)})$ and let $H$ be a connected subgraph of $G$. Then $\bigcup_{v\in V(H)} T_v$ is a (connected) subtree of~$T$.
\end{remark}

\section{Induced paths in graphs of bounded pathwidth}
\label{sec:pw}

In this section we show that the maximum function $f_k$ such that property \lpp{} holds for graphs of pathwidth less than $k$ is such that $\frac{1}{3}n^{1/k} \leq f_k(n) \leq n^{2/k} + 1$ (Theorem~\ref{th:pwintro} and Theorem~\ref{th:pwlb}).

A \emph{caterpillar} is a tree in which all the vertices are at distance at most one of some path.
\begin{lemma}[{See \cite[Section~6]{dmtcs:263}}]\label{lem:cat}
Every connected graph of pathwidth at most one is a caterpillar.
\end{lemma}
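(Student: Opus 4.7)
The plan is to split the statement into two parts: first show that the graph is a tree, then characterize which trees have pathwidth at most one.

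First I would observe that since pathwidth is an upper bound on treewidth, we have $\tw(G)\leq 1$, which is well known to imply that $G$ is a forest (every graph of treewidth one is a forest, as any cycle has treewidth $2$). Being connected, $G$ is therefore a tree.

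Next I would use the standard forbidden-subgraph characterization of caterpillars: a tree is a caterpillar if and only if it does not contain, as a subgraph, the \emph{spider} $S$ obtained by subdividing each edge of $K_{1,3}$ once (a center $c$ joined to three vertices $m_1,m_2,m_3$, each carrying a pendant leaf $\ell_1,\ell_2,\ell_3$). Since pathwidth is monotone under subgraphs, it suffices to show $\pw(S)\geq 2$, and the result then follows by contraposition: if $G$ is not a caterpillar then $G$ contains $S$ as a subgraph, so $\pw(G)\geq \pw(S)\geq 2$, contradicting the hypothesis.

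The technical step, and the main obstacle, is proving $\pw(S)\geq 2$. Suppose for contradiction that $S$ has a path representation $(P,\{P_v\}_{v\in V(S)})$ of width one, so every bag has size at most two and each $P_v$ is a subpath of $P$, viewed as an interval $I_v$ on the line. Each edge of $S$ forces the corresponding two intervals to meet, so there exist points $p_i\in I_c\cap I_{m_i}$ and $q_i\in I_{m_i}\cap I_{\ell_i}$ for $i\in\{1,2,3\}$. Because bags have size at most two, at every point of $I_c$ at most one of $I_{m_1},I_{m_2},I_{m_3}$ can be present, so up to relabelling we may assume $p_1<p_2<p_3$ along $I_c$; in particular $I_{m_1}$ ends strictly before $p_2$ and $I_{m_3}$ starts strictly after $p_2$. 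The point $q_1$ where $m_1$ and $\ell_1$ meet cannot contain $c$ (the bag already has size two from $\{m_1,\ell_1\}$), so $q_1$ lies strictly to the left of $I_c$; but $I_{m_1}$ is an interval containing both $q_1$ and $p_1\in I_c$, so $I_{m_1}$ covers the left endpoint of $I_c$. Symmetrically $I_{m_3}$ covers the right endpoint of $I_c$. Now $\ell_2$ forces $q_2\in I_{m_2}\cap I_{\ell_2}$ to lie outside $I_c$ (same size-two argument), hence either to the left or to the right of $I_c$; in the first case $I_{m_2}$ must then cover the left endpoint of $I_c$, creating a bag containing $\{c,m_1,m_2\}$, and in the second case we get a bag containing $\{c,m_2,m_3\}$. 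Either way we obtain a bag of size at least three, a contradiction.

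This contradiction gives $\pw(S)\geq 2$, which together with the observations above completes the argument. I expect the first two steps to be essentially bookkeeping, and the argument bounding $\pw(S)$ from below by a careful interval analysis to be the only place where one needs to think.
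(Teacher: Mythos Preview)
The paper does not actually prove this lemma; it simply cites \cite[Section~6]{dmtcs:263} and moves on. Your proposal is a correct self-contained proof: the reduction to trees via $\tw(G)\leq\pw(G)\leq 1$ is standard, the forbidden-subgraph characterization of caterpillars by the subdivided $K_{1,3}$ is the classical one, and your interval argument showing $\pw(S)\geq 2$ is sound (each step---the ordering of the $p_i$, the location of the $q_i$ outside $I_c$, and the resulting size-three bag at an endpoint of $I_c$---checks out). So there is nothing to compare against in the paper itself; you have supplied what the authors chose to outsource.
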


We will also use the following consequence of the Helly property of intervals (see for instance  \cite[Section 2.5]{graham1995handbook}).
\begin{lemma}\label{lem:helly}
Suppose $\mathcal{P} = (P, \{P_v\}_{v \in V(G)})$ is a path representation of a graph $G$ and $K$ is a clique of $G$. Then there is a node $t \in V(P)$ such that $V(K) \subseteq \beta_{\mathcal{P}}(t)$.
\end{lemma}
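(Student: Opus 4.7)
The plan is to reduce the statement to the classical Helly property for one-dimensional intervals. First I would observe that in a path representation $\mathcal{P} = (P, \{P_v\}_{v \in V(G)})$, each model $P_v$ is by definition a subtree of the path $P$, and any subtree of a path is itself a subpath. Fixing any traversal of $P$ identifies $V(P)$ with an initial segment of $\mathbb{N}$; under this identification each $P_v$ becomes an integer interval $[\ell_v, r_v]$.

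Next I would exploit the clique assumption. Since $K$ is a clique of $G$, every pair $u, v \in V(K)$ is joined by an edge of $G$, so item~(3) of the definition of a tree representation forces $P_u \cap P_v \neq \emptyset$. Translated to the associated intervals, this means that the family $\{[\ell_v, r_v]\}_{v \in V(K)}$ is pairwise intersecting. The last step is the standard short proof of the one-dimensional Helly property: set $L = \max_{v \in V(K)} \ell_v$ and $R = \min_{v \in V(K)} r_v$. If one had $L > R$ then there would exist $u, v \in V(K)$ with $\ell_u > r_v$, so $[\ell_u, r_u]$ and $[\ell_v, r_v]$ would be disjoint, contradicting the previous sentence. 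Hence $L \leq R$, and any integer $t \in [L, R]$, viewed as a node of $P$, lies in every $P_v$ with $v \in V(K)$, which is exactly the required inclusion $V(K) \subseteq \beta_{\mathcal{P}}(t)$.

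There is essentially no obstacle here: the lemma is just the Helly property for intervals, repackaged in the language of path representations. The only things to verify are that the models behave as intervals (immediate once $P$ is a path) and that the clique hypothesis supplies the pairwise intersection needed to trigger Helly. The case $V(K) = \emptyset$ is vacuous (any node $t$ of $P$ works), so it may be assumed from the outset that $V(K) \neq \emptyset$.
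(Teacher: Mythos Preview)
Your proposal is correct and follows exactly the route the paper indicates: the paper does not spell out a proof but simply cites the Helly property of intervals, and your argument is precisely the standard short proof of that property, translated into the language of path representations. There is nothing to add.
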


We first prove Theorem~\ref{th:pwintro}, that we restate here for convenience.
\pwintro*

\begin{proof}
We actually prove the following by induction on $n$ and $k$, which is equivalent to the desired statement according to Remark~\ref{rem:hpp}.
\begin{itemize}
\item[] \textit{If $G$ is a graph of order at least $n$ and pathwidth less than $k$ that has a Hamiltonian path, then $G$ has an induced path of order at least $\frac{1}{3}n^\frac{1}{k}$.}
\end{itemize}
Note that the statement is vacuously true when $k=1$ and $n > 1$ as there is no connected graph with $|G|\geq 2$ and $\pw(G) = 0$.
When $\pw(G)=n-1$ (thus $k = n$), the graph is a clique and the order of its longest induced path is $2 \geq \frac{1}{3}n^\frac{1}{k}$.
In the cases where $n \geq 2$ and $k = 2$, G is a caterpillar, by Lemma~\ref{lem:cat}. Every caterpillar with a Hamiltonian path is a path, so the statement holds as $G$ is already an induced path and $n \geq \frac{1}{3}\sqrt{n}$.

So we now assume that $k \geq 3$, $n > k$ and that the statement holds for all smaller values of $k$ and $n$. We now show that it also true for $k$ and $n$. We may also assume that $\frac{1}{3}n^{\frac{1}{k}} > 2$ as otherwise any edge is an induced path of the required order.

Let $G$ be a graph that has a Hamiltonian path and such that $|G|\geq n$ and $\pw(G) < k$. We fix a path representation $\mathcal{R} = (R, \{R_v\}_{v \in V(G)})$ of~$G$ of width less than $k$ and with no empty bag.
Let $u$ be a vertex of $G$ such that $R_u$ contains one endpoint of $R$ and let $v$ be a vertex whose model contains the other endpoint; possibly $u=v$.
Let $Q$ be an induced path in $G$ between $u$ and $v$. Observe that $Q$ may consist of a single vertex (when $u = v$) or of two vertices (when $u$ and $v$ are adjacent). Otherwise, $Q$ can be constructed by taking a shortest path between $u$ and~$v$.

In the case where $|Q|\geq \frac{1}{3}n^\frac{1}{k}$ we are done and $Q$ is the desired path.
So in the rest of the proof we may assume that $|Q| < \frac{1}{3}n^\frac{1}{k}$.
Let $P$ be a Hamiltonian path of $G$.
The path $Q$ intersects $P$ in $|Q|$ vertices thus removing $Q$ from $G$ cuts $P$ into at most $|Q|+1$ subpaths, and the longest of them, that we call $P'$, has an order $n'$ that is at least $\frac{n-|Q|}{|Q|+1}$. Let us consider $G'$ the graph induced by~$P'$ in $G$.

We now show that $\pw(G') \leq k-1$.
For this we consider the path representation $\mathcal{R}' = (R, \{R_w\}_{w \in V(G')})$ of~$G'$.
Let $r \in R$ be a node such that $|\beta_{\mathcal{R}'}(r)|$ is maximum.
From the definition we have $\beta_{\mathcal{R}'}(r) \subseteq \beta_{\mathcal{R}}(r)$.
By Remark~\ref{rem:conntree}, the definition of $Q$ and the fact that it is connected, we know that the union of the sets $\{R_w\}_{w \in V(Q)}$ is equal to $V(R)$. Therefore $\beta_{\mathcal{R}}(r)$ contains a vertex of~$Q$. This implies $| \beta_{\mathcal{R}}(r) | > | \beta_{\mathcal{R}'}(r) |$ and the claimed bound on the pathwidth of $G'$ follows.

The graph $G'$ has a Hamiltonian path (by definition), order $n'<n$ and pathwidth at most~$k-1$. By induction, $G'$ admits an induced path $Q'$ of order at least $\frac{1}{3} {n'}^\frac{1}{k-1}$.
Recall that $n' \geq \frac{n-|Q|}{|Q|+1}$. Since $|Q| < \frac{1}{3}n^\frac{1}{k}$, we have
$n' > \frac{n}{\frac{1}{3}n^\frac{1}{k} +1}-1$.
As we assume $\frac{1}{6}n^{\frac{1}{k}} \geq 1$ we
deduce $n' > 2 n^\frac{k-1}{k} - 1 \geq n^\frac{k-1}{k}$.
Therefore we have $|Q'| \geq \frac{1}{3} (n^\frac{k-1}{k})^\frac{1}{k-1} = \frac{1}{3}n^\frac{1}{k}$. Since $G'$ is an induced subgraph of $G$, $Q'$ is also an induced path of $G$ so we are done.
\end{proof}

As every interval graph of clique number $k$ admits a path representation of width less than~$k$ (given by its interval representation), we have the following improvement of the bound of Theorem~\ref{th:esp}.

\begin{corollary}\label{cor:intv}
For every $k, n \in \N$, if $G$ is an interval graph of order at least $n$ and clique number at most $k$ that has a Hamiltonian path, then $G$ has an induced path of order at least $\frac{1}{3}n^\frac{1}{k}$.
\end{corollary}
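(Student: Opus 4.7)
The plan is to observe that the corollary is essentially an immediate specialization of Theorem~\ref{th:pwintro} to interval graphs, with the extra ingredient being the well-known identification between the clique number of an interval graph and (one more than) its pathwidth. So the work reduces to a short bridging argument that converts an interval representation into a path representation of small width.

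First I would take an interval graph $G$ with clique number at most $k$ that has a Hamiltonian path, and let $\mathcal{P} = (P, \{P_v\}_{v \in V(G)})$ be an interval representation of $G$ in the sense defined in Section~\ref{sec:prelim} (i.e., a path representation in which adjacency coincides with intersection of models). This is a path representation in particular, so the width $\max_{t \in V(P)} |\beta_\mathcal{P}(t)| - 1$ is a valid upper bound on $\pw(G)$.

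Next, I would bound this width by $k-1$. By the strengthening of item (3) that defines interval representations, the vertices of $G$ whose models contain a given node $t \in V(P)$ pairwise intersect and hence form a clique of $G$; therefore $|\beta_\mathcal{P}(t)| \leq k$ for every $t$. (One can equivalently invoke Lemma~\ref{lem:helly} in reverse: every bag of any path representation is contained in a clique of the interval graph.) This gives $\pw(G) \leq k-1 < k$.

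Finally, since $G$ has a Hamiltonian path and $|G| \geq n$, and since any Hamiltonian path is in particular a path of order at least $n$, Theorem~\ref{th:pwintro} applies to $G$ with parameter $k$ and yields an induced path of order at least $\tfrac{1}{3} n^{1/k}$, which is exactly the claimed bound. There is no real obstacle here; the only thing to be careful about is making the identification between interval representations and path representations of width $<k$ explicit, so that the hypothesis of Theorem~\ref{th:pwintro} is literally satisfied.
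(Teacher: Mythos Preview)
Your proposal is correct and matches the paper's argument exactly: the paper also observes that an interval representation of a clique-number-$k$ interval graph is a path representation of width less than $k$, and then invokes Theorem~\ref{th:pwintro}. The only cosmetic difference is that the paper states this in a single sentence preceding the corollary rather than spelling out the bag-is-a-clique step.
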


The following statement complements Theorem~\ref{th:pwintro} and Corollary~\ref{cor:intv} by giving an upper-bound on the order of induced paths one can guarantee in (interval) graphs of bounded pathwidth.

\begin{theorem}\label{th:pwlb}
For every $k, n \in \N$ with $2 \leq k \leq n$, there exists an interval graph $G_{n,k}$ with a Hamiltonian path of order at least $n$ and clique number at most $k$, such that every induced path of $G_{n,k}$ has order at most $n^\frac{2}{k} + 1$.
\end{theorem}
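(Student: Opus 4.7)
The plan is to construct $G_{n,k}$ by induction on $k$. In the base case $k = 2$, I take $G_{n,2}$ to be the path $P_n$ on $n$ vertices; it is an interval graph of clique number $2$, has a Hamiltonian path (itself), and its longest induced path has order $n \leq n + 1 = n^{2/2}+1$. The induction maintains the extra property that the Hamiltonian path runs between two designated ``leftmost'' and ``rightmost'' vertices of the natural interval representation.

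For the inductive step with $k \geq 3$, choose a parameter $q \approx n^{2/(k+1)}$ and set $n' = \lceil n/q\rceil$. The construction of $G_{n,k}$ is hierarchical: place $q$ ``top-level'' intervals $T_1, \ldots, T_q$ on the real line in a chain (each consecutive pair overlapping at a small boundary region), and inside each $T_i$ place the interval representation of a copy $M_i$ of $G_{n',k-1}$, arranged so that the rightmost vertex $r_i$ of $M_i$ extends into the overlap $T_i \cap T_{i+1}$, making $r_i$ adjacent to $T_{i+1}$ as well as to $T_i$. The clique bound is immediate: at interior points of $T_i$ the maximum clique has size at most $1 + (k-1) = k$; at boundary overlaps it is $\{T_i, T_{i+1}, r_i\}$ of size $3 \leq k$. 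A Hamiltonian path of $G_{n,k}$ is obtained by alternating the Hamiltonian paths of the $M_i$'s (from $\ell_i$ to $r_i$ by induction) with bridges through the $T_i$'s, following the pattern $\mathrm{Ham}(M_1), T_1, T_2, \mathrm{Ham}(M_2), T_3, \mathrm{Ham}(M_3), \ldots, T_q, \mathrm{Ham}(M_q)$; all required adjacencies hold because $T_i$ is adjacent to every vertex of $M_i$, $T_i$ is adjacent to $T_{i+1}$, and $r_i$ is adjacent to $T_{i+1}$. The path visits $q(n'+1) \geq n$ vertices.

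The main obstacle is bounding the order of an arbitrary induced path $P$ in $G_{n,k}$. If $P$ contains no $T_i$, then $P$ lies within a single $M_i$ (since the $M_i$ are pairwise disconnected in $G_{n,k} - \{T_1,\ldots,T_q\}$), so $|P| \leq (n')^{2/(k-1)}+1$ by induction. Otherwise, the key observation is that whenever $T_i \in P$, at most two other vertices of $M_i$ can appear in $P$: all of $M_i$ is adjacent to $T_i$, so any third such vertex would be a chord through $T_i$. It follows that the $T_i$'s occurring in $P$ form a contiguous range $T_a, \ldots, T_b$ appearing consecutively in $P$ (otherwise the edge between some $T_i$ and $T_{i+1}$ would be a chord), and $P$ has the form: a path inside $M_{a-1}$ ending at $r_{a-1}$, then the block $T_a, T_{a+1}, \ldots, T_b$, then a single vertex of $M_b$. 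This gives $|P| \leq (n')^{2/(k-1)} + q + 2$. With $q \approx n^{2/(k+1)}$ and $n' \approx n^{(k-1)/(k+1)}$, both $(n')^{2/(k-1)}$ and $q$ are approximately $n^{2/(k+1)}$, so the sum is $O(n^{2/(k+1)})$, which is at most $n^{2/k}+1$ for $n$ sufficiently large; the remaining small cases of $n$ (where $n$ is bounded in terms of $k$) can be handled by direct ad-hoc constructions such as small cliques or paths.
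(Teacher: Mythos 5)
Your construction is genuinely different from the paper's, and the structural analysis of induced paths in it is essentially sound, but the quantitative closing of the induction has a real gap. Because in your graph an induced path can combine a long induced path inside one gadget $M_{a-1}$ (entering via $r_{a-1}$) with the \emph{entire} spine $T_a,\dots,T_b$, your bound is a \emph{sum}, $(n')^{2/(k-1)}+q+O(1)$, rather than a maximum. This forces you to take $q\approx n^{2/(k+1)}$ so that both terms balance, and you end up with roughly $2\,n^{2/(k+1)}$, which is below $n^{2/k}+1$ only once $n^{2/(k(k+1))}\geq 2$, i.e.\ for $n$ at least about $2^{k(k+1)/2}$. All values $k\leq n\lesssim 2^{k(k+1)/2}$ are then deferred to ``ad-hoc constructions such as small cliques or paths'', and these demonstrably do not work in that range: a clique with a Hamiltonian path on at least $n$ vertices has clique number at least $n>k$, and a path on at least $n$ vertices contains an induced path of order $n$, which vastly exceeds $n^{2/k}+1$ (e.g.\ $k=5$, $n=1000$ requires induced paths of order at most about $17$). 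So the proposal proves only an asymptotic version of the statement, and the unhandled regime is exactly where the statement is delicate.

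The paper avoids this by making the bound a maximum instead of a sum: it recurses from $k$ to $k-2$ (with separate base cases $k=2,3$), takes a spine path $Q=v_1\dots v_q$ with $q=\lfloor n^{2/k}+1\rfloor$ matching the target exponent exactly, and joins \emph{both} $v_i$ and $v_{i+1}$ completely to the gadget $H_i$. The double attachment is the key design feature: once an induced path contains two vertices of some $H_i$, it can never leave $H_i$ (any exit vertex $v_i$ or $v_{i+1}$ would be adjacent to both, violating inducedness), so every induced path is either entirely inside one gadget (order at most $(n')^{2/(k-2)}+1\leq n^{2/k}+1$) or is essentially the spine plus at most one vertex at each end but in fact of order at most $|Q|$. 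In your construction only the single super-vertex $T_i$ dominates $M_i$, which is what permits the path to dive deep into a gadget and still traverse the whole spine. To repair your approach you would either need to add the second attachment (which costs an extra unit of clique number, effectively recovering the paper's $k\to k-2$ recursion) or find a different way to prevent the additive combination of the two contributions.
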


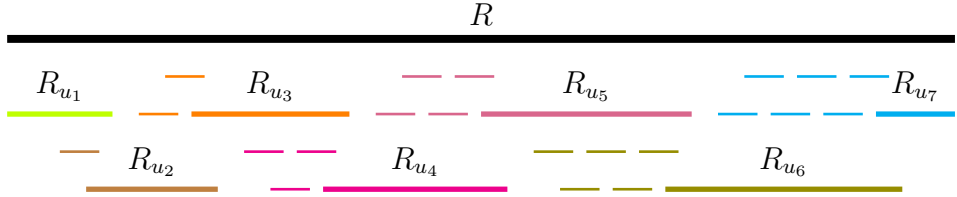
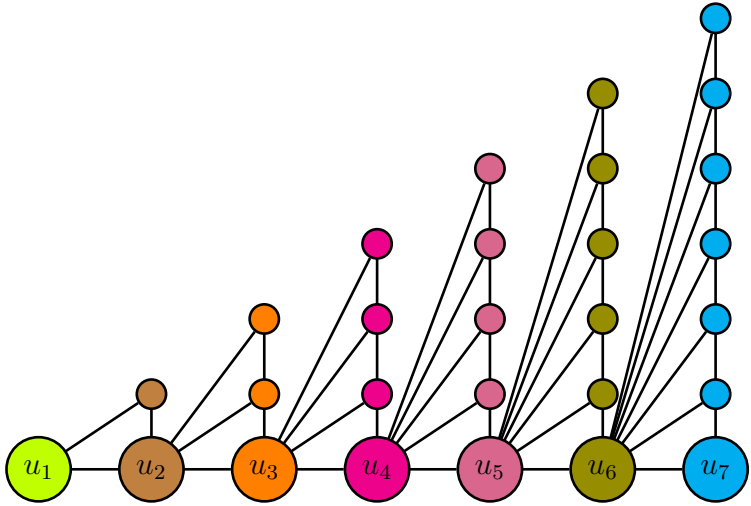
\begin{figure}
    \begin{subfigure}[b]{\textwidth}
    \centering
        \begin{tikzpicture}[xscale = 0.7, every path/.style={line width=1pt}]
            \draw[line width=3pt] (0,2) -- (36/2,2) node[midway,above] {$R$};
             \pgfmathsetmacro{\x}{0}
             \foreach \x/\color [count=\i from 1] in {0/lime,3/brown,7/orange,12/magenta,18/purple!60,25/olive,33/cyan}{
                 \pgfmathsetmacro{\xx}{\x+\i+3}
                 \ifthenelse{\i = 7}{\pgfmathsetmacro{\xx}{\x+3}}{}
                 \draw[draw=\color, line width=2pt] (\x/2,{Mod(\i,2)}) -- (\xx/2,{Mod(\i,2)}) node[midway,above] {$R_{u_\i}$};
                 \ifthenelse{\i>1}{
                     \foreach \ii in {2,...,\i}{
                         \draw[draw=\color] ( \x/2-\ii/2+1/2 , {Mod(\i,2) + Mod(\ii+1,2)/2} ) -- (\x/2-\ii/2+1.25 , {Mod(\i,2) + Mod(\ii+1,2)/2} );
                     }
                 }
            }
        \end{tikzpicture}
        \caption{The interval representation of the graph $G_{n,3}$ with $q=7$.}
        \label{fig:rpzGn3}
     \end{subfigure}
     \medskip{}
     
    \begin{subfigure}[b]{\textwidth}
\centering 
        \begin{tikzpicture}[xscale=1.5,
                            every node/.style = {draw,fill=white,circle},
                            every path/.style={line width=1pt}]
             \foreach \i/\color in {7/cyan,6/olive,5/purple!60,4/magenta,3/orange,2/brown}{
             \draw (\i, 1) node[fill=\color] (u{\i,1}) {$u_\i$}
             \foreach \j in {2, ..., \i}{
                 -- (\i, \j) node[fill=\color] (u{\i,\j}) {}
             };
             \pgfmathsetmacro{\ii}{\i-1}
             \foreach \j in {1, ..., \i}{
             \draw (u{\i,\j}) to (\ii, 1);
             }
             }
            \draw (1,1) node[fill=lime] (u{1,1}) {$u_1$};
        \end{tikzpicture}
        \caption{The graph $G_{n,3}$ with $q=7$.}
        \label{fig:Gn3}
    \end{subfigure}
    \caption{The construction of Theorem~\ref{th:pwlb}.}
\end{figure}

\begin{proof}
The proof is by induction on $n$ and $k$.
When $k = n$, $G_{n , n}$ is the clique on $n$ vertices, where every induced path has order at most 2, which is less than $n^{\frac{2}{n}} + 1$. 
When $k = 2$, $G_{n,2}$ is the (induced) path on $n$ vertices, which again satisfies the desired statement for every $n \geq 2$.

For any $n\geq k$, let $q = \floor{n^{\frac{2}{3}}+1}$. When $k=3$, $G_{n,3}$ is constructed from a collection $\{P_i\}_{i \in \intv{1}{q}}$ of $q$ paths, where for every $i \in \intv{1}{q}$ the path $P_i$ has order $i$ and an endpoint called $u_i$, by connecting $u_i$ to all the vertices of $P_{i+1}$ for every $i \in \intv{1}{q-1}$.
See Figure~\ref{fig:Gn3} for a depiction of a small case and Figure~\ref{fig:rpzGn3} for an interval representation $\mathcal{R} = (R, \{R_v\}_{v \in V(G)})$ of it, where each $R_v$ is drawn below $R$ respecting the x-axis.

We can see that at most 3 intervals intersect ($R_{u_i}$ and the intervals representing two consecutive vertices of $P_{i+1}$ for each $i$), thus $G_{n,3}$ is an interval graph with clique number $3$.

The number of vertices of $G_{n,3}$ is $\sum\limits_{i=1}^{q} |P_i|=\frac{q(q+1)}{2} \ge \frac{n^{\frac{2}{3}}(n^{\frac{2}{3}}+1)}{2}$ since $q\ge n^\frac{2}{3}$, which is greater than~$n$. $G_{n,3}$ admits a Hamiltonian path starting in $u_q$ that, for each~$i$ from~$q$ to~$2$, follows $P_i$ from $u_i$ to its other endpoint, goes to $u_{i-1}$, and repeats the same process.

Let us now bound the maximum order of an induced path in $G_{n,3}$.
Let $Q$ be an induced path of $G_{n,3}$ and let $i$ denote the minimum integer such that $Q$ has a vertex from $P_i$; clearly $Q$ has at most $i$ vertices from this path.
Let $j \in  \intv{i+1}{q}$ and observe that every vertex of $P_j$ has $u_{j-1}$ as unique neighbor in $\{P_{j'}\}_{j'<j}$. As $Q$ is induced we deduce that it contains at most one vertex of $P_j$.
This holds for each of the $q-i$ paths of $\{P_j\}_{j \in \intv{i+1}{q}}$, so we get the bound $|Q| \leq q$, as desired. This concludes the proof for the case $k=3$.

So we now take $n > k \geq 4$, and assume that $G_{n', k'}$ is defined and satisfies the statement for every $k' < k$ and $n' < n$ such that $2\leq k' \leq n'$.
To construct $G_{n,k}$, we proceed as follows.

Let $q = \floor{n^{\frac{2}{k}}+1}$.
If $q \geq n$ then the graph $G_{n,k} = P_q$ clearly satisfies the desired statement.
Otherwise, we set $n' = \ceil{\frac{n-q}{q - 1}}$; observe that $n' \geq 2$.
We construct $G_{n, k}$ from the disjoint union of a path $Q = v_1 \dots v_{q}$ and $q-1$ copies $H_1, \dots, H_{q-1}$ of $G_{n', k-2}$ by connecting $v_{i}$ and $v_{i+1}$ to all vertices of $H_i$, for every $i \in \intv{1}{q-1}$.

To show that $G_{n ,k}$ is an interval graph, we now provide an interval representation of it (see Figure~\ref{fig:kgeq4} for an illustration).
By our induction hypothesis, $H_i$ admits an interval representation $\mathcal{M}^i = (M^i, \{M^i_v\}_{v \in V(H_i)})$ for every $i \in \intv{1}{q-1}$. Let $M$ denote the concatenation of the paths $M^1, \dots, M^{q-1}$. That is, $M$ is obtained from the disjoint union of these paths by adding an edge between an endpoint of $M^1$ and one endpoint of $M^2$, from the other endpoint of $M^2$ to one endpoint of $M^3$, and so on. Clearly $\left (M , \bigcup_{i=1}^{q-1} \{M^i_v\}_{v\in V(H_i)} \right)$ is an interval representation of the disjoint union of the $H_i$'s.
For every $i \in \intv{2}{q-1}$, let $M_{v_i} = M[V(M^{i-1}) \cup V(M^{i})]$ and let $M_{v_1} = M^1$ and $M_{v_{q}} = M^{q-1}$. Observe that $(M, \{M_{v_i}\}_{i \in \intv{1}{q}})$ is an interval representation of $Q$. Furthermore, $\mathcal{M} = (M, \{M_{u_i}\}_{i \in \intv{1}{q}} \cup \bigcup_{i=1}^{q-1} \{M^i_v\}_{v\in V(H_i)})$ is an interval representation of $G_{n, k}$.
This proves that $G_{n,k}$ is an interval graph.

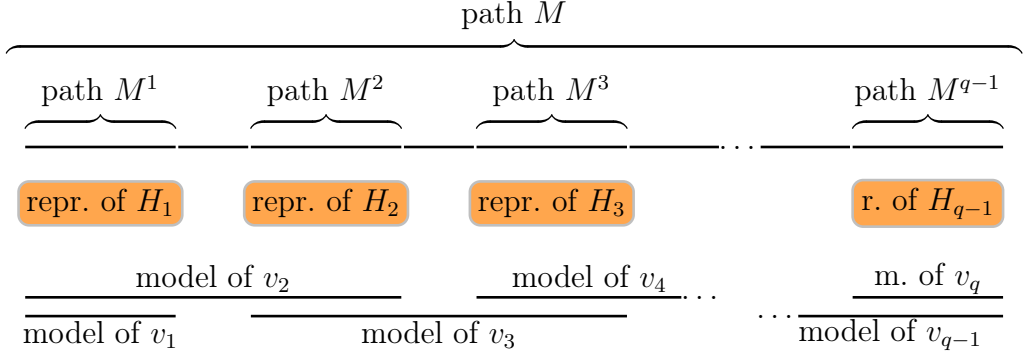
\begin{figure}
\centering
\begin{tikzpicture}[every path/.style = {line width = 1pt},
                    every node/.style = {draw=none, fill=none, inner sep=0}]
\begin{scope}
% H_1
\draw (0,0) node (f0) {} -- ++(2,0) node (l0) {};
\draw [decorate, decoration = {calligraphic brace, amplitude=5pt}] ([shift={(0,0.25)}]f0.center) -- node[midway, yshift=0.5cm] {path $M^1$} ([shift=({0, 0.25})]l0.center);
\draw (1, -0.75) node[rectangle, rounded corners, draw=lightgray, fill=orange!70, minimum width=2cm, inner sep=3pt] {repr.\ of $H_1$};
\end{scope}

\begin{scope}[xshift=3cm]
% H_2
\draw(0,0) node (f1) {} -- ++(2,0) node (l1) {};
\draw [decorate, decoration = {calligraphic brace, amplitude=5pt}] ([shift={(0, 0.25)}]f1.center) -- node[midway, yshift=0.5cm] {path $M^2$} ([shift=({0, 0.25})]l1.center);
\draw (1, -0.75) node[rectangle, rounded corners, draw=lightgray, fill=orange!70, minimum width=2cm, inner sep=3pt] {repr.\ of $H_2$};
\end{scope}

\begin{scope}[xshift = 6cm]
% H_3
\draw (0,0) node (f2) {} -- ++(2,0) node (l2) {};
\draw [decorate, decoration = {calligraphic brace, amplitude=5pt}] ([shift={(0, 0.25)}]f2.center) -- node[midway, yshift=0.5cm] {path $M^3$} ([shift=({0, 0.25})]l2.center);
\draw (1, -0.75) node[rectangle, rounded corners, draw=lightgray, fill=orange!70, minimum width=2cm, inner sep=3pt] {repr.\ of $H_3$};
\end{scope}

\begin{scope}[xshift = 11cm]
% H_q-1
\draw (0,0) node (f3) {} -- ++(2,0) node (l3) {};
\draw [decorate, decoration = {calligraphic brace, amplitude=5pt}] ([shift={(0, 0.25)}]f3.center) -- node[midway, yshift=0.5cm] {path $M^{q-1}$} ([shift=({0, 0.25})]l3.center);
\draw (1, -0.75) node[rectangle, rounded corners, draw=lightgray, fill=orange!70, minimum width=2cm, inner sep=3pt] {r.\ of $H_{q-1}$};
\end{scope}

\draw (l0) -- (f1) (l1) -- (f2) (l2) -- node[midway, circle, draw=none, fill=white] {$\dots$} (f3);
\draw [decorate, decoration = {calligraphic brace, amplitude=5pt}] ([shift={(-0.25,1.25)}]f0.center) -- node[midway, yshift=0.5cm] {path $M$} ([shift={(0.25, 1.25)}]l3.center);
\begin{scope}[yshift = -3cm]
\draw ([yshift = -2.25cm]f0.center) -- node[midway, yshift=-0.25cm] {model of $v_1$} ([yshift = -2.25cm]l0.center);
\draw ([yshift = -2cm]f0.center) -- node[midway, yshift=0.25cm] {model of $v_2$} ([yshift = -2cm]l1.center);
\draw ([yshift = -2.25cm]f1.center) -- node[midway, yshift=-0.25cm] {model of $v_3$} ([yshift = -2.25cm]l2.center);
\draw ([yshift = -2cm]f2.center) -- node[midway, yshift=0.25cm] {model of $v_4$} ++(3,0) node[draw=none, circle, fill=white] {$\dots$};
\draw ([yshift = -2.25cm]l3.center) -- node[midway, yshift=-0.25cm] {model of $v_{q-1}$} ++(-3,0) node[draw=none, circle, fill=white] {$\dots$};
\draw ([yshift = -2cm]l3.center) -- node[midway, yshift=0.25cm] {m.\ of $v_{q}$} ([yshift = -2cm]f3.center);
\end{scope}
\end{tikzpicture}
\caption{The construction of an interval representation for $G_{n,k}$ when $k > 3$.}
\label{fig:kgeq4}
\end{figure}

The graph $G_{n,k}$ is composed of the path $Q$ on $q$ vertices and $q-1$ copies of $G_{n',k-2}$ which have (by induction) at least $n'$ vertices each, so we have
\begin{align*}
|G_{n, k}| &\geq q + (q-1) n'\\
    &= q + (q-1) \ceil{\frac{n-q}{q - 1}}\\
    &\geq n.
\end{align*}

We now show that the clique number of $G_{n, k}$ is at most $k$.
Let $K$ be a maximum clique of $G_{n,k}$. By Lemma~\ref{lem:helly}, there exists a node $x$ of $M$ such that $K$ consists of all vertices of $G$ whose model (in $\mathcal{M}$) contains $x$.
Let $j$ be such that $x \in M^j$ (such a vertex exists by definition of $M$), we then have
\begin{align*}
V(K) = &\quad (\; V(K) \cap V(Q)\; ) \quad \cup \quad (\; V(K) \cap V(H_j)\; ).
\end{align*}

The first intersection has size at most 2 since $Q$ is an induced path.
Recall that $H_j$ is a copy of $G_{n',k'}$ thus its clique number is at most $k-2$. Therefore the second intersection has size at most $k-2$. We deduce $|K| \leq (k-2) + 2 = k$.

Let us show that $G_{n, k}$ has a Hamiltonian path.
For every $i$, let $R_i$ denote a Hamiltonian path of $H_i$, which exists by induction hypothesis.
By construction, every vertex of $R_i$ (in particular its endpoints) is adjacent to both $v_{i-1}$ and $v_i$ in $G_{n, k}$.
Therefore $v_1 R_1 v_1 R_2 \dots v_{q-1} R_{q-1} v_{q}$ is a (Hamiltonian) path in $G$.

We proved that $G_{n, k}$ is an interval graph on at least $n$ vertices, with clique number at most $k$, and with a Hamiltonian path. In order to conclude the proof, it remains to prove that $G$ does not have an induced path longer than $n^{2/k} + 1$.
Let $P$ be an induced path of $G$ of maximum length.

We first consider the case where $|V(P) \cap V(H_i)| \geq 2$ for some $i$.
As for  every $w\in V(H_i)$, $N(w) \setminus V(H_i) = \{v_{i-1}, v_i\}$, we deduce that $V(P) \subseteq V(H_i)$, otherwise $P$ would not be induced. By induction hypothesis we get $|P| \leq n'^{\frac{2}{k-2}} + 1$.
Observe that
\begin{align*}
n' & \leq \frac{n-q}{q-1} + 1 &\text{(from the definition)}\\
& \leq \frac{n}{q-1}\\
& \leq n^{\frac{k-2}{k}}.
\end{align*}
So $|P| \leq n^{\frac{2}{k}} + 1$, as required.
We now consider the remaining case where $|V(P) \cap V(H_i)| \leq 1 $ for all $i$.
Notice that an internal vertex $w$ of $P$ cannot belong to $H_i$ for some $i$. Indeed the only neighbors of $w$ outside $H_i$ are $v_{i-1}$ and $v_{i}$, which are adjacent. We then get the three following cases:

\begin{itemize}
\item either $V(P)$ does not intersect any $H_i$, in which case $P = Q$ so $|P| \leq n^\frac{2}{k} + 1$;
\item or there is an $i$ such that $P$ starts at some vertex $w \in V(H_i)$ and does not intersect $H_j$ for all $j \neq i$, in which case $P = w v_i v_{i+1} \dots v_q$ or $P = w v_{i-1} v_{i-2} \dots v_1$ and $|P| \leq |Q| \leq n^\frac{2}{k} + 1$;
\item or there are two integers $i,j$ with $i<j$ such that $P$ starts from some vertex $w_i \in V(H_i)$ and ends at some vertex $w_j$ of $V(H_j)$, in which case $P = w_i v_{i} v_{i+1} \dots v_{j-1} w_j$ and again $|P| \leq |Q| \leq n^\frac{2}{k} + 1$.
\end{itemize}
\end{proof}

\section{Induced paths in graphs of bounded treewidth}
\label{sec:tw}

In this section we show that the maximum function $f_k$ such that property \lpp{} holds for graphs of treewidth less than $k$ is such that $f_k(n) \geq \frac{1}{4}n^{1/k}$ (Theorem~\ref{th:twintro}).

Let $G$ be a graph and let $\T =(T, \{T_v\}_{v\in V(G)})$ be a tree representation of~$G$.
The \emph{weight of a path} $P$ of $T$ is the number of vertices $v$ of $G$ such that $T_v$ intersects $P$. The \emph{weight of a node} $x$ of $T$ is the maximum weight of a path from $x$ to a leaf minus $|\beta_\T(x)|$, and is noted $w_{\T}(x)$ (or $w(x)$ when there is no ambiguity on the tree representation).

It is well-known that the order of a tree is upper-bounded by a function of its height and maximum degree. The following lemma extends this statement to graphs of bounded treewidth that have a Hamiltonian path.

\begin{lemma}\label{lem:weight}
Let $k, w \in \N$ and let $G$ be a graph that has a Hamiltonian path.
If there is a tree representation of $G$ of width less than $k$ that has a node of weight at most $w$, then $|G|\leq (k+1)^{w+1}-1$.
\end{lemma}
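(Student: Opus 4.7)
The plan is to induct on $w$ and prove the slightly stronger statement that \emph{if $G$ admits a tree representation of width less than $k$ with a node of weight at most $w$, then every path in $G$ has order at most $(k+1)^{w+1}-1$}; applied to a Hamiltonian path this immediately gives the lemma. For the base case $w=0$, I would fix a node $x$ with $w_\T(x)=0$: then every $x$-to-leaf path has weight equal to its lower bound $|\beta(x)|$, so no vertex whose model is not in $\beta(x)$ can have its model meet any such path, and since (rooting $T$ at $x$) every node of $T$ lies on some $x$-to-leaf path this forces $V(G)=\beta(x)$ and $|G|\le k=(k+1)^{1}-1$.

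For the induction step, fix a node $x$ with $w_\T(x)\le w$, set $X=\beta(x)$, root $T$ at $x$, and consider any path $P$ of $G$. By Remark~\ref{rem:hairsplit}, removing $X$ cuts $P$ into at most $|X|+1\le k+1$ subpaths, each contained in a single connected component of $G-X$ and therefore, by Remark~\ref{rem:conntree}, in some set $V_i^{\mathrm{int}}=\{v\in V(G):T_v\subseteq T_i\}$ attached to a subtree $T_i$ hanging off $x$. Thus it suffices to prove the sub-claim that every path in $G[V_i^{\mathrm{int}}]$ has order at most $(k+1)^{w}-1$: summing then yields $|P|\le k+(k+1)((k+1)^{w}-1)=(k+1)^{w+1}-1$.

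For the sub-claim I would invoke the induction hypothesis on $\T_i=(T_i,\{T_v\}_{v\in V_i^{\mathrm{int}}})$, which has width less than $k$. Writing $h_\T(x)$ for the maximum weight of an $x$-to-leaf path (so that $w_\T(x)=h_\T(x)-|\beta(x)|$), a short calculation based on the fact that no vertex of $V_i^{\mathrm{int}}$ has a model through $x$ gives $h_\T(x)=|\beta(x)|+\max_i h_{\T_i}(x_i)$. Hence $h_{\T_i}(x_i)\le w_\T(x)\le w$, and as soon as $|\beta_{\T_i}(x_i)|\ge 1$ we obtain $w_{\T_i}(x_i)\le w-1$ and the induction applies.

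The main obstacle is the degenerate case $|\beta_{\T_i}(x_i)|=0$, where no model of $V_i^{\mathrm{int}}$ touches $x_i$ and the weight at $x_i$ fails to drop. In that case $V_i^{\mathrm{int}}$ partitions across the children of $x_i$ in $T_i$ and $G[V_i^{\mathrm{int}}]$ decomposes as a vertex-disjoint union of the $G[V_{i,j}^{\mathrm{int}}]$ with no cross-edges (models in distinct components of $T_i-x_i$ are disjoint), so any path in $G[V_i^{\mathrm{int}}]$ lies in a single piece. I would then walk down $T_i$ along the branch containing this path, skipping bag-empty nodes (each of which contributes $0$ to any path weight, so the bound $h\le w$ is preserved along the walk), until reaching the first node $y$ of positive bag; such a $y$ must exist whenever the path is non-empty. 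At $y$ the weight is at most $w-1$, and the induction hypothesis applied to the restricted tree representation delivers the claimed bound.
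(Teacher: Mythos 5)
Your proof is correct and follows essentially the same strategy as the paper's: induction on $w$, removing the bag $\beta(x)$ of a low-weight node to cut the (Hamiltonian) path into at most $k+1$ pieces, and showing that the weight available to each piece drops by at least one. The only substantive difference is that the paper defines each subtree $T_i$ as the union of the models of the vertices of the $i$-th subpath --- which makes $\beta_{\T_i}(x_i)$ automatically non-empty, so the weight drop is immediate --- whereas your choice of $T_i$ as a component of $T\setminus\{x\}$ forces you to handle the empty-bag case separately, via the descent argument, which does work.
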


\begin{proof}
Let $k \in \N$.
The proof is by induction on $w$.

If $w = 0$ then for every graph $G$ and tree representation $\T$ as in the statement of the lemma and $x$ node of weight zero we have $V(G)\subseteq \beta_\T(x)$ and there are at most $k$ vertices in $\beta_\T(x)$ so the claimed bound holds.
So we now suppose that $w\geq 1$ and that the statement is true for every weight~$w' < w$.

We consider a graph $G$ with a Hamiltonian path $P$ and a tree representation $\T =(T, \{T_v\}_{v \in V(G)})$ of width less than $k$ that has a node $x$ such that $w_\T(x)\leq w$.
Let us consider the graph $G\setminus \beta_\T(x)$. Removing the at most $k$ vertices of $\beta_\T(x)$ cuts $P$ into $t\leq k+1$ subpaths $P_1,\dots P_t$.
For each $i$, let $G_i$ be the graph induced by $P_i$ and let $T_i$ be the union of the $T_v$'s for $v \in V(G_i)$. Then $T_i$ is a subgraph of $T$ and by Remark~\ref{rem:conntree} it is connected. Observe that $\T_i=(T_i, \{T_v\}_{v\in V(G_i)})$ is a tree representation of~$G_i$.

Let $x_i$ be the node of $T_i$ that is the closest to $x$ in $T$.
Let $w_i$ be the weight of $x_i$ in $T_i$ and $Q_i$ a path in $T_i$ of maximum weight from $x_i$ to a leaf $l$. Let us consider the path $Q$ in $T$ from $x$ to $l$; note that $Q_i$ is a subpath of $Q$.
So $x_i$ belongs to $Q$, and by construction of $T_i$, $\beta_{\T_i}(x_i)$ is not empty and belongs to $G\setminus \beta_\T(x)$ thus there is at least one vertex in $\beta_{\T}(x_i)\setminus \beta_\T(x)$.
This implies that $w_\T(x)\geq w_i + 1$, and thus $w_i\leq w_\T(x)-1 \leq w-1$.

By construction, $G_i$ admits a Hamiltonian path and we just proved that in the tree representation $(T_i, \{T_v\}_{v\in V(G_i)})$ of width less than $k$ there is a node $x_i$ of weight at most~$w-1$. By induction, $G_i$ has at most $(k+1)^{w}-1$ vertices.
As $V(G) = \beta_\T(x) \cup \bigcup_{i=1}^{t} V(G_i)$ with $t \leq k+1$ and $|\beta_\T(x)|\leq k$ we get $|G|\leq (k+1)( (k+1)^w - 1 )+ k = (k+1)^{w+1} -1$.
\end{proof}

\begin{corollary}\label{cor:height}
In every tree representation of width less than $k$ of a graph of order $n$ there is a node of weight at least~$\log_{k+1}(n + 1) - 1$.
\end{corollary}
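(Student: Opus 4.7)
The plan is to obtain the statement as an immediate contrapositive of Lemma~\ref{lem:weight}. Given a graph $G$ of order $n$ (which, consistent with the ambient setting of the section, we may assume has a Hamiltonian path by Remark~\ref{rem:hpp}) and a tree representation $\T$ of $G$ of width less than $k$, I would pick a node $x$ of $\T$ minimizing the weight and set $w = w_\T(x)$. Since there is then a node of $\T$, namely $x$, of weight at most $w$, Lemma~\ref{lem:weight} applies and yields $n = |G| \leq (k+1)^{w+1} - 1$. Rearranging this gives $(k+1)^{w+1} \geq n + 1$, and taking $\log_{k+1}$ on both sides produces $w \geq \log_{k+1}(n+1) - 1$. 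Hence the node $x$ of minimum weight already witnesses the desired lower bound, and a fortiori so does any node of maximum weight.

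There is really no significant obstacle here: the corollary is just the bound of Lemma~\ref{lem:weight} read in reverse. The only minor subtlety worth mentioning is that $w$ is a non-negative integer while $\log_{k+1}(n+1) - 1$ need not be, but this is harmless because the derived inequality holds as stated over the reals; $x$ remains a valid witness.
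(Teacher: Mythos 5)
Your proof is correct and is exactly the intended one: the paper gives no separate argument, and the corollary is just the contrapositive of Lemma~\ref{lem:weight} applied to a node of minimum (indeed, any) weight. You are also right to flag that the Hamiltonian-path hypothesis of Lemma~\ref{lem:weight} must be carried over implicitly (the corollary as literally stated fails for, e.g., an edgeless graph represented on a star), which is harmless since the corollary is only invoked for such graphs in the proof of Theorem~\ref{th:twintro}.
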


In a graph $G$, the \emph{contraction} of an edge $uv$ is the operation that creates a new vertex $w$ adjacent to the neighbors of $u$ and $v$ and then deletes $u$ and $v$. We say that a graph $H$ is a \emph{contraction} of a graph $G$ if $H$ can be obtained from $G$ after a (possibly empty) sequence of edge contractions.
In the sequel we use Corollary~\ref{cor:height} to extract a graph with a long path and bounded pathwidth from a graph with a large path and bounded treewidth. The obtained graph will be a contraction of the original one, which is interesting for us because of the following property.

\begin{remark}\label{rem:contrind}
Let $H$ be an induced subgraph or a contraction of a graph $G$. If $H$ has an induced path of order $n$, then so does $G$.
\end{remark}

%In the next proof we will need the following easy observation.
%\begin{remark}\label{rem:imtw}
%Let $G$ be a graph and let $H$ be a graph obtained after contracting an edge $uv \in E(G)$ into a new vertex~$w$.
%If $(T, \{T_v\}_{v \in V(G)})$ is a tree representation of $G$ of width less than $k$, then $(T, \{T_v\}_{v \in V(H)})$ is a tree representation of $H$ of width less than $k$, where $T_w = T_u \cup T_v$. The subgraph $T_w$ is indeed connected as $uv \in E(G)$, so $T_u$ and $T_v$ intersect.
%\end{remark}

\begin{lemma}\label{lem:minorpww}
Let $k,w \in \N$ and let $G$ be a graph that has a Hamiltonian path and order~$n$.
If $G$ admits a tree representation of width less than $k$ that has a path of weight $w$, then there is a contraction of $G$ that has a Hamiltonian path and  is of order $w$ and pathwidth less than $k$.
\end{lemma}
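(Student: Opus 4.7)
The plan is to contract suitably chosen segments of the given Hamiltonian path. Writing $P = v_1 \dots v_n$ for this path and $\T = (T, \{T_v\}_{v \in V(G)})$ for the tree representation of width less than $k$ provided by the hypothesis, with $Q$ a path of $T$ of weight $w$, let $W$ be the set of vertices $v$ whose model meets $Q$, so $|W| = w$. If $i_1 < \dots < i_w$ are the indices with $v_{i_j} \in W$, I would partition $V(G)$ along $P$ into $w$ consecutive segments $S_1, \dots, S_w$, each a subpath of $P$ containing exactly one $W$-vertex; for instance, take $S_1 = \{v_1, \dots, v_{i_1}\}$, $S_j = \{v_{i_{j-1}+1}, \dots, v_{i_j}\}$ for $1 < j < w$, and $S_w = \{v_{i_{w-1}+1}, \dots, v_n\}$. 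Since each $S_j$ is connected in $G$, contracting each segment to a single vertex $u_j$ is a valid sequence of edge contractions, yielding a contraction $G'$ of order $w$. The edge $v_{i_j} v_{i_j + 1}$ of $P$ survives in $G'$ as $u_j u_{j+1}$, so $G'$ inherits the Hamiltonian path $u_1 u_2 \dots u_w$.

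What remains is to show $\pw(G') < k$, which I would do by exhibiting a path representation $\mathcal{P} = (Q, \{M_j\}_{j=1}^w)$ of $G'$, where $M_j := Q \cap \bigcup_{v \in S_j} T_v$. Each $\tilde{T}_j := \bigcup_{v \in S_j} T_v$ is a connected subtree of $T$ by Remark~\ref{rem:conntree}, so its intersection $M_j$ with the subpath $Q$ is again a subpath; $M_j$ is nonempty because $S_j$ contains the $W$-vertex $v_{i_j}$. The width of $\mathcal{P}$ is less than $k$: at any node $t$ of $Q$, the indices $j$ with $t \in M_j$ inject into $\beta_\T(t)$ by sending $j$ to some $v \in S_j$ with $t \in T_v$, using the disjointness of the $S_j$'s, and $|\beta_\T(t)| \leq k$.

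The main obstacle is verifying the edge condition of $\mathcal{P}$: if $u_j u_{j'} \in E(G')$, then $M_j \cap M_{j'} \neq \emptyset$. Such an edge arises from some $v v' \in E(G)$ with $v \in S_j$, $v' \in S_{j'}$, giving a node $t \in T_v \cap T_{v'} \subseteq \tilde{T}_j \cap \tilde{T}_{j'}$. If $t$ lies on $Q$ we are done. Otherwise, let $r$ be the unique node of $Q$ closest to $t$ in $T$: any path in $T$ from $t$ to $Q$ passes through $r$, and since $\tilde{T}_j$ is a connected subtree containing both $t$ and some node of $Q$ (via $v_{i_j}$), it must contain $r$; the same argument gives $r \in \tilde{T}_{j'}$, and hence $r \in M_j \cap M_{j'}$. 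This projection from an intersection off $Q$ to an intersection on $Q$ is the crux of the argument; once it is in place, $\mathcal{P}$ is a valid path representation of $G'$ of width less than $k$, and the lemma follows.
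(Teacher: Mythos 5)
Your proof is correct and rests on the same core idea as the paper's: contract edges of the Hamiltonian path so that every vertex whose model misses the path $Q$ of $T$ gets absorbed into a vertex whose model meets it, and then read off a path representation on $Q$ itself. The paper packages this as an induction on the number of vertices with $V(T_v)\cap V(Q)=\emptyset$, contracting one such edge at a time and only restricting the models to $Q$ in the base case, whereas you do all contractions at once via the segment partition and then verify the edge condition directly; your ``gate'' argument (an intersection of $\tilde{T}_j$ and $\tilde{T}_{j'}$ off $Q$ projects to the closest node $r\in Q$) is exactly the step the paper's base case asserts with ``by the properties of tree representations,'' so you have in fact supplied a detail the paper leaves implicit.
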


\begin{proof}
Let $\T = (T, \{T_v\}_{v \in V(G)})$ be the tree representation as in the statement of the lemma and $R$ the path of $T$ of weight $w$. Let $P$ be a Hamiltonian path of~$G$.
We prove the statement by induction on the number $p$ of vertices $v$ of $G$ such that $V(T_v) \cap V(R) = \emptyset$.
In the case $p=0$, for every vertex $v\in V(G)$ the subtree $T_v$ intersects $R$. By the properties of tree representations, we have:
\begin{itemize}
\item for every $v\in V(G)$, $V(T_v)\cap V(R)$ induces a (connected) subpath of $R$, that we call $R_v$; and
\item for every $u,v \in V(G)$, $T_u$ and $T_v$ share a vertex if and only if they share a vertex of $R$.
\end{itemize}
Therefore $(R, \{R_v\}_{v\in V(G)})$ is a path representation of $G$. Clearly it has $w$ vertices and width less than $k$ hence we are done.

So we may assume in the sequel that $p>0$ and that the statement holds for all values smaller than~$p$. As $p > 0$ and $P$ is a Hamiltonian path, there is an edge $uv \in E(P)$ such that $T_u$ intersects $R$ while $T_v$ does not.
Let us consider the graph $G'$ obtained by contracting $uv$ into a new vertex $y$ and setting $T_y = T_u \cup T_v$ (which is connected as $T_u$ and $T_v$ intersect).
We call $\T' = (T, \{T_z\}_{z \in V(G')})$ the corresponding tree representation.
Observe that for every $t \in V(T)$,
we have $\beta_{\T'}(t) = \beta_\T(t)$ if $t \notin T_u \cup T_v$ and $\beta_{\T'}(t) = \beta_\T(t) \setminus \{u,v\} \cup \{y\}$ otherwise. Therefore the width of $\T'$ is less than $k$.
Also, observe that the weight of $R$ in $\T'$ is still $w$. Applying the induction hypothesis on $G'$ yields the desired result.
\end{proof}

We are now ready to prove Theorem~\ref{th:twintro}, that we restate here for convenience.

\twintro*

\begin{proof}
The statement that we actually prove is the following, which implies the desired statement according to Remark~\ref{rem:hpp}.

\begin{itemize}
\item[] \textit{For every $k,n \in \N$, if $G$ is a graph of order at least $n$ and treewidth less than $k$ that has a Hamiltonian path, then $G$ has an induced path of order at least
$\frac{1}{4}(\log n)^{1/k}$.}
\end{itemize}

Let $k$, $n$, and $G$ be as in the statement above. The cases $n \leq 2$ or $k \leq 2$ are trivial and the case $n=k$ is handled by Theorem~\ref{th:pwintro} (as then $G$ has pathwidth at most $k$), so we suppose
$n>k\geq 3$.

By combining Corollary~\ref{cor:height} and Lemma~\ref{lem:minorpww} we obtain a contraction $G'$ of $G$ of order at least $\log_{k+1}(n+1)$ with pathwidth less than $k$ and that has a Hamiltonian path.

By Theorem~\ref{th:pwintro}, $G'$ admits an induced path $Q$ of order 
\[
|Q| \geq \frac{1}{3}(\log_{k+1}(n+1))^\frac{1}{k}
    \geq \frac{1}{3\cdot c} (\log (n+1))^{1/k}
\]
where $c < 1.3$ is the maximum of the function $k \mapsto \log(k+1)^{1/k}$. 
By Remark~\ref{rem:contrind} we deduce that $G$ has an induced path of the same order and we are done.
\end{proof}

\section{Induced paths in topological minor-closed classes}\label{sec:minclo}

In this section we use the decomposition theorem of Grohe and Marx for graphs excluding a topological minor (Theorem~\ref{th:structexcltopomin} in this paper) in order to prove Theorem~\ref{th:minclo}.
According to Grohe and Marx' result, such graphs admit tree decompositions where the bags\footnote{Actually the torsos, to be defined in one of the following sections.} are required to come from some prescribed graph classes.

We first show that \lpp{} holds with a polylogarithmic bound for graphs from these classes (Sections~\ref{sec:bdd} and \ref{sec:vort}) and then that it does too in tree representations where, intuitively, the interaction between bags is low (Section~\ref{sec:bda}).
These results are combined in Section~\ref{sec:sum} to finally prove Theorem~\ref{th:minclo}.

\subsection{Almost bounded degree graphs}\label{sec:bdd}

Let $\Delta, k \in \N$. We say that a graph $G$ has \emph{$(k,\Delta)$-almost bounded degree} if $G$ has a set of at most $k$ vertices whose removal yields a graph of maximum degree at most $\Delta$.
In this section we show that such graphs satisfy property \lpp{} with a logarithmic bound (that depends on $k$ and $\Delta$), that is the following lemma.

\begin{lemma}\label{lem:abdd}
  For every $\Delta, k\in \N$ there is a constant $c\in \R^+$ such that if a graph with $(k, \Delta)$-almost bounded degree has a path of order $n$, then it has an induced path of order at least $c \log n$.
\end{lemma}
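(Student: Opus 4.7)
The plan is to reduce to the bounded-degree case by removing the modulator $X$ of size at most $k$, to extract a long subpath lying in a single connected component of $G - X$, and then to invoke the classical observation that shortest paths are induced, together with the elementary volume bound on balls in bounded-degree graphs.

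Formally, let $X \subseteq V(G)$ with $|X| \le k$ be such that $H := G - X$ has maximum degree at most $\Delta$, and let $P$ be the path of order $n$ promised by the hypothesis. If $k = 0$, take $C$ to be the connected component of $G$ containing $P$. Otherwise, by Remark~\ref{rem:hairsplit} applied to $X$, some connected component $C$ of $H$ contains a subpath of $P$ of order at least $\frac{n - |X|}{|X| + 1} \ge \frac{n}{2(k+1)}$, as long as $n$ is larger than a constant depending on $k$. In either case $C$ is a connected induced subgraph of $G$, has maximum degree at most $\Delta$, and satisfies $|C| \ge \frac{n}{2(k+1)}$.

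Next, in any graph of maximum degree at most $\Delta$, the ball of radius $d$ around any vertex contains at most $1 + \Delta + \Delta^2 + \cdots + \Delta^d$ vertices (a standard BFS argument). Hence, whenever $|C|$ exceeds this quantity, $C$ contains two vertices at distance strictly greater than $d$, and any shortest path between such a pair is an induced path on at least $d+2$ vertices, since a chord would produce a shorter path and contradict minimality. Choosing $d$ of the order of $\log_{\Delta+1} |C|$ therefore yields an induced path of $C$, and hence of $G$, of order at least $c \log n$ for a suitable constant $c = c(k, \Delta) > 0$.

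The remaining subtleties, all of a bookkeeping flavor, concern small corner cases: when $\Delta \le 1$ the graph $G - X$ is a matching or edgeless, which forces $n$ itself to be bounded in terms of $k$ alone; and for very small $n$ the quantity $c \log n$ may drop below $1$. Both situations are resolved by shrinking $c = c(k, \Delta)$ so that $c \log n \le 1$ throughout the corresponding finite range, on which the trivial one-vertex induced path already satisfies the conclusion. I do not anticipate a genuine obstacle beyond this bookkeeping.
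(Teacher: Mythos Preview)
Your proposal is correct and follows essentially the same approach as the paper: the paper derives Lemma~\ref{lem:abdd} from Corollary~\ref{cor:bddeg} (the ball/BFS volume bound giving a logarithmic induced path in bounded-degree graphs) together with Corollary~\ref{lem:cdel} (deleting a constant-size set $X$ and passing to a large component via Remark~\ref{rem:hairsplit}), which is exactly the two-step reduction you carry out inline. The only difference is packaging: the paper states the two ingredients as separate reusable results, whereas you combine them directly.
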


This result is a direct consequence of Corollary~\ref{lem:cdel} (for deleting a constant number of vertices) and Corollary~\ref{cor:bddeg} (for graphs of bounded degree) that we prove below. They actually follow from more general statements, that we leave here as they may be useful in order to prove Conjecture~\ref{conj:esp}.

\begin{lemma}\label{lem:del}
Let $\mathcal{G}$ be a hereditary class of graphs such that for some $c,d\in \R^+$,
if a graph $G \in \mathcal{G}$ has a Hamiltonian path and order $n$, then $G$ has an induced path of order at least $c (\log n)^d$.

Let $\varepsilon \in (0,1)$
and let $\mathcal{G}_\varepsilon$ denote the class of graphs such that for every $G \in \mathcal{G}_\varepsilon$ there is a subset $X\subseteq V(G)$ of at most $n^{\varepsilon}$ vertices such that $G - X \in \mathcal{G}$.
Then there is a constant  $c'\in \R^+$ depending on $c$, $d$, and $\varepsilon$ such that
if a graph $G \in \mathcal{G}_\varepsilon$ has a Hamiltonian path and order $n$, then $G$ has an induced path of order at least $c' (\log n)^{d}$.
\end{lemma}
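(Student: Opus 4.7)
The plan is a direct ``hairsplitting'' argument based on Remark~\ref{rem:hairsplit}, combined with the hereditariness of $\mathcal{G}$ to preserve membership in $\mathcal{G}$ when passing to an induced subgraph.

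First I would take $G \in \mathcal{G}_\varepsilon$ with a Hamiltonian path $P$ of order $n$, and apply the definition of $\mathcal{G}_\varepsilon$ to obtain a set $X \subseteq V(G)$ with $|X| \leq n^{\varepsilon}$ such that $G - X \in \mathcal{G}$. Applying the second item of Remark~\ref{rem:hairsplit} to $P$ and $X$ yields a connected component $C$ of $G - X$ containing a subpath $P'$ of $P$ of order
\[
n' \ \geq\ \frac{n - |X|}{|X|+1}\ \geq\ \frac{n}{2 n^{\varepsilon}}\ =\ \tfrac{1}{2}n^{1-\varepsilon},
\]
the last bound valid once $n \geq 2^{1/(1-\varepsilon)}$; for smaller $n$ the conclusion is vacuous provided $c'$ is chosen small enough that $c'(\log n)^d \leq 1$, since any vertex is an induced path of order $1$.

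Next I would set $G' = G[V(P')]$. Because $P' \subseteq V(G) \setminus X$, $G'$ is an induced subgraph of $G - X$, and since $\mathcal{G}$ is hereditary, $G' \in \mathcal{G}$. Moreover $P'$ is by construction a Hamiltonian path of $G'$. Applying the hypothesis on $\mathcal{G}$ to $G'$ provides an induced path $Q$ in $G'$ of order at least $c(\log n')^d$. Since $G'$ is an induced subgraph of $G$, $Q$ is an induced path of $G$ as well.

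It then remains to compare $\log n'$ with $\log n$. From $n' \geq \tfrac{1}{2} n^{1-\varepsilon}$ we get $\log n' \geq (1-\varepsilon)\log n - 1$, which for $n$ large enough (depending on $\varepsilon$) is at least $\tfrac{1-\varepsilon}{2}\log n$. This gives
\[
|Q|\ \geq\ c\,(\log n')^d\ \geq\ c\left(\tfrac{1-\varepsilon}{2}\right)^{d}(\log n)^{d},
\]
so taking $c'$ to be the minimum of $c\bigl(\tfrac{1-\varepsilon}{2}\bigr)^{d}$ and a small enough constant handling the finitely many exceptional small values of $n$ yields the claim. I do not expect any serious obstacle here: the only subtle point is choosing $c'$ so that both the asymptotic regime (large $n$) and the trivial regime (small $n$, where a single vertex suffices) are covered uniformly, which is routine.
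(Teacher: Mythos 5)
Your proposal is correct and follows essentially the same route as the paper's proof: apply Remark~\ref{rem:hairsplit} to the modulator $X$, pass to the induced subgraph on the surviving subpath (which lies in $\mathcal{G}$ by hereditariness), and absorb the loss $\log n' \geq (1-\varepsilon)\log n - O(1)$ into the constant $c'$, handling the finitely many small $n$ by shrinking $c'$. The only cosmetic difference is that you should note the degenerate case $X=\emptyset$ (where Remark~\ref{rem:hairsplit} does not apply but $G\in\mathcal{G}$ directly), which the paper sidesteps by taking $|X|\geq 1$.
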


\begin{proof}
Let $G \in \mathcal{G}_\varepsilon$ and let $P$ be a Hamiltonian path of $G$. We may assume $n \geq 3$ otherwise $P$ is already an induced path of the desired length.
Let $X \subseteq V(G)$ be such that $G-X \in \mathcal{G}$ and $1 \leq |X| \leq n^{\varepsilon}$, which exists by definition of~$\mathcal{G}_\varepsilon$.
By Remark~\ref{rem:hairsplit}, $G - X$ has a component $H$ with a path of order at least $n' = \frac{n - |X|}{|X| + 1} \geq \frac{n}{2|X|}-1$. As $\mathcal{G}$ is hereditary, $H \in \mathcal{G}$ so it has an induced path of order at least

\begin{align*}
c (\log n')^d &\geq c \left (\log \left (\frac{n^{1-\varepsilon}}{2} - 1 \right ) \right )^d\\
    & \geq c' (\log n)^d
\end{align*}
for a suitable choice of the constant $c'>0$ (depending on $c$, $d$, and $\varepsilon$).
This path is an induced subgraph of $G$ (by Remark~\ref{rem:contrind}) so we are done.
\end{proof}

\begin{corollary}\label{lem:cdel}
Let $\mathcal{G}$ be a hereditary class of graphs such that for some $c,d\in \R^+$,
if a graph $G \in \mathcal{G}$ has a Hamiltonian path and order $n$, then $G$ has an induced path of order at least $c (\log n)^d$.

Let $k \in \N$
and let $\mathcal{G}_k$ denote the class of graphs such that for every $G \in \mathcal{G}_k$ there is a subset $X\subseteq V(G)$ of order at most $k$ vertices such that $G - X \in \mathcal{G}$.
Then there is a constant  $c'\in \R^+$ depending on $c$, $d$, and $k$ such that
if a graph $G \in \mathcal{G}_k$ has a Hamiltonian path and order $n$, then $G$ has an induced path of order at least $c' (\log n)^{d}$.
\end{corollary}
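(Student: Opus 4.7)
The plan is to observe that this statement is a direct specialization of Lemma~\ref{lem:del}, or equivalently to re-run its proof with the constant bound $|X|\leq k$ in place of the polynomial bound $|X|\leq n^{\varepsilon}$. I would follow the latter, direct route since it is marginally shorter and avoids the logical detour through $\mathcal{G}_{\varepsilon}$.

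Let $G\in \mathcal{G}_k$ have a Hamiltonian path and order $n$, and fix $X\subseteq V(G)$ with $|X|\leq k$ such that $G-X\in \mathcal{G}$. I would first apply Remark~\ref{rem:hairsplit} to extract a connected component $H$ of $G-X$ that contains a path of order at least $n' := \tfrac{n-|X|}{|X|+1}\geq \tfrac{n-k}{k+1}$. Since $\mathcal{G}$ is hereditary, the subgraph of $H$ induced by such a path belongs to $\mathcal{G}$ and has a Hamiltonian path, so by hypothesis it contains an induced path of order at least $c(\log n')^{d}$. This path is also induced in $G$. Because $k$ is fixed, one has $\log n' \geq \log \tfrac{n-k}{k+1} \geq \tfrac{1}{2}\log n$ for all $n$ beyond some threshold $n_0$ depending only on $k$, so the bound is at least $c \cdot 2^{-d}(\log n)^{d}$, giving the desired inequality for a suitable $c'>0$ depending only on $c$, $d$, and $k$.

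The only remaining point is to absorb the finitely many graphs with $n<n_0$. For every $n\geq 2$, a Hamiltonian path contains an induced edge, i.e.\ an induced path of order $2$, so shrinking $c'$ (as a function of $c,d,k$) makes the inequality $c'(\log n)^{d}\leq 2$ hold throughout this finite regime. There is no genuine obstacle here: the statement is strictly weaker than Lemma~\ref{lem:del}, since for any fixed $\varepsilon\in(0,1)$ we have $k\leq n^{\varepsilon}$ as soon as $n\geq k^{1/\varepsilon}$, so one could also simply invoke Lemma~\ref{lem:del} with an arbitrary $\varepsilon$ and dispose of small $n$ as just described.
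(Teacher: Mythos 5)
Your argument is correct and matches the paper's intent: the paper states this as an immediate corollary of Lemma~\ref{lem:del} with no separate proof, and your direct route is exactly that proof specialized to $|X|\leq k$ (with the small-$n$ regime absorbed into $c'$, just as the paper implicitly does). No substantive difference.
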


\begin{lemma}\label{lem:bddelta}
Let $c \in \R^+$, $d \in [0,1)$.
Let $\mathcal{G}$ be the class containing every graph $G$ with maximum degree at most $2^{c (\log |G|)^d}$. 
If a graph $G \in \mathcal{G}$ has a Hamiltonian path and order $n$, then $G$ has an induced path of order at least $\frac{1}{c}(\log n)^{1 - d}$.
\end{lemma}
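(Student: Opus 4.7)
The plan is to perform a breadth-first search from an arbitrary vertex of $G$ and extract a root-to-leaf path in the resulting BFS tree. Two elementary observations make this strategy work: in a graph of maximum degree $\Delta$ the BFS levels grow by a factor of at most $\Delta$, so a connected graph on $n$ vertices forces a BFS tree of depth at least roughly $\log_\Delta n$; and, crucially, any root-to-leaf path in a BFS tree is automatically induced in the host graph, because BFS levels coincide with graph distances from the root, which rules out any ``long'' chord.

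First I would set $n = |G|$ and $\Delta = 2^{c(\log n)^d}$. Since $G$ has a Hamiltonian path it is connected, so BFS from any vertex $r$ produces levels $L_0 = \{r\}, L_1, \ldots, L_h$ covering all of $V(G)$. Counting edges between consecutive levels gives $|L_i| \leq \Delta |L_{i-1}|$, so $|L_i| \leq \Delta^i$ by induction. Summing,
\[
n \;=\; \sum_{i=0}^{h} |L_i| \;\leq\; \sum_{i=0}^{h} \Delta^i \;\leq\; \Delta^{h+1},
\]
and therefore
\[
h + 1 \;\geq\; \log_\Delta n \;=\; \frac{\log n}{\log \Delta} \;\geq\; \frac{\log n}{c(\log n)^d} \;=\; \frac{(\log n)^{1-d}}{c}.
\]

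Next, I would argue that a root-to-leaf path $r = u_0, u_1, \ldots, u_h$ in the BFS tree is an induced subgraph of $G$. Because BFS computes shortest paths from $r$, each $u_i$ lies at distance exactly $i$ from $r$ in $G$. If some $u_i u_j$ with $j \geq i+2$ were an edge of $G$, then the walk $u_0 u_1 \cdots u_i u_j$ of length $i+1 < j$ would contradict $\mathrm{dist}_G(r, u_j) = j$. Hence the $h+1$ vertices of the path form an induced path of order at least $\frac{1}{c}(\log n)^{1-d}$, as required.

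The only mild caveats are the degenerate cases $n \leq 2$ or $\Delta \leq 1$, where the bound is at most $2$ and any edge from the Hamiltonian path already suffices, and ensuring $\Delta \geq 2$ when invoking $\sum_{i=0}^{h} \Delta^i \leq \Delta^{h+1}$ — both handled trivially. There is no real obstacle here: the proof is essentially a textbook BFS pigeonhole, and this is exactly the bounded-degree ingredient needed in Corollary~\ref{cor:bddeg} and Theorem~\ref{th:minclo}.
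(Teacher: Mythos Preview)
Your proof is correct and is essentially the same argument as the paper's: both bound the sizes of the distance classes from a fixed vertex by powers of $\Delta$ and use that a shortest path is induced. The only cosmetic difference is that the paper starts from an endpoint of a maximum induced path $Q$ and uses maximality to conclude that all vertices lie within distance $|Q|-1$, whereas you run BFS from an arbitrary root and exhibit the induced path explicitly as a root-to-leaf path.
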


\begin{proof}
Let $P$ be a Hamiltonian path of $G$.
Let $Q$ be an induced path of $G$ of maximum order and let $u$ be one of its endpoints.
Let $q = |Q|$.
For every $i \in \intv{0}{q-1}$, let $D_i$ denote the set of vertices at distance exactly $i$ from $u$.
As $D_i \subseteq N(D_{i-1})$ for every $i \in \intv{1}{q-1}$ we get $|D_i| \leq |D_{i-1}| \cdot 2^{c (\log n)^{d}}$. Therefore
\begin{align*}
n & = \sum_{i=0}^{q-1} |D_i| & \text{(by maximality of $Q$)}\\
& \leq \sum_{i=0}^{q-1} \left ( 2^{c (\log n)^d} \right )^i\\
& \leq \left ( 2^{c  (\log n)^d} \right )^q\\
\text{so}\quad q &\geq \frac{1}{c}(\log n)^{1 - d}.
\end{align*}
\end{proof}

The following corollary about graph classes of degree bounded by a constant $\Delta$ can be obtained from the previous lemma by taking $c = \log(\Delta)$ and $d = 0$.
\begin{corollary}\label{cor:bddeg}
Let $\Delta \in \N$. Every graph with maximum degree at most $\Delta$ that has a Hamiltonian path and order $n$ has an induced path of order at least $\frac{\log n}{\log \Delta}$.
\end{corollary}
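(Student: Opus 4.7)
The plan is to obtain Corollary~\ref{cor:bddeg} as a direct specialization of Lemma~\ref{lem:bddelta}. Setting $c = \log \Delta$ and $d = 0$, the hypothesis $\deg(v) \leq 2^{c(\log|G|)^d}$ of the lemma becomes $\deg(v) \leq 2^{\log \Delta} = \Delta$, so the class $\mathcal{G}$ appearing there is precisely the class of graphs of maximum degree at most $\Delta$. The conclusion of Lemma~\ref{lem:bddelta} then reads $|Q| \geq \frac{1}{\log \Delta}(\log n)^{1-0} = \frac{\log n}{\log \Delta}$, which is exactly the desired bound. So in principle the corollary requires no further work and the whole proof is one sentence.

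If a self-contained argument is preferred, I would simply repeat the BFS argument from Lemma~\ref{lem:bddelta} specialized to constant degree: take a longest induced path $Q$ in $G$, pick an endpoint $u$, and let $D_i$ be the set of vertices at distance exactly $i$ from $u$. By maximality of $Q$ every vertex of $G$ lies at distance at most $|Q|-1$ from $u$, since otherwise a shortest path (which is always induced) from $u$ to a vertex outside $\bigcup_{i < |Q|} D_i$ would yield an induced path of order greater than $|Q|$. The degree bound gives $|D_i| \leq \Delta \cdot |D_{i-1}|$, so
\begin{equation*}
n = \sum_{i=0}^{|Q|-1} |D_i| \leq \sum_{i=0}^{|Q|-1} \Delta^i \leq \Delta^{|Q|},
\end{equation*}
and taking binary logarithms gives $|Q| \geq \log n / \log \Delta$.

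There is no real obstacle here: the only point to verify carefully is that, in the BFS layering, the layers stop at index $|Q|-1$, which follows from the fact that shortest paths are induced together with the maximality of $Q$. Everything else is a one-line geometric sum. Note that the corollary as stated asks only for the existence of a Hamiltonian path (used implicitly for connectedness), whereas the lemma is phrased in the same way, so no adjustment is needed.
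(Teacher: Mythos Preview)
Your proposal is correct and matches the paper exactly: the paper derives the corollary in one line from Lemma~\ref{lem:bddelta} by taking $c=\log\Delta$ and $d=0$, which is precisely what you do. Your optional self-contained BFS argument is just the proof of Lemma~\ref{lem:bddelta} specialized to constant degree, so nothing further is needed.
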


\subsection{Escaping the vortices}\label{sec:vort}

Similar to the concept of tree representation, we can define a \emph{cycle representation} of a graph $G$ as a pair $\mathcal{C} = (C, \{C_{v}\}_{v \in V(G)})$ such that:
\begin{enumerate}
\item $C$ is a cycle;
\item for every $v \in V(G)$, $C_v$ is a connected subgraph of $C$; and 
\item for every edge $uv$ of $G$ the subgraphs $C_u$ and $C_v$ intersect.
\end{enumerate}
The notions of \emph{bag} and \emph{width} of a cycle representation are defined similarly as for tree representations.

Let $G_0$ be a graph embedded in a surface $\Sigma$. Let $C$ be a facial cycle of $G_0$.
A \emph{$C$-vortex} is a cycle representation $(C, \{C_v\}_{v \in V(H)})$ of a graph $H$ such that $V(H) \cap V(G_0) = V(C)$ and $v \in C_v$ for every $v \in V(C)$. Note that $C$ is both a subgraph of $H$ and the graph where the representation of $H$ is defined.

For $g, p, a, k\in \N$, a graph $G$ is \emph{$(g, p, a, k)$-almost-embeddable} if for some set $A \subseteq V(G)$ with $|A| \leq a$ there are graphs $G_0, \dots, G_s$ with $s \leq p$ such that 
\begin{enumerate}
\item $G - A = G_0 \cup G_1 \cup \dots \cup G_s$;
\item $G_1, \dots, G_s$ are vertex-disjoint;
\item $G_0$ can be embedded in a surface of Euler genus at most $g$;
\item there are $s$ pairwise vertex-disjoint facial cycles $F_1, \dots, F_s$ of $G_0$ in this embedding, and 
\item for every $i \in \intv{1}{s}$, $G_i$ has an $F_i$-vortex of width less than $k$.
\end{enumerate}

This notion was introduced for the purpose of the proof of the Graph Minor Structure Theorem of Robertson and Seymour \cite{robertson2003graph} and is also used in the decomposition theorem of Grohe and Marx on which we rely.

For $(g, 0, 0, 0)$-almost embeddable graphs, which by definition are graphs of Euler genus at most~$g$, property \lpp{} is known to hold with a polylogarithmic bound as proved by Esperet et al. (Theorem~\ref{th:espgenus}).
We use this result as a base case to show the following more general statement.

\begin{lemma}\label{lem:almemb}
  For every $g, p, a,k\in \N$ with $k\geq 2$ there is a constant $c$ such that the following holds. If $G$ is a $(g, p, a, k)$-almost-embeddable graph that has a Hamiltonian path and order $n$, then $G$ has an induced path of order at least $c \left (\log n \right )^{\frac{1}{k}}$.
\end{lemma}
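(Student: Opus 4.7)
The plan is to use Theorem~\ref{th:pwintro} on a sufficiently long subpath of $P$ lying inside a single vortex, or otherwise to reduce the problem to an embedded graph via contraction and apply Theorem~\ref{th:espgenus}.

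First, by Corollary~\ref{lem:cdel}, we can delete the apex set $A$ (at most $a$ vertices) and, for each vortex $G_i$, the bag $\beta_{\mathcal{C}_i}(c_i)$ at some chosen node $c_i \in V(F_i)$ (at most $k$ vertices each), at the cost of only a multiplicative constant in the final bound. After these deletions each vortex inherits a path representation of width $<k$ on $F_i - c_i$, and thus has pathwidth $<k$. Write $G$ for the resulting graph: it is the union of $G_0$, still embedded in a surface of Euler genus $\leq g$, and vortex remnants $G_i'$ of pathwidth $<k$, vertex-disjoint pairwise and glued to $G_0$ only along facial paths; $G$ still has a Hamiltonian path $P$ of order $\Omega(n)$.

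Partition the vertices of $P$ into maximal subpaths lying entirely in some $V(G_i')$ (\emph{vortex segments}) or in $V(G_0)$ (\emph{ground segments}), and set $L := \lceil 3^k \log n \rceil$. If some vortex segment has order $\geq L$, then the induced subgraph on its vertex set lies in one $G_i'$, hence has pathwidth $<k$, and admits the segment as a Hamiltonian path; Theorem~\ref{th:pwintro} then gives an induced path of order at least $\tfrac{1}{3} L^{1/k} \geq (\log n)^{1/k}$, as required. Otherwise, every vortex segment is short. Since the entry vertex of each segment lies in some $V(F_i) \subseteq V(G_0)$ and each vertex of $\bigcup V(F_i)$ serves as entry for at most two segments, the number of vortex segments is at most $2|G_0|$. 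Combined with the inequality $L \cdot (\text{number of vortex segments}) \geq $ (total vortex-interior vertices of $P$), which is $\Omega(n)$, this yields $|G_0| \geq n / \mathrm{polylog}(n)$. Contract each vortex segment to its entry vertex in $V(F_i)$; the resulting graph $\tilde G$ is a contraction of $G$ with vertex set $V(G_0)$ and a Hamiltonian path of order $|G_0|$.

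Because contraction inside a single vortex preserves its cycle representation and does not increase its width, the chord structure added inside each face of the embedding of $G_0$ is governed by a cycle representation of width $<k$. A structural argument using this representation shows that $\tilde G$ is, after the deletion of a further constant number of vertices, embeddable in a surface of Euler genus bounded by a function of $g$, $p$, and $k$. Applying Theorem~\ref{th:espgenus} to $\tilde G$ then yields an induced path of order $\Omega(\sqrt{\log |G_0|}) = \Omega(\sqrt{\log n})$, which is $\Omega((\log n)^{1/k})$ for $k \geq 2$. By Remark~\ref{rem:contrind}, this induced path lifts to an induced path of $G$ of the same order.

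The main technical obstacle is the last step: showing that the contracted graph $\tilde G$ remains embeddable in a surface of Euler genus bounded independently of $n$. One must carefully exploit the width-$<k$ cycle representation of each vortex in order to route the new chord edges through the corresponding face of $G_0$'s embedding with only a bounded-by-$(g,p,k)$ increase in genus. This is the step that most closely depends on the specific form of the almost-embeddable structure and where the hypothesis $k \geq 2$ (which is used to trade $(\log n)^{1/k}$ against the $\sqrt{\log n}$ bound of Theorem~\ref{th:espgenus}) plays its role.
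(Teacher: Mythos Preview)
Your overall architecture (apex removal, dichotomy on how much of the Hamiltonian path lives in the vortices, Theorem~\ref{th:pwintro} in one case and Theorem~\ref{th:espgenus} in the other) matches the paper's, but your final step is a genuine gap, and it is exactly the step the paper's argument is designed to avoid.

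You contract each short vortex segment of $P$ to a boundary vertex and then assert that the resulting graph $\tilde G$ on $V(G_0)$ is, up to a bounded deletion, embeddable in a surface whose genus depends only on $g,p,k$. You do not prove this, and it is far from clear: the contractions can introduce many new chords inside a single face $F_i$, and there is no reason these chords should be embeddable with only a bounded genus increase (the width-$<k$ cycle representation controls the vortex $G_i$, not the pattern in which the Hamiltonian path enters and leaves it, so the ``structural argument'' you allude to would need to be supplied in full). As written, this is not a proof but a proof sketch missing its hardest lemma.

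The paper sidesteps this difficulty with a different, much simpler dichotomy. Instead of splitting on the length of vortex \emph{segments of $P$}, it splits on the length of the \emph{facial cycles} $F_i$ themselves. If some $|F_i|\geq (k+1)(\log n+1)$, then after removing one bag of the cycle representation one finds a subpath $F$ of $F_i$ of order at least $\log n$; the graph $G[V(F)]$ then has a path representation of width $<k$ on $F_i\setminus\{u\}$, and Theorem~\ref{th:pwintro} applies. If instead every $|F_i|<(k+1)(\log n+1)$, then each vortex satisfies $|G_i|<k(k+1)(\log n+1)$, so the union $X=\bigcup_i V(G_i)$ has only $O_{p,k}(\log n)$ vertices. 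One now \emph{deletes} $X$ rather than contracting anything: by Remark~\ref{rem:hairsplit} a component of $G-X$ still has a Hamiltonian path of order $n^{\varepsilon}$, and this component is a subgraph of $G_0$, hence already of Euler genus at most $g$, so Theorem~\ref{th:espgenus} applies directly. No genus-control argument is needed.

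In short: replace ``contract the vortex segments and bound the genus of $\tilde G$'' by ``observe that short facial cycles force small vortices, delete the vortices, and stay inside $G_0$''.
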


\begin{proof}
  We may assume that $n\geq 3$ otherwise the statement is trivial.
  Let us first assume that $a=0$.
  In the case where $p = 0$ then $G$ has Euler genus at most $g$ and the result follows from Theorem~\ref{th:espgenus}.  So we assume that $p \geq 1$.
  Let $G_0, \dots, G_s$ and $F_1, \dots, F_s$ be defined as above.

  Suppose first that for some $i \in \intv{1}{s}$, $|F_i| \geq (k + 1)(\log n + 1)$.
  Let $\mathcal{C} = (F_i, \{C_v\}_{v\in V(G_i)})$ be an $F_i$-vortex of $G_i$ of width less than $k$ (given by the definition of $(g, p, a, k)$-almost-embeddable graphs).
  Let $u \in V(F_i)$.
  The bag $\beta_{\mathcal{C}}(u)$ has size at most $k$ so some connected component of $F_i \setminus  \beta_{\mathcal{C}}(u)$ contains a subpath $F$ of $F_i$ of order at least $\log n$.
  Let $H = G[V(F)]$.
  This graph has $F$ as a Hamiltonian path. Observe that by definition of $F$, no model (in $\mathcal{C}$) of a vertex of $F$ contains $u$. In other words, the models of the vertices of $F$ are all subpaths of $F_i \setminus \{u\}$. 
  Hence $(F_i \setminus \{u\}, \{C_v\}_{v \in V(H)})$ is a path representation of width less than $k$ of $H$.
  By Theorem~\ref{th:pwintro} we get an induced path of order at least $\frac{1}{3} (\log n)^{1/k}$ in $H$ hence in $G$. We are not completely done yet but let us now consider the second case before concluding.

  In the second case we assume that $|F_i| < (k+1)(\log n + 1)$ for every $i \in \intv{1}{s}$.
  This implies that $|G_i| < k(k+1)(\log n + 1)$.
  Let $X = \bigcup_{i=1}^{s}V(G_i)$, then $1 \leq |X| < pk(k+1)(\log n + 1)$.
  
  By Remark~\ref{rem:hairsplit}, $G-X$ has a connected component $G'$ that has a Hamiltonian path and order $n'$ where $n' \geq \frac{n}{2|X|}-1$. So $n' \geq n^\varepsilon$ for some constant $\varepsilon \in (0,1)$ that depends on $p$ and $k$ only.
  
  Because it is a subgraph of $G_0$, the graph $G'$ has Euler genus at most $g$.
  Applying Theorem~\ref{th:espgenus} there is a constant $c_g$ depending on $g$ only such that $G'$ (hence $G$) has an induced path of order $q$ at least
  \begin{align*}
    q & \geq c_g \sqrt{\log n'}\\
      & \geq c_g \sqrt{\varepsilon} \sqrt{\log n}\\
      & \geq c_g \sqrt{\varepsilon} (\log n)^{1/k} & \text{as}\ k \geq 2.
  \end{align*}

Let $c = \min(1/3, c_g\sqrt{\varepsilon})$. In both cases we obtained an induced path of order at least $c (\log n)^{1/k}$, as claimed.

The case where $a \geq 1$ follows from Corollary~\ref{lem:cdel} applied to the case where $a=0$.
This concludes the proof.
\end{proof}

\subsection{Representations of bounded adhesion}\label{sec:bda}

In this section we build upon the ideas developed in Sections \ref{sec:pw} and \ref{sec:tw} to show that, roughly, if $\mathcal{G}$ is a class of graphs where \lpp{} holds with a polylogarithmic bound then the same can be said of graphs obtained by gluing together in a tree-like fashion graphs from $\mathcal{G}$ (Lemma~\ref{lem:trga}). This is a crucial step towards the proof of Theorem~\ref{th:minclo}.

Let $\T = (T, \{T_v\}_{v\in V(G)})$ be a tree representation of a graph $G$.
The \emph{adhesion set} of an edge $tt'$ of $T$ is defined as the following subset of $V(G)$:
\[
\adh_\T(tt') = \{v \in V(G)\mid\ \{t,t'\} \subseteq V(T_v)\}.
\]
Equivalently, it can be defined as the intersection of the bags at $t$ and $t'$, that is $\adh_\T(tt')  = \beta_\T(t) \cap \beta_\T(t')$. We drop the subscript when it is clear from the context.
The \emph{adhesion} of $\mathcal{T}$ is the maximum size of the adhesion set of an edge of~$T$.
%
%Adhesion sets relate to edges of $T$ and we now define a notion related to nodes of~$T$.
For every $t \in V(T)$, the \emph{torso} of $\mathcal{T}$ at $t$ is the graph obtained from $G[\beta(t)]$ by adding all edges $uv$ such that $u,v \in \adh_\T(tt')$ for some neighbor $t'$ of~$t$. If for every $t \in V(T)$, the torso of $\T$ at $t$ belongs to some graph class $\mathcal{G}$, we say that $\T$ \emph{has torsos from} $\mathcal{G}$.

Torsos and adhesion sets provide two different ways to restrict tree representations.
Observe that graphs of treewidth (respectively pathwidth) less than $k$ are simply graphs that admit a tree representation (respectively path representation) with torsos from the class of graphs of order at most~$k$. For every graph class $\mathcal{G}$ and integer $a$, we respectively denote by $\PR(\mathcal{G}, a)$ and $\TR(\mathcal{G}, a)$ the class of graphs that admit a path representation or a tree representation with torsos from $\mathcal{G}$ and adhesion less than~$a$. We denote by $\TR(\mathcal{G})$ the class of graphs that admit a tree representation with torsos from $\mathcal{G}$ and no restriction on the adhesion. Notice that if the graphs in $\mathcal{G}$ have cliques of bounded order, the adhesion of tree representations of graphs from $\TR(\mathcal{G})$ is implicitly bounded (i.e. $\TR(\mathcal{G}) \subseteq \TR(\mathcal{G}, a)$ for some $a \in \N$).

The next remark easily follows from the definition of tree representations.
\begin{remark}\label{rem:adhsep}
Let $\T = (T, \{T_v\}_{v\in V(G)})$ be a tree representation of a graph $G$, let $tt'\in E(T)$ and let $F,F'$ be the two connected components of $T\setminus \{tt'\}$.
Let $u$ and $u'$ be such that $V(T_u) \cap V(F) \neq \emptyset$ and $V(T_{u'}) \cap V(F') \neq \emptyset$.
Then either one of $u, u'$ belongs to $\adh_\T(tt')$, or $G\setminus \adh_\T(tt')$ has no path from $u$ to $u'$.
\end{remark}

For the purpose of the proof of Lemma~\ref{lem:trga} we need to relate the order of a graph with the length of a path where it is represented. It is not true in general that the existence of a path representation with a long path implies that the represented graph is large; for instance all the bags in this representation could be identical and small. We show below (Lemma~\ref{lem:longdecbiggraph}) that such a statement holds if we require the  considered path representation to satisfy an extra property, being varied, that we define now.

Let $\mathcal{T} = (T, \{T_v\}_{v\in V(G)})$ be a tree representation of a graph $G$.
For an edge $tt'$ in $T$, the tree representation $(T', \{T'_v\}_{v \in V(G)})$ obtained from $\T$ by \emph{contracting} $tt'$ is defined as follows:
\begin{itemize}
\item $T'$ is obtained from $T$ by contracting $tt'$;
\item for every $v \in V(G)$, $T'_v = T_v$ if $\{t,t'\} \cap T_v = \emptyset$ and otherwise $T'_v = T_v \setminus \{t,t'\} \cup \{t''\}$. 
\end{itemize}
Intuitively, we merge the nodes $t$ and $t'$ both in the tree of the representation and in the models of the vertices.

Let us say that $\mathcal{T}$ is \emph{varied} if no bag is a subset of a neighboring bag, i.e.\ for every $tt' \in E(T)$, $\beta(t) \nsubseteq \beta(t')$. In particular, unless $G$ has no vertex, no bag is empty.

Given a tree representation $\T = (T, \{T_v\}_{v\in V(G)})$, it is possible to produce a varied tree representation by iteratively contracting in $\T$ the edges $tt'$ such that $\beta(t) \subseteq \beta(t')$.
Observe that this process changes neither the width or the adhesion of $\T$ nor the fact that it has bags from some specific class of graphs.

\begin{lemma}\label{lem:longdecbiggraph}
Let $G$ be a graph on at least one vertex. If $\cR =  (R, \{R_v\}_{v\in V(G)})$ is a varied path representation of $G$, then $|G| \geq |R|$.
\end{lemma}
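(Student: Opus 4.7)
The plan is to injectively map each node of $R$ to a vertex of $G$. Write $R$ as a path $r_1 r_2 \cdots r_m$, where $m = |R|$. I would associate to each $i \in \{1, \ldots, m\}$ a vertex $v_i \in V(G)$ whose model $R_{v_i}$ has $r_i$ as its rightmost node; since distinct rightmost nodes force distinct vertices, this yields $m$ pairwise distinct vertices of $G$ and hence $|G| \geq m$.

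For $i \in \{1, \ldots, m-1\}$ the varied condition applied to the edge $r_i r_{i+1}$ provides a vertex $v_i \in \beta(r_i) \setminus \beta(r_{i+1})$. Because $R_{v_i}$ is a subpath of $R$ that contains $r_i$ but not its right neighbour $r_{i+1}$, its rightmost node is indeed $r_i$. The only remaining case is $i = m$, where no varied condition is available (the node $r_m$ has no right neighbour); here I would simply take $v_m$ to be any vertex in $\beta(r_m)$, whose model then necessarily has $r_m$ as its rightmost node by maximality in $R$.

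The one preliminary step is therefore to verify that $\beta(r_m)$ is nonempty. When $m = 1$ this is immediate: every model is a nonempty subpath of the single-node $R$, so $R_v = \{r_1\}$ for every $v$ and $\beta(r_1) = V(G)$, which is nonempty by hypothesis. When $m \geq 2$, an empty bag $\beta(r_m)$ would trivially satisfy $\beta(r_m) \subseteq \beta(r_{m-1})$, contradicting the varied assumption on the edge $r_{m-1} r_m$; the same trick in fact shows that every bag is nonempty.

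I do not foresee a serious obstacle. The argument is essentially a counting one, and the only subtlety is that the rightmost node $r_m$ is not directly handled by the varied condition and must be treated through the side observation about nonempty bags. Once that observation is in place, the injectivity of $i \mapsto v_i$ is immediate from the distinctness of rightmost endpoints, and the inequality $|G| \geq |R|$ follows at once.
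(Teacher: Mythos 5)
Your proof is correct. Where the paper argues by induction on $|R|$ --- restricting to $G^-=G[\beta(r_1)\cup\dots\cup\beta(r_{\ell-1})]$, which inherits a varied representation on $\ell-1$ nodes, and then exhibiting one extra vertex in $\beta(r_\ell)\setminus\beta(r_{\ell-1})$ that cannot lie in $G^-$ because its model is connected --- you unroll the same mechanism into a single explicit injection $r_i\mapsto v_i$, sending each node to a vertex whose model has $r_i$ as its rightmost node. The two arguments rest on exactly the same two facts (variedness supplies a vertex in $\beta(r_i)\setminus\beta(r_{i+1})$, and connectivity of models then pins down the rightmost endpoint), so neither is more general; yours is marginally more self-contained since it avoids checking that the induced representation of $G^-$ is again varied, at the cost of the small side argument that $\beta(r_m)\neq\emptyset$ (which you handle correctly, and which the paper also records as an immediate consequence of variedness). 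Your treatment of the degenerate case $m=1$ implicitly uses that models are nonempty subpaths, which is the convention the paper adopts, so there is no gap.
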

\begin{proof}
We prove the following statement by induction on $\ell$.
\begin{itemize}
\item[] \textit{For every $\ell \in \N^+$, if a graph $G$ on at least one vertex admits a varied path representation on a path of order $\ell$, then $|G| \geq \ell$.}
\end{itemize}
The case $\ell=1$ is trivial as we require that $G$ is not empty.
So let us assume that $\ell > 1$ and that the statement holds for smaller values.
Let $G$ be a graph that admits a varied path representation $\cR = (R, \{R_v\}_{v\in V(G)})$ with $|R| = \ell$, and let $r_1\dots r_\ell$ be the vertices of $R$ in the order of the path.
Let $G^- = G[\beta(r_1) \cup \dots \cup \beta(r_{\ell-1})]$. As $\cR$ is varied, $\beta(r_1) \neq \emptyset$ so $G^-$ has at least one vertex. Observe that it admits a varied path representation with $\ell-1$ nodes (for instance $(R\setminus \{r_\ell\}, \{R_v\}_{v\in V(G^-)})$), so by induction it has at least $\ell-1$ vertices. 
As $\cR$ is varied, there is a vertex $v \in \beta(r_\ell)$ that does not belong to $\beta(r_{\ell-1})$. This vertex does not belong to $G^-$ either (as $R_v$ is connected) so $|G| \geq \ell$, as claimed.
\end{proof}

The above lemma allows us to prove the following variant of Theorem~\ref{th:pwintro} on varied path representations of bounded adhesion.

\begin{lemma}\label{lem:adhenough}
Let $G$ be a graph that has a Hamiltonian path.
If $G$ admits a varied path representation with adhesion less than $a$ and at least $\ell$ nodes, then $G$ has an induced path of order at least $\frac{1}{3}\ell^{\frac{1}{2a}}$.
\end{lemma}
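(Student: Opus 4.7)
The plan is to prove Lemma~\ref{lem:adhenough} by induction on the adhesion parameter $a$, following the win/win strategy of the proof of Theorem~\ref{th:pwintro}. The base case $a=1$ is trivial: adhesion $0$ makes the bags of $\mathcal{R}$ pairwise disjoint, so connectedness of $G$ combined with the varied property forces $\ell=1$, and the bound $\tfrac{1}{3}\ell^{1/2} = \tfrac{1}{3}$ is then satisfied by any non-empty induced path.

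For $a\geq 2$, I pick $u \in \beta(r_1)$ and $v \in \beta(r_\ell)$ and take $Q$ to be a shortest induced $u$-$v$ path in $G$. If $|V(Q)| \geq \tfrac{1}{3}\ell^{1/(2a)}$, then $Q$ is the desired induced path. Otherwise, since $Q$ is connected and its endpoints lie in $\beta(r_1)$ and $\beta(r_\ell)$, Remark~\ref{rem:conntree} forces $\bigcup_{q \in V(Q)} R_q = R$, so $V(Q)$ meets every adhesion set of $\mathcal{R}$; consequently the restriction of $\mathcal{R}$ to $V(G) \setminus V(Q)$ has adhesion strictly less than $a-1$ at every edge of $R$. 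Using Remark~\ref{rem:hairsplit}, I extract a connected component of $G - V(Q)$ containing a subpath $P'$ of the Hamiltonian path of order at least $(|G|-|V(Q)|)/(|V(Q)|+1)$, and set $G' = G[V(P')]$; this graph has $P'$ as a Hamiltonian path. Restricting $\mathcal{R}$ further to $V(G')$ and iteratively contracting $R$-edges whose endpoint bags become comparable (which preserves adhesion) produces a varied path representation $\mathcal{R}'$ of $G'$ with adhesion $<a-1$ and some number $\ell'$ of nodes. Applying the induction hypothesis to $(G',\mathcal{R}')$ yields an induced path of $G$ of order at least $\tfrac{1}{3}(\ell')^{1/(2(a-1))}$, which matches the target $\tfrac{1}{3}\ell^{1/(2a)}$ as soon as $\ell' \geq \ell^{(a-1)/a}$.

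The crux, and the step I expect to require the most care, is establishing $\ell' \geq \ell^{(a-1)/a}$. My plan is to analyse the re-varification via the two side sets $E_i = \beta(r_i) \setminus \beta(r_{i+1})$ and $A_i = \beta(r_{i+1}) \setminus \beta(r_i)$, both non-empty by the varied property of $\mathcal{R}$. A contraction at edge $r_i r_{i+1}$ can only occur when one of these sets is entirely contained in $V(G) \setminus V(G')$; since the $E_i$'s (resp.\ the $A_i$'s) are pairwise disjoint over $i$, the total number of contractions is at most $2\,|V(G) \setminus V(G')|$. Combining this estimate with the lower bound $|V(G')| \geq (|G| - |V(Q)|)/(|V(Q)|+1)$ from Case 2 and the inequality $|G| \geq \ell$ from Lemma~\ref{lem:longdecbiggraph} should yield the required bound on $\ell'$, perhaps after a more refined pigeonhole choice of $G'$ among the at most $|V(Q)|+1$ components of $G - V(Q)$ to prevent too many vertices from escaping into other components.
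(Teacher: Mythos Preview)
Your inductive strategy is genuinely different from the paper's proof, and the step you yourself flag as ``the crux'' is where it breaks. Your charging argument for the number of contractions is correct: each contraction across an original boundary $r_jr_{j+1}$ forces $E_j$ or $A_j$ to lie entirely in $V(G)\setminus V(G')$, and since the $E_j$'s (respectively $A_j$'s) are disjoint and each has size at least one, you do get $\ell-\ell'\le 2\,|V(G)\setminus V(G')|$. The problem is that this inequality is useless here. In Lemma~\ref{lem:adhenough} there is \emph{no} bound on bag sizes, only on adhesion; hence $|G|$ can be arbitrarily large compared to~$\ell$. When you remove $V(Q)$ and keep a single component $G'=G[V(P')]$, you lose the $|Q|$ vertices of $Q$ \emph{and} all vertices lying in the other (up to $|Q|$) components of $G-V(Q)$, so $|V(G)\setminus V(G')|$ is in general of order $|G|\cdot\frac{|Q|}{|Q|+1}$, which can dwarf~$\ell$. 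The inequality $\ell'\ge \ell-2|V(G)\setminus V(G')|$ then gives nothing, and no ``refined pigeonhole choice'' among the components fixes this: even summing over all components, the intervals they occupy in $R$ may overlap heavily, and restricting the already re-varified representation of $G-V(Q)$ to a single component can trigger a fresh cascade of contractions that your charging scheme does not control.

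The paper avoids this difficulty by \emph{not} inducting on~$a$. Instead it performs a single reduction to bounded pathwidth: for each node $r$ it selects a set $Z_r\subseteq\beta(r)$ of size at most $2a-1$ containing the (at most two) adhesion sets at~$r$ together with a vertex private to~$r$, and then contracts every edge of the Hamiltonian path having an endpoint outside $\bigcup_r Z_r$. One checks that these contractions preserve the varied property on the \emph{same} path $R$; the resulting contracted graph $H$ therefore still has a varied path representation on $R$, now of width at most $2a-1$, and Lemma~\ref{lem:longdecbiggraph} gives $|H|\ge |R|\ge \ell$. A direct application of Theorem~\ref{th:pwintro} to $H$ then yields the induced path of order $\tfrac{1}{3}\ell^{1/(2a)}$. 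The key point is that the paper never discards vertices into ``other components''; it only contracts along the Hamiltonian path, so the node count $\ell$ is preserved throughout.
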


\begin{proof}
Let $\cR = (R, \{R_v\}_{v \in V(G)})$ be a path representation as in the statement.
For every $r \in V(R)$, let $Z_r$ be a subset of $V(G)$ of minimum size such that for every neighbor $r'$ of $r$, $\adh(rr') \subseteq Z_r$ and there is a vertex $v \in Z_r$ that does not belong to $\beta(r')$.
%Notice that these two properties hold for $\beta(r)$ (the second one due to the fact that $\cR$ is varied), so $Z_r$ is well-defined.
Let us show that this set is well-defined and small.
Either the bag at $r$ has a vertex $v$ that does not appear in the bags of any of its neighbors, in which case $\{v\} \cup \bigcup_{r' \in N(r)} \adh(rr')$ satisfies the above properties, or it does not and then $\beta(r)$ is suitable, as $\mathcal{R}$ is varied. Observe that in this case $\beta(r) = \bigcup_{r' \in N(r)} \adh(rr')$. Recall that in $R$ the vertex $r$ has up to two neighbors. So in both cases we have $|Z_r| \leq 2(a-1) + 1 = 2a-1$.

Let $P$ be a Hamiltonian path of~$G$.
An edge of $P$ is called \emph{superfluous} if it has at least one endpoint outside $\bigcup_{r \in V(R)} Z_r$.

\begin{remark}
Suppose that $xy$ is a superfluous edge and let $G'$ be the graph obtained by contracting $xy$ into a new vertex $z$. Then $\cR' =  (R, \{R_v\}_{v \in V(G')})$ is a path representation of~$G'$ (with $R_z = R_x \cup R_y$).
We show that additionally $\cR'$ is a varied representation. Towards a contradiction, let us assume that there are $r,r' \in V(R)$ such that $\beta_{\cR'}(r) \subseteq \beta_{\cR'}(r')$.
Then for one of $x$ and $y$, say $x$, we have $\beta_{\cR}(r) \setminus \beta_{\cR}(r') = \{x\}$, which implies $y \in \adh_{\cR}(rr')$ (as $xy$ is an edge). We just proved that $x,y \in Z_r$, which is a contradiction with the choice of $xy$.
So $\cR'$ is varied.
\end{remark}

Let $H$ be the graph obtained from $G$ after iteratively contracting all the superfluous edges and let $\cR_H = (R, \{R_v\}_{v\in V(H)})$ be the corresponding varied path representation constructed as in the above remark.
By Lemma~\ref{lem:longdecbiggraph}, $|H| \geq |R| \geq \ell$.
All the edges that were contracted did belong to $P$ so $H$ has a Hamiltonian path.
No edge of $H$ is superfluous (by definition) so for every $r \in V(R)$, $\beta_{\cR_H}(r) = Z_r$.
Therefore $\cR_{H}$ is a path representation of $H$ of width at most $2a-1$.
Applying Theorem~\ref{th:pwintro} to $H$ we get an induced path of order at least $\frac{1}{3}\ell^\frac{1}{2a}$. As $H$ is a contraction of $G$, such a path also exists in~$G$ (Remark~\ref{rem:contrind}) hence we are done.
\end{proof}

A graph class is said to be \emph{closed under taking subgraphs} if every subgraph of a graph of the class also belongs to the class. We are now ready to prove the main result of this section.
\begin{lemma}\label{lem:trga}
  Let $a \in \N^+$, $c \in (0,1/3]$ and $d \in (0,1]$.
Let $\mathcal{G}$ be a class of graphs that is closed under taking subgraphs and such that for every $n \in \N$, every $G\in \mathcal{G}$ that has a Hamiltonian path and order $n$ has an induced path of order at least $c (\log n)^d$.

If a graph $G$ of $\TR(\mathcal{G}, a)$ has a Hamiltonian path and order $n$, then it has an induced path of order at least $c (\log n)^{\frac{1}{4a + \frac{1}{d}}}.$
\end{lemma}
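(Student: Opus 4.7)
The plan is to prove the Hamiltonian-path reformulation~\hpp{} (equivalent to~\lpp{} for hereditary classes by Remark~\ref{rem:hpp}) using a two-level win/win argument combining the hypothesis on $\mathcal{G}$ with Lemma~\ref{lem:adhenough}. Let $\mathcal{T}=(T,\{T_v\}_{v\in V(G)})$ be a tree representation of $G$ with adhesion less than $a$ and torsos in $\mathcal{G}$, and let $P$ be a Hamiltonian path of $G$ of order $n$. Set $\alpha:=\frac{1}{4a+1/d}$ and $M:=2^{\lceil(\log n)^{\alpha/d}\rceil}$. First, iteratively contracting any edge $tt'$ of $T$ with $\beta(t)\subseteq\beta(t')$ turns $\mathcal{T}$ into a varied tree representation: this preserves the adhesion bound and only shrinks each torso, so the torsos remain in $\mathcal{G}$ by subgraph-closure.

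The main dichotomy is whether some bag $\beta(t)$ contains a sub-path $P'$ of $P$ with $|P'|\ge M$. If yes, then $G[V(P')]$ is an induced subgraph of the torso of~$\mathcal{T}$ at~$t$ and hence lies in $\mathcal{G}$; since $P'$ is Hamiltonian in $G[V(P')]$, the hypothesis on $\mathcal{G}$ produces an induced path of order at least $c(\log M)^d=c(\log n)^{\alpha}$. Otherwise, every sub-path of $P$ contained in a single bag has order less than $M$, so $P$ must traverse many bags. The plan for this case is to extract a path $R$ in $T$ such that, after contracting (in the spirit of Lemma~\ref{lem:minorpww}) all edges $uv$ of $P$ for which $T_v\cap R=\emptyset$, we obtain a contraction $G'$ of $G$ that has a Hamiltonian path and admits a varied path representation on $R$ of adhesion less than $a$ with at least $\ell:=\lceil(3c)^{2a}(\log n)^{2a\alpha}\rceil$ nodes. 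Then Lemma~\ref{lem:adhenough} applied to $G'$ produces an induced path of order at least $\frac{1}{3}\ell^{1/(2a)}\ge c(\log n)^{\alpha}$, which lifts to $G$ by Remark~\ref{rem:contrind}. The exponent $\alpha=1/(4a+1/d)$ is precisely what balances the two thresholds: the factor $1/(2a)$ is inherited from Lemma~\ref{lem:adhenough} and the factor $d$ from the hypothesis on $\mathcal{G}$, while the doubling in the denominator accounts for the overhead of contracting and re-varifying the path representation on $R$.

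The crux will be producing the long path $R$ in the second case. The natural approach is to adapt the weight-based argument of Lemma~\ref{lem:weight} and Corollary~\ref{cor:height} to the adhesion-bounded, bag-unbounded regime: since bag sizes are no longer controlled by a width parameter, the Case~B assumption that no bag holds a long sub-path of $P$ is used as a substitute, forcing $P$ to trace a long walk in $T$ and, after suitable post-processing, a long simple path. The technical checks are that (i)~contracting an edge $uv$ of $G$ for which $T_u\cap T_v\ne\emptyset$ cannot increase the adhesion of any edge of $T$ (an edge case-analysis similar to the proof of Lemma~\ref{lem:minorpww}), so the adhesion remains strictly less than $a$ throughout, and that (ii)~the re-varification of the extracted path representation, needed in order to invoke Lemma~\ref{lem:adhenough}, causes at most a constant-factor loss in the number of bags; controlling this loss is the main delicate point of the argument.
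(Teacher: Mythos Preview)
Your overall architecture---a dichotomy feeding into the hypothesis on $\mathcal{G}$ on one side and Lemma~\ref{lem:adhenough} on the other---is the same as the paper's, and your Case~A is in fact cleaner than the paper's corresponding case (the paper contracts all of $P$ outside a big bag and argues the result is a subgraph of the torso; you just take an induced subgraph on a long sub-path of $P$ lying in a single bag). One small correction: $G[V(P')]$ is a \emph{subgraph} of the torso, not necessarily an \emph{induced} subgraph, since the torso may add edges inside $V(P')$; this is harmless because $\mathcal{G}$ is subgraph-closed.

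The gap is in Case~B. Your dichotomy criterion is ``no bag contains a long sub-path of $P$'', whereas the paper's is ``no bag is large''. The paper's criterion is what makes the long-path-in-$T$ step go through: if every $|\beta(t)|< n^{1/(\log n)^\varepsilon}$, then (i) $|T|\ge n^{1-1/(\log n)^\varepsilon}$ by counting, and (ii) since $\mathcal{T}$ is varied and $G$ has a Hamiltonian path, $\deg_T(t)$ is at most the number of components of $G\setminus\beta(t)$, which is at most $|\beta(t)|+1$; so $T$ has bounded degree and hence, via $|T|\le\Delta^{|R|-1}$, a path $R$ of order roughly $(\log n)^\varepsilon$. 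Your Case~B hypothesis gives neither (i) nor (ii): a bag $\beta(t)$ can be very large while containing only short sub-paths of $P$, provided $P$ enters and leaves it many times, and the number of such crossings is bounded only by $2a\cdot\deg_T(t)$, which you have no a~priori control over. So the chain ``no long sub-path in a bag $\Rightarrow$ long walk in $T$ $\Rightarrow$ long simple path in $T$'' is not justified, and the weight argument of Lemma~\ref{lem:weight} does not transfer since it relies on a uniform width bound that you no longer have. You have correctly identified this as ``the main delicate point'', but you have not resolved it; switching your dichotomy to bag size (as the paper does) resolves it cleanly.
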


\begin{proof}
Let $\T=(T, \{T_v\}_{v \in V(G)})$ be a varied tree representation of $G$ witnessing that $G \in \TR(\mathcal{G}, a)$. Let us fix $\varepsilon = \frac{4ad}{4ad + 1}<1$. Observe that we may assume
\begin{equation}
c (\log n)^{\frac{1}{4a + \frac{1}{d}}} > 2 \label{eq:tribound}
\end{equation}
as otherwise the statement holds trivially. 
Let $P$ be the Hamiltonian path of $G$.

Let us first suppose that $T$ has a node $t$ such that the bag $\beta(t)$ has order at least $n^{\frac{1}{(\log n)^\varepsilon}}$.
Let $H$ denote the graph obtained from $G$ by iteratively contracting every edge of $P$ that has at least one endpoint outside $\beta(t)$.
By construction, $H$ is a supergraph of $G[\beta(t)]$ of the same order and every edge $uv \in E(H) \setminus E(G[\beta(t)])$ is such that $u,v \in \adh_\T(tt')$ for some neighbor $t'$ of~$t$. Therefore, $H$ is a subgraph of the torso of $\T$ at $t$. As $\mathcal{G}$ is subgraph-closed, we deduce $H \in \mathcal{G}$.
Besides, these contractions were applied to edges of $P$ so they yield a Hamiltonian path in $H$. By the properties of $\mathcal{G}$ we deduce that $H$ has an induced path of order at least
\begin{align*}
c \left (\log n^{\frac{1}{(\log n)^\varepsilon}} \right )^{d}%
&= c (\log n)^{(1-\varepsilon)d}\\
&= c (\log n)^{\frac{1}{4a + \frac{1}{d}}}.
\end{align*}
Such a path also exists in $G$ (Remark~\ref{rem:contrind}) so we are done.

So we may assume now that at every node $t$ of $T$ we have
\begin{equation}
|\beta(t)| <  n^{\frac{1}{(\log n)^\varepsilon}}, \label{eq:bagbound}
\end{equation}
which implies
\begin{equation}
|T| \geq \frac{n}{n^{\frac{1}{(\log n)^\varepsilon}}} = n^{1- {\frac{1}{(\log n)^\varepsilon}}}. \label{eq:tbound}
\end{equation}

Let $t \in V(T)$.
As $\T$ is varied, for every neighbor $t'$ of $t$ there is a vertex $v_{t'} \in \beta(t')$ of $G$ such that $v_{t'} \notin \beta(t)$. Let $G_{t'}$ denote the connected component of $G\setminus \beta(t)$ that contains $v_{t'}$. Then for every neighbor $t'' \neq t'$ of $t$, the components $G_{t'}$ and $G_{t''}$ are distinct, by the properties of tree representations. So $G \setminus \beta(t)$ has at least $\deg_T(t)$ connected components.
By \eqref{eq:bagbound} and Remark~\ref{rem:hairsplit}, $G \setminus \beta(t)$ has at most $|\beta(t)|+1$ connected components.
We deduce that the maximum degree $\Delta$ of $T$ is bounded as follows
\begin{align}
\Delta &\leq n^{\frac{1}{(\log n)^\varepsilon}}+1\nonumber\\
&\leq 2n^{\frac{1}{(\log n)^\varepsilon}}.\label{eq:deltat}
\end{align}

Let $R$ denote a path of maximum order in $T$. From the classic inequality $|T| \leq \Delta ^{|R|-1}$ we get
\begin{align*}
|R|-1&\geq \frac{\log |T|}{\log \Delta}\\
& \geq \frac{\log n - (\log n)^{1 - \varepsilon}}{(\log n)^{1 - \varepsilon} + 1} & \text{from \eqref{eq:tbound} and \eqref{eq:deltat}}\\
&\geq \frac{1}{2} \left ( (\log n)^\varepsilon - 1 \right ) & \text{from \eqref{eq:tribound}}\\ %as $(\log n)^{1-\varepsilon} \geq 1$
\text{so}\quad |R| &\geq \frac{1}{2} (\log n)^\varepsilon\\
&\geq (\log n)^{\varepsilon/2}&\text{from \eqref{eq:tribound}}. % as $(\log n)^\varepsilon \geq 4$
\end{align*}

As in the first part of the proof we iteratively contract the edges of $P$ that do not have both endpoints in $\bigcup_{t \in V(R)} \beta(t)$ in order to produce a graph $H$ that has a Hamiltonian path and a varied path representation $(R, \{R_v\}_{v\in V(H)})$ with adhesion less than $a$.
We can now apply Lemma~\ref{lem:adhenough} to get an induced path of order at least
\begin{align*}
\frac{1}{3} |R|^{\frac{1}{2a}} &\geq \frac{1}{3} (\log n)^{\frac{\varepsilon}{4a}}\\ 
&\geq c (\log n)^{\frac{1}{4a + \frac{1}{d}}} & \text{as}\ c\leq \frac{1}{3}.
\end{align*}

Again such a path also exists in $G$ and we are done.
\end{proof}

\subsection{Piecing things together}\label{sec:sum}

We are now ready to prove the following theorem from which will follow~Theorem~\ref{th:minclo}.
\begin{theorem}\label{th:master}
  Let $k \in \N$ and let $\mathcal{G}_k$ denote the class of graphs that either are $(k,k,k,k)$-almost embeddable or have $(k, k)$-almost bounded degree.
  There are constants $c \in \R^+$ and $d \in (0,1)$ such that if a graph $G \in \TR(\mathcal{G}_k)$ has a Hamiltonian path and order $n$, then $G$ has an induced path of order at least $c (\log n)^d$.
\end{theorem}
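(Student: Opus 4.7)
The plan is to derive Theorem~\ref{th:master} by applying Lemma~\ref{lem:trga} to $\mathcal{G}_k$, or more precisely to a subgraph-closed enlargement obtained by possibly inflating its parameters. To invoke Lemma~\ref{lem:trga}, three hypotheses must be verified: (i)~the class is closed under taking subgraphs, (ii)~every graph in the class with a Hamiltonian path and order~$n$ admits an induced path of order at least $c_0(\log n)^{d_0}$ for constants $c_0 \in (0,1/3]$ and $d_0 \in (0,1]$ depending only on $k$, and (iii)~the adhesion of any tree representation with torsos in the class is bounded by some $a = a(k)$. Given (i)--(iii), Lemma~\ref{lem:trga} yields directly the polylogarithmic bound of the form $c(\log n)^{1/(4a + 1/d_0)}$, which is exactly what Theorem~\ref{th:master} claims.

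Ingredient~(ii) follows from the two previous subsections: Lemma~\ref{lem:abdd} produces a logarithmic bound in the $(k,k)$-almost bounded degree case, and Lemma~\ref{lem:almemb} produces a $(\log n)^{1/k}$ bound in the $(k,k,k,k)$-almost-embeddable case; the smaller of the two constants yields a uniform lower bound of the form $c_0(\log n)^{1/k}$ over $\mathcal{G}_k$. For ingredient~(iii), I would show that every $G \in \mathcal{G}_k$ has clique number at most some $K = K(k)$: since the adhesion set of any edge of a tree representation induces (by the definition of torso) a clique in each adjacent torso, a clique-number bound forces an adhesion bound. The required clique bounds are straightforward: for $(k,k)$-almost bounded degree graphs, $\omega \leq 2k+1$ (up to $k$ apex vertices together with a clique of size $\leq k+1$ in a graph of maximum degree~$k$); for $(k,k,k,k)$-almost-embeddable graphs, $\omega$ is at most $k$ (apex) plus a constant depending on the Euler genus (bounded by $k$, since graphs of bounded Euler genus have bounded clique number) plus at most $k$ from any single vortex (each vortex has width $<k$, hence bounded tree-width and bounded clique number).

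The main obstacle is ingredient~(i). The almost-bounded-degree half of $\mathcal{G}_k$ is subgraph-closed on the nose (restricting the deletion set to the subgraph cannot raise the maximum degree). The almost-embeddable half is more delicate, because deleting a vertex of a facial cycle~$F_i$ breaks the cycle on which the corresponding vortex is defined. I would address this by working with a relaxed definition of vortex, in which the supporting object is allowed to be a disjoint union of arcs (the surviving pieces of broken facial cycles) rather than a single cycle, and then verifying that the proof of Lemma~\ref{lem:almemb} still yields a polylogarithmic bound for the relaxed class: the cycle structure of $F_i$ is only used to extract a width-$<k$ path representation of the vortex graph after deleting one bag, which remains available in the arc case (after an appropriate case distinction on whether some arc is long or all arcs are short). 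Under this relaxation, subgraphs of $(k,k,k,k)$-almost-embeddable graphs are themselves almost-embeddable with parameters inflated by at most a function of~$k$: restrict the surface embedding of $G_0$ to $V(H) \cap V(G_0)$ (Euler genus does not increase) and let each vortex either restrict to its intact cycle or break into several arc-vortices supported on the surviving arcs of $F_i$. Once $\mathcal{G}_k$ is replaced by this slightly enlarged subgraph-closed class, Lemma~\ref{lem:trga} applies with the bounds from (ii) and (iii) and delivers Theorem~\ref{th:master}.
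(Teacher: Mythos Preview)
Your proposal is correct and follows exactly the paper's approach: bound the clique number of $\mathcal{G}_k$ to get $\TR(\mathcal{G}_k) \subseteq \TR(\mathcal{G}_k, a)$, invoke Lemmas~\ref{lem:abdd} and~\ref{lem:almemb} for the polylogarithmic bound on $\mathcal{G}_k$, note subgraph-closure, and apply Lemma~\ref{lem:trga}. The one difference is that the paper dispatches your ``main obstacle''~(i) in a single sentence (``notice that $\mathcal{G}_k$ is closed under taking subgraphs''), so your arc-vortex workaround, while not incorrect, is more machinery than the paper deems necessary.
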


\begin{proof}
  Observe that a $(k,k)$-almost bounded degree graph does not contain a clique of order $2k+2$. Also there is a $k' \in \N$ such that no $(k,k,k,k)$-almost embeddable graph contains a clique of order $k'$ (see for instance \cite[Lemma~21]{DUJMOVIC2017111} for a linear upper-bound in terms of $k$).
  So for $a = \max(2k+2, k')$ we have $\TR(\mathcal{G}_k) \subseteq \TR(\mathcal{G}_k, a)$.
  By Lemma~\ref{lem:almemb} and Lemma~\ref{lem:abdd} the class $\mathcal{G}_k$ satisfies \lpp{} with the function $n \mapsto c' (\log n)^{d'}$ for some constants $c' \in \R^+$ and $d' \in (0,1)$ depending on~$k$. Also, notice that $\mathcal{G}_k$ is closed under taking subgraphs.
  Together with Lemma~\ref{lem:trga} this implies that $\TR(\mathcal{G}_k,a)$ (hence $\TR(\mathcal{G}_k)$) satisfies \lpp{} with the function $n \mapsto c(\log n)^{d}$ for some constants $c \in \R^+$ and $d \in (0,1)$ depending on~$k$.
\end{proof}

Theorem~\ref{th:minclo} is a consequence of Theorem~\ref{th:master} and the structure theorem of Grohe and Marx for graphs excluding a topological minor, stated hereafter in the setting of tree representations.

\begin{theorem}[\cite{grohe2015structure}]\label{th:structexcltopomin}
  For every graph $H$ there is an integer $k \in \N$ such that every graph not containing $H$ as a topological minor has a tree representation $(T, \{T_v\}_{v \in V(G)})$ such that for every $t \in V(T)$ the torso at $t$ is either $(k,k,k,k)$-almost embeddable or has $(k,k)$-almost bounded degree.
\end{theorem}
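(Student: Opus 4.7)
The plan is to follow the template of Robertson and Seymour's Graph Minor Structure Theorem (GMST), which provides tree decompositions with almost-embeddable torsos for graphs excluding a fixed \emph{minor}. The key difficulty is that excluding $H$ as a topological minor is strictly weaker than excluding $H$ as a minor once $\Delta(H) \geq 4$: a single vertex of huge degree contains $K_t$ as a minor for any $t$ without producing the subdivision required for a topological $K_t$. This explains why the second alternative in the theorem --- torsos of almost bounded degree --- is genuinely needed and why GMST cannot be invoked directly.

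As a preliminary reduction, I would assume that $H$ has minimum degree at least $3$ by suppressing degree-$\leq 2$ vertices, which preserves topological minor containment. Set $\Delta_0 = \Delta(H)$ and fix a sufficiently large $t_0 = t_0(H)$. The proof splits into two regimes. If $G$ does not contain $K_{t_0}$ as a minor, then GMST applied with parameter $K_{t_0}$ gives a tree decomposition with $k_0$-almost embeddable torsos for some $k_0 = k_0(H)$, and every torso is of the first prescribed type. So we may assume $G$ does contain $K_{t_0}$ as a minor, and the whole effort lies in understanding how this can happen without a topological $H$ minor.

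The central step is a \emph{degree-aware linkage argument}. Within each torso $Y$ of a provisional tree decomposition (obtained from GMST applied to a suitable minor of $H$), classify the non-apex vertices as \emph{heavy} (degree exceeding a threshold $\Delta_1 = \Delta_1(H)$) or \emph{light}. Two cases arise. If the number of heavy vertices of $Y$ outside the apex set is at most some $k_1 = k_1(H)$, then adjoining them to the apex produces a set of size at most $k$ whose removal leaves a graph of maximum degree $\leq \Delta_1$: the torso has almost bounded degree. Otherwise, invoking a Seymour--Thomassen style linkage lemma on the surface carrying the almost-embedded part $G_0$ --- adapted to tolerate the vortices and the bounded apex set --- one finds, among the heavy vertices, a set of $|V(H)|$ vertices with the correct degree profile that can be pairwise linked by internally disjoint paths realizing the edges of $H$. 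This yields a topological $H$ in $G$, contradicting our assumption. Iterating this dichotomy along the tree decomposition produces the claimed structure.

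The main obstacle is the degree-aware linkage step. Classical surface linkage theorems apply to graphs genuinely embeddable in a surface, whereas our torsos are only \emph{almost} embeddable, with bounded-width vortices and bounded-size apex sets; moreover, we need to route linkages through vertices of prescribed high degree, not arbitrary ones. This requires a version of the linkage lemma robust to vortex and apex perturbations and sensitive to degree constraints, which in turn forces a simultaneous induction on $|V(H)|$ and on the parameters $(g,p,a,k)$ of the almost-embeddable structure, with a careful bookkeeping of how apex vertices absorbed in one torso affect the adhesion of neighboring bags. This robustness is the principal technical novelty of \cite{grohe2015structure} and by far the most involved part of the proof; I would essentially follow their inductive scheme rather than attempting an independent argument.
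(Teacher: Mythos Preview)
The paper does not prove this statement at all: Theorem~\ref{th:structexcltopomin} is quoted from \cite{grohe2015structure} and used as a black box. Immediately after stating it, the paper simply combines it with Theorem~\ref{th:master} to deduce Theorem~\ref{th:minclo}. There is therefore no ``paper's own proof'' to compare your proposal against.

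Your sketch is a reasonable high-level summary of the Grohe--Marx strategy (reduce to the minor-excluded case via GMST, then argue that in an almost-embeddable torso too many high-degree vertices would permit linking a topological $H$), but it is not something the present paper undertakes, and a referee of this paper would not expect you to supply it. If your intention was to reproduce what the paper does at this point, the correct answer is simply to cite \cite{grohe2015structure} and move on.
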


\section{Open problems}\label{sec:op}

A first direction for future work is to investigate how widely the results proved in this paper could be generalized. What are the most general graph classes where the bound for property \lpp{} is (at least) polylogarithmic? We recall that it was conjectured in \cite{esperet2017long} that it is the case for $k$-degenerate graphs, for every~$k\in \N$ (Conjecture~\ref{conj:esp}).
Note that the bounds we obtained in Theorem~\ref{th:pwintro} and Corollary~\ref{cor:atminas} are polynomial. An interesting task could be to characterize hereditary classes where such a property holds. Also, as mentioned in the introduction, Esperet et al. proved that in $k$-trees (i.e. edge-maximal graphs of treewidth $k$) the property \lpp{} holds with a $\Omega(\log n)$ bound, while their upper-bound for graphs of treewidth at most $k$ (recalled in the table below) shows that such a bound where $k$ does not appear in the exponent of $\log n$ does not hold for graphs of treewidth at most~$k$. This suggests that our results could be improved in the restricted setting of edge-maximal graphs from the considered classes.

Finally, a natural research direction about this problem is to obtain tight bounds for our theorems, especially in the cases of bounded pathwidth or treewidth, for which we recall below the known bounds.

\begin{center}
\begin{tabular}{lll}
 class & lower-bound & upper-bound \\ 
 \hline
 graphs of $\pw < k$ & $\frac{1}{3} n^{1/k}$\hfill (Th.~\ref{th:pwintro}) & $n^{2/k} + 1$\hfill (Th.~\ref{th:pwlb})\\[0.5em]  
 graphs of $\tw < k$ & $\frac{1}{4} (\log n)^{1/k}$\hfill (Th.~\ref{th:twintro}) & $(k+1) (\log n)^{2/(k-1)}$\quad \cite{esperet2017long}\\[0.5em]
 \end{tabular}
\end{center}

To the best of our knowledge, this question is also open for planar graphs (and more generally graphs of bounded Euler genus) with the current best lower- and upper-bounds on the function $f$ of property \lpp{}, both due to \cite{esperet2017long}, being
\[\left (\frac{1}{2\sqrt{6}} - o(1) \right ) \sqrt{\log n} \leq f(n) \leq  \frac{3 \log n}{\log \log n}.
\]

%\bibliography{references}
%\bibliographystyle{alpha}
\newcommand{\etalchar}[1]{$^{#1}$}

\end{document}